\newcommand{\N}{{\mathbb N}}
\newcommand{\F}{{\mathbb F}}
\newcommand{\Z}{{\mathbb Z}}
\newcommand{\cS}{{\mathcal S}}
\newcommand{\cP}{{\mathcal P}}
\newcommand{\cD}{{\mathcal D}}
\newcommand{\cM}{{\mathcal M}}
\newcommand{\cC}{{\mathcal C}}
\newcommand{\Znn}{\mbox{$\Z_n^{n-1}$}}
\newcommand{\rank}{\mbox{\rm rank}\,}
\newcommand{\Supp}{\mbox{\rm Supp}\,}
\newcommand{\AutF}{\mbox{${\rm Aut}_{\mathbb F}$}}
\newcommand{\cMbasic}{\mbox{$\cM_{\text{basic}}$}}
\newcommand{\im}{\mbox{\rm im}\,}
\renewcommand{\mod}{\,\text{mod}\,}
\newcommand{\sgn}{\text{sgn}}
\newcommand{\dist}{\mbox{\rm dist}}
\newcommand{\wt}{\mbox{\rm wt}}
\newcommand{\one}{\mbox{$\bf 1$}}
\newcommand{\ideal}[1]{\mbox{$\langle{#1}\rangle$}}
\newcommand{\Azs}{\mbox{$A[z;\sigma]$}}
\newcommand{\p}{\mbox{$\mathfrak{p}$}}
\newcommand{\lideal}[1]{\mbox{$^{^{\bullet\!\!}}\langle{\, #1\, }\rangle$}}
\newcommand{\rideal}[1]{\mbox{$\langle{\, #1\, }\rangle^{^{\!\bullet}}$}}
\newcounter{abc}
\newcounter{def}
\newenvironment{romanlist}{\begin{list}{(\roman{abc})\hfill}{\usecounter{abc}
     \topsep-1.3ex \labelwidth.8cm \leftmargin.8cm \labelsep0cm
     \rightmargin0cm \parsep0ex \itemsep.6ex
     \partopsep1.6ex}}{\end{list}}
\newenvironment{alphalist}{\begin{list}{(\alph{abc})\hfill}{\usecounter{abc}
     \topsep-1.3ex \labelwidth.7cm \leftmargin.7cm \labelsep0cm
     \rightmargin0cm \parsep0ex \itemsep.6ex
     \partopsep1.6ex}}{\end{list}}
\newenvironment{arabiclist}{\begin{list}{(\arabic{abc})\hfill}{\usecounter{abc}
     \topsep-1.3ex \labelwidth.7cm \leftmargin.7cm \labelsep0cm
     \rightmargin0cm \parsep0ex \itemsep.6ex
     \partopsep1.6ex}}{\end{list}}
\newtheorem{theo}{Theorem}[section]
{\theorembodyfont{\rm} \newtheorem{defi}[theo]{Definition}}
{\theorembodyfont{\rm} \newtheorem{exa}[theo]{Example}}
{\theorembodyfont{\rm} \newtheorem{rem}[theo]{Remark}}
\newtheorem{prop}[theo]{Proposition}
\newtheorem{cor}[theo]{Corollary}
\newtheorem{lemma}[theo]{Lemma}
\newtheorem{prob}[theo]{Problem}
{\theorembodyfont{\rm}}
{\theorembodyfont{\rm}}
\newenvironment{proof}{{\sc Proof:}}{\mbox{}\hfill$\Box$\par}
\newcommand{\eqnref}[1]{~\mbox{$(${\rm \ref{#1}}$)$}}
\newcommand{\set}[1]{\left\{#1\right\}}
\begin{document}
\title{A Matrix Ring Description for Cyclic Convolutional Codes}
\date\today
\author{Heide Gluesing-Luerssen\footnote{University of Kentucky,
           Department of Mathematics, 715 Patterson Office Tower, Lexington,
           KY 40506- 0027, USA; heidegl@ms.uky.edu},
        Fai-Lung Tsang\footnote{University of Groningen,
           Department of Mathematics, P.~O.~Box 800, 9700 AV Groningen,
           The Netherlands; F.L.Tsang@math.rug.nl}
       }

%%%%%%%%%%%%%%%%%%%%%%%%%%%%%%%%%%%%%%%%%%%%%%%%%%%%%%%%%%%%%%%%%%%

\maketitle
{\bf Abstract:}
In this paper, we study convolutional codes with a specific cyclic structure.
By definition, these codes are left ideals in a certain skew polynomial ring.
Using that the skew polynomial ring is isomorphic to a matrix ring we can
describe the algebraic parameters of the codes in a more accessible way.
We show that the existence of such codes with given algebraic
parameters can be reduced to the solvability of a modified rook problem.
It is our strong belief that the rook problem is always solvable, and we present
solutions in particular cases.

{\bf Keywords:} Convolutional codes, cyclic codes, skew polynomial rings,
Forney indices.

{\bf MSC (2000):} 94B10, 94B15, 16S36

%%%%%%%%%%%%%%%%%%%%%%%%%%%%%%%%%%%%%%%%%
\section{Introduction}\label{S-Intro}
\setcounter{equation}{0}
%%%%%%%%%%%%%%%%%%%%%%%%%%%%%%%%%%%%%%%
Convolutional codes (CC's, for short) form an important class of error-correcting codes
in engineering practice.
The mathematical theory of these codes has been set off by the seminal papers of
Forney~\cite{Fo70} and Massey et~al.~\cite{MaSa67}, and the progress ever since
is reflected by, for instance, the books~\cite{JoZi99,Pi88b} and the article~\cite{McE98}.
Several directions have been pursued.
In the 1970s, a lot of effort has been made to construct powerful CC's with
the aid of good block codes, see~\cite{MCJ73,Ju75}.
This idea has been resumed in the papers~\cite{SGR01,GRS06}.
Furthermore, the methods of linear systems theory have been utilized
in order to gain a deeper mathematical understanding of CC's.
We refer to the papers \cite{MaSa67,RSY96,RoYo99,HRS05} for further details and
constructions.
A third direction in the theory of CC's developed when codes with some
additional algebraic structure were studied.
Besides the recently introduced classes of Goppa convolutional codes~\cite{PPS04,MPCS06}
and group convolutional codes~\cite{EGPS07}, the main class of such codes are cyclic
convolutional codes.

Cyclic structure for CC's has been investigated for the first time in the
papers~\cite{Pi76,Ro79}.
One of the crucial observations revealed that CC's that are invariant
under the ordinary cyclic shift have degree zero, that is, they are cyclic
block codes.
This insight has led to a more complex notion of cyclicity for CC's which can be
summarized as follows.
Cyclic convolutional codes (CCC's, for short) are direct summands of $\F[z]^n$ that are
at the same time left ideals in a skew polynomial ring $\Azs$, where
$A=\F[x]/\ideal{x^n-1}$ and~$\sigma$ is an $\F$-automorphism of~$A$.
During the last couple of years a detailed algebraic theory of CCC's has been developed
in \cite{GS04,GL06}.
Among other things it has been shown that CCC's are principal left ideals in $\Azs$ and,
using a type of Gr\"obner basis theory, one can compute reduced generator
polynomials from which all algebraic parameters of the code can easily be read off.
The details will be given later on in Theorem~\ref{T-ideals}.
Classes of CCC's with good error-correcting properties have been
presented in \cite{GL06p,GS06p}.

In this paper we want to continue the investigation of CCC's.
We restrict ourselves to a particular class of automorphisms.
In that case the skew polynomial ring $\Azs$ turns out to be isomorphic to a matrix
ring over a commutative polynomial ring.
This allows us to easily construct generator polynomials of CCC's with prescribed
algebraic parameters.
Moreover, we discuss the existence of CCC's with any given algebraic
parameters and will show that it reduces to a particular combinatorial problem
resembling the classical rook problem.
We can solve particular instances of that problem, but unfortunately not the general
case.
However, we strongly believe that the combinatorial problem is solvable for all
possible cases.

The outline of the paper is as follows.
In the next section we review important notions from convolutional coding theory and
introduce the algebraic framework for CCC's.
In Section~\ref{S-M} we concentrate on a particular class of CCC's and
establish an isomorphism between the associated skew polynomial ring and a
certain matrix ring.
We translate the main notions needed for the theory of CCC's into the matrix setting.
In Section~\ref{S-CCC} we construct particular CCC's and discuss the existence of
CCC's with given Forney indices.
The existence of such codes reduces to a combinatorial problem followed by
a problem of constructing polynomial matrices with certain degree properties.
After presenting a proof of the matrix problem we discuss and solve particular instances
of the combinatorial problem in Section~\ref{S-rook}.
We close the paper with a short section illustrating how to generalize the results to
codes that are cyclic with respect to a general automorphism.

%%%%%%%%%%%%%%%%%%%%%%%%%%%%%%%%%%%%%%%%%%%%%
\section{Preliminaries on Cyclic Convolutional Codes}
\setcounter{equation}{0}
%%%%%%%%%%%%%%%%%%%%%%%%%%%%%%%%%%%%%%%%%%%%%
Throughout this paper let $\F$ be a finite field with~$q$ elements.
A {\sl convolutional code of length\/} $n$ and dimension~$k$ is a submodule~$\cC$
of $\F[z]^n$ having the form
\[
    \cC=\im G:=\{uG\,\big|\, u\in\F[z]^k\},
\]
where~$G$ is a {\sl basic\/} matrix in $\F[z]^{k\times n}$, i.~e.
$\rank G(\lambda)=k$ for all $\lambda\in\overline{\F}$
(with $\overline{\F}$ being an algebraic closure of~$\F$).
We call such a matrix $G$ an {\sl encoder}, and the number
$\deg(\cC):=\deg(G):=\max\{\deg(M)\mid M\text{ is a $k$-minor of }G\}$
is said to be the {\sl degree\/} of the encoder~$G$ or of the code~$\cC$.
Recall that the requirement of~$G$ being basic (rather than just having full row
rank over $\F[z]$) is equivalent to~$\cC$ being a direct summand of the module
$\F[z]^n$.
Obviously, two basic matrices $G,\,G'\in\F[z]^{k\times n}$ satisfy
$\im G=\im G'$ if and only if $G'=UG$ for some
$U\in GL_k(\F[z]):=\{V\in\F[z]^{k\times k}\mid \det(V)\in\F\backslash\{0\}\}$.
It is well known~\cite[p.~495]{Fo75} that each CC admits a minimal encoder.
Here a matrix $G\in\F[z]^{k\times n}$ is said to be
{\sl minimal\/} if the sum of its row degrees equals $\deg(G)$, where the degree
of a polynomial row vector is defined as the maximal degree of its entries.
For details see, e.~g., \cite[Main~Thm.]{Fo75} or \cite[Thm.~A.2]{McE98}.
Such a minimal encoder is in general not unique; the row degrees of a minimal
encoder, however, are, up to ordering, uniquely determined by the code and are
called the {\sl Forney indices\/} of the code or of the encoder.
It follows that a CC has a constant encoder matrix if and only
if the degree is zero.
In that case the code is, in a natural way, a block code.

Beyond these purely algebraic concepts the most important notion in
error-control coding is certainly the weight.
For a polynomial vector $v=\sum_{j=0}^N v^{(j)}z^j\in\F[z]^n$, where $v^{(j)}\in\F^n$,
one defines its weight as $\wt(v)=\sum_{j=0}^N \wt(v^{(j)})$, with
$\wt(v^{(j)})$ denoting the (Hamming) weight of the vector $v^{(j)}\in\F^n$.
Just like for block codes the {\sl distance\/} of a code~$\cC$ is defined as
$\dist(\cC)=\min\{\wt(v)\mid v\in\cC,\,v\not=0\}$.

Let us now turn to the notion of cyclicity (for details we refer to \cite{GS04,GL06}).
We will restrict ourselves to codes where the length~$n$ is coprime with the field
size~$q$.
From the theory of cyclic block codes recall the standard identification
\begin{equation}\label{eq:pF}
   \p:\; \F^n\longrightarrow A:=\F[x]/{\ideal{x^n-1}},\quad
   (v_0,\ldots,v_{n-1})\longmapsto\sum_{i=0}^{n-1}v_ix^i
\end{equation}
of~$\F^n$ with the ring of polynomials modulo $x^n-1$.
Extending this map coefficientwise we can identify the polynomial module
$\F[z]^n=\{\sum_{\nu=0}^Nz^{\nu}v_{\nu}\mid N\in\N_0,\,v_{\nu}\in\F^n\}$
with the polynomial ring
\[
   A[z]:=\Big\{\sum_{\nu=0}^Nz^{\nu}a_{\nu}\,\Big|\, N\in\N_0,\,a_{\nu}\in A\Big\}.
\]
Following the theory of cyclic block codes one would like to declare a code
$\cC\subseteq\F[z]^n$ cyclic if it is invariant under the cyclic shift acting on
$\F[z]^n$, or, equivalently, if its image in~$A[z]$ is an ideal.
However, it has been shown in various versions in~\cite[Thm.~3.12]{Pi76},
\cite[Thm.~6]{Ro79}, and \cite[Prop.~2.7]{GS04} that every CC with
this property has degree zero.
In other words, this notion does not lead to any codes other than cyclic block
codes.
Due to this result a more general notion of cyclicity has been introduced and
discussed in the papers mentioned above.
In order to present this notion we need the group $\AutF(A)$ of all
$\F$-automorphisms on~$A$.
It is clear that each automorphism $\sigma\in\AutF(A)$ is uniquely
determined by the single value $\sigma(x)\in A$, but not every choice for
$\sigma(x)$ determines an automorphism on~$A$.
Indeed, since~$x$ generates the $\F$-algebra~$A$ the same has to be true
for~$\sigma(x)$, and we obtain for $a\in A$
\begin{equation}\label{eq:sigma(x)}
  \left.\begin{array}{l} \sigma(x)=a\text{ determines an}\\
                    \text{$\F$-automorphism on }A\end{array}\right\}
  \Longleftrightarrow
  \left\{\begin{array}{l} 1, a,\ldots, a^{n-1}\text{ are linearly independent}\\
                    \text{over~$\F$ and }a^n=1.\end{array}\right.
\end{equation}
Fixing an arbitrary automorphism $\sigma\in\AutF(A)$ we define a new multiplication
on the $\F[z]$-module $A[z]$ via
\begin{equation}\label{eq:az}
    az=z\sigma(a) \text{ for all }a\in A.
\end{equation}
Along with associativity, distributivity, and the usual multiplication inside~$A$
this turns $A[z]$ into a skew polynomial ring which we will denote by $\Azs$.
Notice that $\Azs$ is non-commutative unless~$\sigma$ is the identity.
Moreover, the map~$\p$ from\eqnref{eq:pF} can be extended to
\begin{equation}\label{eq:p}
    \p:\F[z]^n\longrightarrow\Azs,\quad
    \sum_{\nu=0}^Nz^{\nu}v_{\nu}\longmapsto\sum_{\nu=0}^Nz^{\nu}\p(v_{\nu}),
\end{equation}
this way yielding an isomorphism of left $\F[z]$-modules.
Now we declare a submodule $\cC\subseteq\F[z]^n$ to be $\sigma$-{\em cyclic\/} if
$\p(\cC)$ is a left ideal in $\Azs$.
It is straightforward to see that the latter is equivalent to the
$\F[z]$-submodule $\p(\cC)$ of $\Azs$
being closed under left multiplication by~$x$.
Combining this with the definition of CC's we arrive
at the following notion.
%%%%%%%%%%%%%%%%%%%%%%%%%%
\begin{defi}\label{D-CCC}
A submodule $\cC\subseteq\F[z]^n$ is called a $\sigma$-cyclic convolutional code
($\sigma$-CCC, for short) if $\cC$ is a direct summand of~$\F[z]^n$ and $\p(\cC)$
is a left ideal in $\Azs$.
\end{defi}
%%%%%%%%%%%%%%%%%%%%%%%%%
In the papers \cite{Pi76,GS04,GL06,GL06p,GS06p} the algebraic properties of these
codes have been investigated in detail (the main results are summarized in Theorem~\ref{T-ideals} below)
and plenty of CCC's,
all optimal with respect to their free distance, have been presented.
In~\cite{GL06p} a class of one-dimensional CCC's has been
constructed all of which members are MDS codes, that is, they have the best
distance among all one-dimensional codes with the same length, same degree and
over any finite field.
This result has been generalized to a class of Reed-Solomon convolutional codes
in~\cite{GS06p}.
In~\cite{EGPS07} the concept of cyclicity has been generalized to group
convolutional codes.

Let us close this section with some basic notation.
In any ring we will denote left, respectively right, ideals by
$\lideal{\;\cdot\;}$, respectively $\rideal{\;\cdot\;}$.
As usual, ideals in commutative rings will be written as $\ideal{\;\cdot\;}$.
The group of units in a ring~$R$ will be denoted by~$R^\times$.

%%%%%%%%%%%%%%%%%%%%%%%%%%%%%%%%%%%%%%%%%%%%%
\section{The Matrix Ring $\mathcal{M}$}
\label{S-M}
\setcounter{equation}{0}
%%%%%%%%%%%%%%%%%%%%%%%%%%%%%%%%%%%%%%%%%%%%%
In this paper we will restrict ourselves to a more specific class of cyclic codes
than those being introduced in the previous section.
Indeed, we will consider the following situation.
Let~$\F$ be any field with~$q$ elements and let~$n\in\N_{\geq2}$ be a divisor of
$q-1$.
Then, using the Chinese Remainder Theorem, the quotient ring $A':=\F[x]/\ideal{x^n-1}$
is isomorphic to the direct product
\begin{equation}\label{eq:A}
    A:=\underbrace{\F \times \cdots \times \F}_{n \text{ copies}}.
\end{equation}
The elements of~$A$ will be denoted as $a=[a_1,\dots,a_n]$, where
$a_i \in \F$ for $i=1,\ldots,n$.
Observe that~$\F$ naturally embeds into~$A$ via $f\longmapsto[f,\ldots,f]$,
thus~$A$ is an $\F$-algebra.
This algebra structure is isomorphic to the natural $\F$-algebra structure
of~$A'$.

The standard $\F$-basis of $A$ is given by $\set{e_1, \dots , e_n}$, where
\begin{equation}\label{eq:e}
    e_i = [0, \dots, 1, \dots, 0]
    \text{ with the $1$ appearing in the $i$-th position}.
\end{equation}
Obviously, these are just the primitive idempotents of the ring~$A$, and
$A=\oplus_{i=1}^n \ideal{e_i}$ as a direct sum of ideals.
It is clear that the automorphisms of~$A$ are in one-one
correspondence with the permutations of the primitive idempotents.
As a consequence, $|\AutF(A)|=n!$, see also \cite[Cor.~3.2]{GS04}.
In this paper, we will consider only those automorphisms for which the permutation
is a cycle of length~$n$.
It is not hard to see that there are exactly $(n-1)!$ automorphisms of that kind.
Allowing a suitable permutation of the~$n$ copies of~$\F$, we may restrict
ourselves to the automorphism
\begin{equation}\label{eq:sigma}
   \sigma:A\longrightarrow A,\qquad
   \sigma([a_1, \dots, a_n])=[a_n, a_1,\dots,a_{n-1}].
\end{equation}
Then $\sigma (e_i) = e_{i+1}$ for $1 \le i \le n-1$ and $\sigma(e_n)=e_1$, and
we have the identities
\begin{equation}\label{eq:sigmae}
  \sigma^j(e_i)=e_{(i+j-1\mod n)+1}\text{ for all }
  i=1,\ldots,n\text{ and }j\in\N_0.
\end{equation}
As in the previous section, the automorphism~$\sigma$ gives rise to a skew
polynomial ring $(\Azs,+,\,\cdot\,)$ where, again, the set
$\{\sum_{\nu=0}^N z^\nu a_\nu\mid N\in\N_0,\,a_\nu\in A\}$ is equipped with the
usual coordinatewise addition, and where multiplication is defined by the rule
$az=z\sigma(a)$ for all $a\in A$, see\eqnref{eq:az}.
As already indicated by the notation above, coefficients of the polynomials in
$\Azs$ are always meant to be the right hand side coefficients.
Again, the left $\F[z]$-module $\Azs$ is isomorphic to $\F[z]^n$
via the map~$\p$ from\eqnref{eq:p}, and where we use an isomorphism of~$A$ with the
quotient ring~$A'$.
Thus, $\Azs$ gives us the framework for the class of CCC's where the length~$n$ divides
$q-1$ and where the underlying automorphism induces a cycle of length~$n$ on the
primitive idempotents.

%%%%%%%%%%%%%%%%%%%%%%%%%%%%%%%%
\begin{exa}\label{E-autom}
Let $\F=\F_4=\{0,1,\alpha,\alpha^2\}$, where $\alpha^2=\alpha+1$, and let $n=3$.
Then $A=\F\times\F\times\F\cong\F[x]/\ideal{x^3-1}=A'$ and $|\AutF(A)|=6$.
The automorphisms~$\sigma\in\AutF(A')$ are completely determined by their
value~$\sigma(x)$ assigned to~$x$, see also\eqnref{eq:sigma(x)}.
These values are given by
$x, x^2,\alpha x, \alpha^2 x, \alpha x^2, \alpha^2 x^2$.
The two automorphisms inducing cycles of length~$3$ are given by
$\sigma_1(x)=\alpha x$ and $\sigma_2(x)=\alpha^2 x$.
Indeed, using the isomorphism
$\phi:A'\longrightarrow A,\ f\longmapsto
   [f(1), f(\alpha), f(\alpha^2)]$
the primitive idempotents of~$A$ are
$e_1=[1,0,0]=\phi(x^2+x+1),\,e_2=[0,1,0]=\phi(\alpha x^2+\alpha^2 x+1)$, and
$e_3=[0,0,1]=\phi(\alpha^2 x^2+\alpha x+1)$.
One easily verifies that
$\sigma_2(e_1)=e_2,\,\sigma_2(e_2)=e_3,\, \sigma_2(e_3)=e_1$, thus~$\sigma_2$
satisfies\eqnref{eq:sigma} (where, of course, we identify~$\sigma_2\in\AutF(A')$
with $\phi\sigma_2\phi^{-1}\in\AutF(A)$).
Likewise, using the identification
$A'\longrightarrow A,\ f\longmapsto [f(\alpha^2), f(\alpha), f(1)]$
the automorphism~$\sigma_1$ satisfies\eqnref{eq:sigma}.
Notice that $\sigma_2=\sigma_1^{-1}$.
\end{exa}
%%%%%%%%%%%%%%%%%%%%%%%%%%%%%%%%

In the rest of this section we will show how the skew polynomial ring $\Azs$
can be described as a certain matrix ring, and we will translate various
properties into the matrix setting.
This will lead us to a new way of constructing CCC's with
given algebraic parameters.
Let us consider the ring $\F[t]^{n\times n}$ of polynomial matrices in an
indeterminate~$t$ and define the subset
\[
   \cM:=\big\{ (m_{ab}) \in \F[t]^{n \times n}\,\big|\, m_{ab}(0)=0
   \text{ for all } 1\leq b<a\leq n \big\}.
\]
Notice that~$\cM$ consists exactly of all matrices where the elements below the
diagonal are multiples of~$t$.
It is easy to see that~$\cM$ is a (non-commutative) subring of
$\F[t]^{n \times n}$.
%%%%%%%%%%%%%%%%%%%%%%%%%%%%%%%%%%%%%%%%%%
\begin{prop}\label{prop:xi}
Let $\Azs$ and~$\cM$ be as above.
Then the map
\begin{align}
     \xi:\qquad\qquad\Azs\qquad\qquad&\longrightarrow\qquad\qquad\qquad\qquad\cM  \nonumber\\[1ex]
     \sum_{l=0}^N z^{nl} \sum_{i=0}^{n-1} z^i [c_{i1}^l, c_{i2}^l, \dots , c_{in}^l]
    &\longmapsto\sum_{l=0}^N t^l \left( \begin{array}{ccccc}
                    c_{01}^l     & c_{12}^l  & \cdots & \cdots         & c_{n-1,n}^l \\
                    tc_{n-1,1}^l & c_{02}^l  & \cdots  & \cdots        & c_{n-2,n}^l \\
                    \vdots       &  \ddots    &\ddots &                & \vdots \\
                    tc_{21}^l    &           & \ddots &\ddots          & c_{1n}^l \\
                    tc_{11}^l    & tc_{22}^l & \cdots &\!\!tc_{n-1,n-1}^l\!\! & c_{0n}^l
                    \end{array} \right) \label{eq:xi}
\end{align}
is a ring isomorphism.
\end{prop}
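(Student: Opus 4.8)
The plan is to verify directly that $\xi$ is a well-defined bijective $\F$-linear map and then that it respects multiplication; since both $\Azs$ and $\cM$ are $\F$-algebras with the obvious $\F$-linear structure, the only substantive point is multiplicativity. First I would make the indexing transparent. Every element of $\Azs$ can be written uniquely as $f=\sum_{l\ge 0} z^{nl}\sum_{i=0}^{n-1} z^i a_i^l$ with $a_i^l=[c_{i1}^l,\dots,c_{in}^l]\in A$, because $\{z^i a\mid 0\le i\le n-1,\ a\in A\}$ spans $\Azs$ over $\F[z^n]$ and $z^n$ is central (indeed $\sigma^n=\id$, so $az^n=z^n\sigma^n(a)=z^n a$). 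Under this normal form, $\xi$ sends $f$ to the matrix $M(f)=\sum_l t^l N_l$ where $N_l$ is the $\sigma$-circulant-type array displayed in\eqnref{eq:xi}: its $(a,b)$ entry is $c_{i(l),b}^l$ or $t\,c_{i(l),b}^l$ with the row/column bookkeeping governed by $i\equiv b-a \pmod n$, the factor $t$ appearing exactly when $b<a$. The map is visibly $\F$-linear and, reading off coefficients, injective; surjectivity follows because the $n^2$ entry-positions of a matrix in $\cM$ (counting the lower-triangular entries as elements of $t\F[t]$ and the rest as elements of $\F[t]$) are in bijection with the free $\F[z^n]$-generators $z^i e_b$ of $\Azs$, so one can read any target matrix back as a unique preimage.

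For multiplicativity it suffices, by $\F[z^n]$-bilinearity and the fact that $\xi(z^{nl}g)=t^l\xi(g)$, to check $\xi(gh)=\xi(g)\xi(h)$ on the generators $g=z^i e_b$, $h=z^j e_d$ with $0\le i,j\le n-1$. On the algebra side one computes, using $e_b z^j=z^j\sigma^j(e_b)$ and\eqnref{eq:sigmae}, that
\[
   (z^i e_b)(z^j e_d)=z^{i+j}\,\sigma^j(e_b)\,e_d
   =\begin{cases} z^{i+j} e_d, & d\equiv b+j\!\!\pmod n,\\[0.5ex] 0,& \text{otherwise},\end{cases}
\]
and when $i+j\ge n$ one rewrites $z^{i+j}=z^n z^{i+j-n}$, which on the matrix side produces the factor $t$. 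On the matrix side, $\xi(z^i e_b)$ is the matrix with a single nonzero entry: it is the elementary matrix $E_{ab}$ (possibly times $t$) where $a$ is the unique row with $i\equiv b-a\pmod n$, i.e.\ $a\equiv b-i\pmod n$, with a factor $t$ precisely when $b<a$. Multiplying two such single-entry matrices $E_{ab}\cdot E_{b'd}$ is nonzero iff $b=b'$, which — after translating indices — is exactly the condition $d\equiv b+j\pmod n$; and the product is the single-entry matrix in position $(a,d)$, carrying a factor $t$ iff one picked up a $t$ along the way. The one thing that needs care is the bookkeeping of powers of $t$: an entry of $\xi(z^i e_b)$ carries $t^1$ iff $b<a$, i.e.\ iff the "wrap-around" $a\equiv b-i\pmod n$ forced a representative with $i>b-1$; and in the product the total $t$-exponent from the two factors must match the $t$-exponent dictated by the position $(a,d)$ together with the extra $z^n$ (hence extra $t$) that appears exactly when $i+j\ge n$. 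Verifying that these two accountings always agree — splitting into the four cases according to whether $b<a$, $d<b$, and whether $i+j\ge n$ — is the technical heart of the argument.

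I expect that last case-check to be the main obstacle: it is entirely elementary but the index arithmetic modulo $n$ combined with tracking when a spurious $t$ is introduced is error-prone, and one has to be scrupulous that $\xi(z^i e_b)$ is the matrix with the $t$-factor placed by the rule "below the diagonal gets a $t$" rather than, say, "wrap-around gets a $t$" — these coincide precisely because $a\equiv b-i\pmod n$ with $0\le i\le n-1$ lands below the diagonal iff the naive difference $b-i$ was nonpositive, i.e.\ iff wrap-around occurred. Once the generator case is in hand, the general identity $\xi(gh)=\xi(g)\xi(h)$ follows by expanding $g,h$ in the normal form, distributing, pulling out the central powers of $z^n$ as powers of $t$, and invoking bilinearity; and $\xi(1)=\xi(e_1+\dots+e_n)=I_n$ shows $\xi$ is a unital ring homomorphism, hence a ring isomorphism.
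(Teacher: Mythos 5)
Your proposal is correct, but it takes a genuinely different and more laborious route than the paper. You reduce to the monomial generators $z^i e_b$ and carry out a four-way case analysis tracking when a factor $t$ appears (from $b<a$, from $d<b'$, and from $i+j\ge n$); as you rightly anticipate, this is where the work lies, and it does in fact close up once one checks that $b\equiv d-j\pmod n$ forces the $t$-exponents on each side to coincide in every case. The paper avoids all of that bookkeeping by a slicker choice of generators: instead of $z^i e_b$, it uses $z^\nu$ and the diagonal embedding of $A$. The two key observations are that $\xi(\alpha)=\mathrm{diag}(\alpha_1,\dots,\alpha_n)$ for $\alpha\in A$, and that $\xi(z^{nl+i})$ is simply the $(nl+i)$-th power of the companion-type matrix $\Smallfourmat{0}{I_{n-1}}{t}{0}$. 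With those, the identities $\xi(\alpha\beta)=\xi(\alpha)\xi(\beta)$, $\xi(z^{\nu+\mu})=\xi(z^\nu)\xi(z^\mu)$, and $\xi(z^\nu\alpha)=\xi(z^\nu)\xi(\alpha)$ are immediate, and the only non-obvious check is the single commutation relation $\xi(\alpha)\xi(z)=\xi(z)\xi(\sigma(\alpha))$, which mirrors the defining relation $\alpha z=z\sigma(\alpha)$ of the skew ring and is a one-line diagonal-versus-shift computation. Additivity then gives multiplicativity in general. In short: your decomposition into idempotent generators localizes the proof to rank-one matrices at the cost of modular index gymnastics, while the paper's decomposition into ``power of $z$ times a diagonal'' concentrates the skewness into one clean commutation relation and needs no case split. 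Both are valid; the paper's is the one I would recommend writing out in full, since your own proposal flags its final case-check as the error-prone step.
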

%%%%%%%%%%%%%%%%%%%%%%%%%%%%%%%%%%%%%%%%%%%%%%
The identification of $\Azs$ and $\mathcal{M}$ was first shown in~\cite{Jat71},
where it has been studied for the more general situation of skew polynomial
rings over arbitrary semisimple rings with a monomorphism~$\sigma$.
This general situation has also been used in~\cite{EGPS07} in order to classify
certain group convolutional codes.
Our choice of the semisimple ring~$A$ and the automorphism~$\sigma$ leads to a
particularly simple proof which we would like to briefly present.

\begin{proof}
It is obvious that each polynomial in~$\Azs$ has a unique representation as on
the left hand side of\eqnref{eq:xi}.
Thus the map~$\xi$ is well-defined. Moreover, it is obvious that~$\xi$ is
bijective and additive, and it remains to show that it is multiplicative.
In order to do so, we firstly observe that
$\xi(\alpha)=\textrm{diag} (\alpha_1,\dots,\alpha_n)$ for any
$\alpha=[\alpha_1,\dots,\alpha_n]\in A$, and secondly, that
\[
   \xi(z^{nl+i})=t^l\begin{pmatrix}0&I_{n-i}\\ tI_{i}&0\end{pmatrix}
   =\begin{pmatrix}0&I_{n-1}\\ t&0\end{pmatrix}^{nl+i}
   \text{ for all }l\in\N_0
   \text{ and }i=0,\ldots,n-1.
\]
Now the identities
$\xi(\alpha\beta)= \xi(\alpha)\xi(\beta),\
  \xi(z^{\nu+\mu})= \xi(z^\nu) \xi(z^\mu)$,
and
$\xi(z^\nu \alpha)=\xi(z^\nu) \xi(\alpha)$, where $\nu,\,\mu\in\N_0$ and
$\alpha,\,\beta \in A$, are obvious, and one easily verifies
$\xi(\alpha)\xi(z)=\xi(z) \xi(\sigma(\alpha))$ for all $\alpha\in A$.
Using the additivity of $\xi$ we conclude that~$\xi$ is indeed multiplicative,
hence a ring isomorphism.
\end{proof}
For later purposes it will be handy to have an explicit formula for the entries
of $\xi(g)$.
For
\begin{equation}\label{eq:g}
  g:=\sum_{l=0}^N z^{nl} \sum_{i=0}^{n-1} z^i
                             [c_{i1}^l, c_{i2}^l, \dots , c_{in}^l]
\end{equation}
and $\xi(g)=(m_{ab})$ one computes
\begin{equation}\label{eq:mab}
  \mbox{}\hspace*{-.3em}
   m_{ab}=\sum_{l=0}^N t^{l+\sgn(a-b)}c_{b-a+\sgn(a-b)n,b}^l\ \text{ for }
   1\leq a,\,b\leq n,
\end{equation}
where
\[
  \sgn(x):=\left\{
  \begin{array}{ll}1&\text{if }x> 0,\\ 0&\text{else}.\end{array}\right.
\]
It is clear that the subring $\xi(\F[z])$ of~$\cM$ is given by the set of
matrices

\begin{equation}\label{eq:xif}
  \sum_{l=0}^Nt^l\left( \begin{array}{ccccc}
                    c_{0}^l     & c_{1}^l  & \cdots & \cdots         & c_{n-1}^l \\
                    tc_{n-1}^l & c_{0}^l  & \cdots  & \cdots        & c_{n-2}^l \\
                    \vdots       &  \ddots    &\ddots &                & \vdots \\
                    tc_{2}^l    &           & \ddots &\ddots          & c_{1}^l \\
                    tc_{1}^l    & tc_{2}^l & \cdots &\!\!tc_{n-1}^l\!\! & c_{0}^l
                    \end{array} \right)\text{ where }N\in\N_0,\, c_i^l\in\F,
\end{equation}
and, defining $f\cdot M:=\xi(f)M$ for $f\in\F[z]$ and $M\in \cM$,
we may impose a left $\F[z]$-module structure on~$\cM$ making it isomorphic
to~$\Azs$ as left $\F[z]$-modules.
Due to the form of~$\xi(f)$ as given in\eqnref{eq:xif} the thus obtained module
structure is of course not identical to the canonical $\F[t]$-module structure
of~$\cM$.

In the sequel we will translate various properties of polynomials in~$\Azs$
into matrix properties in~$\cM$.
First of all we will show how to identify the units in~$\cM$.
This will play an important role later on when discussing left ideals that are
direct summands.
%%%%%%%%%%%%%%%%%%%%%%%%%%%%%%%%%%%%%
\begin{prop}\label{prop:unitsM}
Let $\cM^\times$ be the group of units of $\cM$. Then $\cM^\times = \cM \cap GL_n(\F[t])$.
In other words, $M\in\cM$ is a unit in the ring~$\cM$ if and only if
$\det(M)\in\F^\times:=\F \setminus \{0 \}$.
\end{prop}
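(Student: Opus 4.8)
The plan is to prove the two inclusions of the asserted equality $\cM^\times = \cM \cap GL_n(\F[t])$ separately, the first being immediate and the second requiring a short argument about how the diagonal structure of $\cM$ behaves under inversion.

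First I would observe that $\cM^\times \subseteq \cM \cap GL_n(\F[t])$ is essentially trivial: if $M \in \cM$ has a two-sided inverse $M^{-1}$ in $\cM$, then in particular $M^{-1}$ is a matrix over $\F[t]$ with $M M^{-1} = I_n$, so $\det(M)\det(M^{-1}) = 1$ in $\F[t]$, forcing $\det(M) \in \F[t]^\times = \F^\times$. Hence $M \in GL_n(\F[t])$, and of course $M \in \cM$ by assumption. (One should note that the unit group of the commutative polynomial ring $\F[t]$ is exactly $\F^\times$ since $\F$ is a field; this is where the hypothesis that $\F$ is a field enters.)

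For the reverse inclusion, suppose $M \in \cM$ with $\det(M) \in \F^\times$. Then $M$ has an inverse $M^{-1} = \det(M)^{-1}\,\mathrm{adj}(M)$ in $\F[t]^{n\times n}$; the point is to show $M^{-1}$ again lies in $\cM$, i.e.\ that every strictly-sub-diagonal entry of $M^{-1}$ is divisible by $t$. Here I would pass to the quotient $\F[t]/\ideal{t} \cong \F$, i.e.\ evaluate at $t = 0$: write $\overline{X} := X(0)$ for a matrix $X$ over $\F[t]$. By definition of $\cM$, the matrix $\overline{M}$ is upper triangular. Since $\det(\overline{M}) = \overline{\det(M)} = \det(M)(0) = \det(M) \neq 0$, the matrix $\overline{M}$ is an invertible upper triangular matrix over $\F$, hence its inverse is again upper triangular. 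But reduction mod $t$ is a ring homomorphism $\F[t]^{n\times n} \to \F^{n\times n}$, so $\overline{M^{-1}} = (\overline{M})^{-1}$, which is upper triangular. This says precisely that every entry of $M^{-1}$ strictly below the diagonal vanishes at $t = 0$, i.e.\ is a multiple of $t$; therefore $M^{-1} \in \cM$. Combined with $M \in \cM$, this exhibits $M$ as a unit of $\cM$.

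I do not expect a serious obstacle here; the only point requiring a little care is the passage to $\F[t]/\ideal{t}$ and the remark that the inverse of an invertible upper triangular matrix over a field is upper triangular (an elementary fact, provable by back-substitution or by noting that upper triangular invertible matrices form a group). Everything else is bookkeeping with determinants and the adjugate formula. If one prefers a more hands-on argument avoiding the quotient, one can instead argue directly that for an upper-triangular-mod-$t$ matrix the $(a,b)$ cofactor with $a > b$ is divisible by $t$, but the quotient-ring argument is cleaner and I would use that.
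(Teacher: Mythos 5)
Your proof is correct and is essentially the same argument as the paper's: both reduce modulo $t$ (the paper phrases it as ``substituting $t=0$'') to see that the inverse of $M$ has upper-triangular constant term, hence lies in $\cM$. The adjugate formula and the explicit mention of the quotient homomorphism are cosmetic variations; the key step is identical.
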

%%%%%%%%%%%%%%%%%%%%%%%%%%%%%%%%%%%%%
As a consequence, left- (or right-) invertible elements in~$\cM$ or in
$\Azs$ are units.

\begin{proof}
The inclusion ``$\subseteq$'' is clear.
For the inclusion ``$\supseteq$'' let $M\in\cM\cap GL_n(\F[t])$.
Then there exists a matrix $N\in GL_n(\F[t])$ such that $MN=I$.
Substituting $t=0$ we obtain $M(0)N(0)=I$, thus $N(0)=M(0)^{-1}$.
Since $M\in\cM$ the matrix $M(0)$ is upper triangular.
But then the same is true for $N(0)$, showing that $N\in\cM$.
\end{proof}

Let us now turn to properties of the skew polynomial ring $\Azs$ that follow
from the semi-simplicity of~$A$.
Since the idempotents $e_1,\ldots,e_n$ are pairwise orthogonal and satisfy
$e_1+\ldots+e_n=1$ we have
\[
    \Azs = \rideal{e_1}\oplus \cdots \oplus \rideal{e_n} =
           \lideal{e_1} \oplus \cdots \oplus \lideal{e_n}.
\]
As a consequence, each element $g\in\Azs$ has a unique decomposition
\begin{equation}\label{eq:Peirce}
   g=g^{(1)}+\ldots+g^{(n)},\text{ where }g^{(a)}:=e_ag.
\end{equation}
The following notions will play a central role.
Recall that coefficients of skew polynomials are always meant to be right hand side
coefficients.
%%%%%%%%%%%%%%%%%%%%%%%%%%%%%%%%%%%%%%%%%%
\begin{defi}
Let $g \in \Azs$.
\begin{alphalist}
\item For $a=1,\ldots,n$ we call $g^{(a)}:=e_a g$ the $a$-th {\em component}
      of~$g$.
\item The {\em support} of~$g$ is defined as $T_g := \set{a\mid g^{(a)} \ne 0}$.
\item We call $g$ {\em delay-free} if $T_g=T_{g_0}$ where $g_0\in A$ is the constant
      term of the polynomial~$g$.
\item The polynomial~$g$ is said to be {\em semi-reduced} if the leading
      coefficients of the components $g^{(a)}, \, a\in T_g$, lie in pairwise
      different ideals $\ideal{e_{i_a}}$.
\item The polynomial~$g$ is called {\em reduced\/} if no leading term of any
      component of~$g$ is a right divisor of a term of any other component
      of~$g$.
\end{alphalist}
\end{defi}
%%%%%%%%%%%%%%%%%%%%%%%%%%%%%%%%%%%%%%%%%%
In order to comment on these notions let us have a closer look at the
components of a polynomial.
Using\eqnref{eq:sigmae} one obtains that, for instance, the first component of
a polynomial~$g$ is of the form
\[
  g^{(1)}=[c_0,0,\dots,0]+z[0,c_1,0,\dots,0]+z^2[0,0,c_2,0,\dots,0]+\cdots
   =c_0e_1+zc_1e_2+z^2c_2e_3+\cdots
\]
for some $c_j\in\F$.
In general we derive from\eqnref{eq:sigmae}
\begin{equation}\label{eq:gi}
    g^{(a)}=\sum_{j=0}^{N_a} z^jc_{a,j}e_{(a+j-1\text{ mod } n)+1}
    \text{ for some }N_a\in\N_0\text{ and }c_{a,j}\in\F
\end{equation}
for $a=1,\ldots,n$.
In particular, the coefficients of the components are $\F$-multiples of
primitive idempotents.
As a consequence,~$g$ is semi-reduced if and only if no leading term of any
component of~$g$ is a right divisor of the leading term of any other component
of~$g$.
Obviously, reducedness implies semi-reducedness, and a polynomial consisting
of one component is always reduced.

The concept of reducedness has been introduced for the skew polynomial ring~$\Azs$
in~\cite[Def.~4.9]{GS04}.
It proved to be very useful in the theory of CCC's.
In particular, it has led to the following results concerning minimal encoder
matrices and Forney indices of CCC's, see
\cite[Thm.~4.5, Thm.~4.15, Prop.~7.10, Thm.~7.13(b)]{GS04}.
%%%%%%%%%%%%%%%%%%%%%%%%%
\begin{theo} \label{T-ideals}
As before let $n\mid(q-1)$.
Let~$\sigma$ be any automorphism in $\AutF(A)$ and consider the
identification~$\p$ of $\F[z]^n$ with the skew polynomial ring
$\Azs$ as given in\eqnref{eq:p} and where we identify~$A$ with the quotient ring
$\F[x]/\ideal{x^n-1}$.
\begin{arabiclist}
\item For every $g\in\Azs$ there exists a unit $u\in\Azs^\times$ such that $ug$
      is reduced.
\item Let $\cC\subseteq\F[z]^n$ be a $\sigma$-CCC.
      Then there exists a reduced and delay-free polynomial $g\in\Azs$ such
      that $\p(\cC)=\lideal{g}:=\{fg\mid f\in\Azs\}$.
      In particular, the left ideal~$\p(\cC)$ is principal.
      Moreover, the polynomial~$g$ is unique up to left multiplication by units
      in~$A$.
\item Let $g\in\Azs$ be a reduced polynomial.
      Then $\p^{-1}(\lideal{g})\subseteq\F[z]^n$ is a direct summand of
      $\F[z]^n$ if and only if $g=\sum_{l\in T_g}u^{(l)}$ for some unit
      $u\in\Azs$.
      That is, the $\sigma$-CCC's are obtained by taking
      any unit~$u$ in $\Azs$ and choosing any collection of components of~$u$
      that forms a reduced polynomial.
\item Let $g\in\Azs$ be a reduced polynomial with support
      $T_g=\{i_1,\ldots,i_k\}$.
      Then $\cC:=\p^{-1}(\lideal{g}) =\im G=\{uG\mid u\in\F[z]^k\}$, where
      \begin{equation}\label{eq:G}
          G:=\begin{pmatrix} \p^{-1}(g^{(i_1)})\\ \vdots\\ \p^{-1}(g^{(i_k)})
             \end{pmatrix}\in\F[z]^{k\times n}.
      \end{equation}
      Furthermore, if~$\cC$ is a direct summand of~$\F[z]^n$ then~$G$ is a
      minimal encoder matrix of~$\cC$.
      As a consequence,~$\cC$ is a $\sigma$-CCC of dimension~$k$ with Forney
      indices $\deg g^{(i_1)},\ldots,\deg g^{(i_k)}$
      and degree $\delta:=\sum_{l=1}^k\deg g^{(i_l)}$.
\end{arabiclist}
\end{theo}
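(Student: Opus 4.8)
Since Theorem~\ref{T-ideals} collects four statements quoted from \cite{GS04}, the natural strategy is not to reprove everything from scratch but to organize the argument around the skew-polynomial division that underlies reducedness, and then to feed the matrix description of Section~\ref{S-M} into the last two parts. For part~(1), I would start from the observation in\eqnref{eq:gi} that each component $g^{(a)}$ has coefficients that are $\F$-multiples of primitive idempotents, so the leading term of $g^{(a)}$ is $z^{N_a}c_{a,N_a}e_{j}$ for an explicit idempotent $e_j$. If $g$ is not reduced, some leading term $z^{N_a}c\,e_j$ right-divides a term $z^m d\,e_j$ of another component $g^{(b)}$ with $m\geq N_a$; then left-multiplying $g$ by $1 - (d/c)z^{m-N_a}e_b$ (a unit in $\Azs$, being of the form $1+$ nilpotent-looking term, or more carefully a unit because $e_b$ is idempotent and the correction only touches the $b$-th component) cancels that term without increasing the degree of $g^{(b)}$. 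Iterating this finite descent on the multiset of component degrees produces a unit $u$ with $ug$ reduced. The delicate point is making the "cancellation does not create new obstructions" claim precise; this is exactly where\eqnref{eq:gi} and the orthogonality $e_ae_b=\delta_{ab}e_a$ do the work, because each elementary step only modifies one component.

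For part~(2), given a $\sigma$-CCC $\cC$ with $\p(\cC)=I$ a left ideal, I would use that $\Azs$ is, by Proposition~\ref{prop:xi}, isomorphic to the matrix ring $\cM\subseteq\F[t]^{n\times n}$, which is a subring of an order in a matrix algebra over the PID $\F[t]$; so left ideals are finitely generated and one expects them to be principal after the reduction step. Concretely: pick any generator $g$ of $I$ as a left ideal (existence from Noetherianity of $\cM$, or directly from the module structure), apply part~(1) to replace $g$ by a reduced generator, and then argue delay-freeness from the fact that $\cC$ is a direct summand — if the constant term $g_0$ had strictly smaller support than $g$, one could strip off a $z$-factor on some component and enlarge the ideal, contradicting that $\langle g\rangle$ is already all of $I$ and a direct summand. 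Uniqueness up to left multiplication by units of $A$ follows because two reduced generators of the same principal left ideal differ by a unit of $\Azs$, and a degree/support comparison (using that reduced generators have minimal component degrees) forces that unit to lie in $A$.

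For parts~(3) and~(4), I would work on the matrix side. A reduced $g$ corresponds under $\xi$ to a matrix $M=\xi(g)\in\cM$ whose nonzero rows are indexed by $T_g$ and whose row-degree pattern (in the $\F[t]$-grading, shifted as in\eqnref{eq:mab}) reflects the component degrees $\deg g^{(i_j)}$; the key structural fact is that reducedness translates into the statement that the "leading coefficient matrix" of these rows has full rank, which by Proposition~\ref{prop:unitsM} is exactly the condition for the corresponding $\F[z]$-encoder $G$ from\eqnref{eq:G} to be row-reduced in the sense of Forney. Then $\p^{-1}(\langle g\rangle)=\im G$ because left multiplication in $\Azs$ corresponds to $\F[z]$-linear combinations of the rows $\p^{-1}(g^{(i_j)})$, and the direct-summand criterion $g=\sum_{l\in T_g}u^{(l)}$ is precisely the condition that $M$ extends to a unit of $\cM$, i.e.\ that the rows of $G$ extend to a basic (unimodular-completable) matrix; here Proposition~\ref{prop:unitsM} is used again to detect units by the determinant. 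The main obstacle, and the step deserving the most care, is establishing the dictionary "$g$ reduced $\iff$ $G$ row-reduced with full-rank leading matrix," since this is what upgrades $G$ from merely an encoder to a \emph{minimal} encoder and pins down the Forney indices as $\deg g^{(i_1)},\dots,\deg g^{(i_k)}$ with degree $\delta=\sum_j\deg g^{(i_j)}$; everything else is bookkeeping with the explicit formulas\eqnref{eq:gi} and\eqnref{eq:mab}.
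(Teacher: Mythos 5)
This theorem is not proved in the paper at all: it is quoted verbatim from \cite{GS04} (the text explicitly cites Thm.~4.5, Thm.~4.15, Prop.~7.10, and Thm.~7.13(b) there), so there is no ``paper's proof'' to compare against. Judged on its own, your sketch captures the broad organization (reduce via units, translate into the matrix ring for (3)--(4)), but it has two genuine gaps and one scope problem.

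The most serious gap is in part~(2). You write ``pick any generator $g$ of $I$ as a left ideal (existence from Noetherianity of $\cM$, or directly from the module structure), apply part~(1) to replace $g$ by a reduced generator.'' This is circular: Noetherianity only gives finitely many generators, and the entire content of the principality claim is that these can be merged into a single one. Part~(1) lets you reduce a single $g$ by a unit, which preserves the left ideal $\lideal{g}$ --- it does nothing for an ideal presented by several generators. The actual argument in \cite{GS04} is a Gr\"obner-basis--style reduction: starting from any finite generating set, one repeatedly cancels leading terms of one generator against another (not just within one polynomial) and shows the process terminates with a single reduced generator. That step is the heart of part~(2) and is entirely missing from your plan. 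The ``delay-free because direct summand'' and ``uniqueness up to units of $A$'' claims are also asserted rather than argued.

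In part~(1) the elementary unit is misidentified. To cancel a term of $g^{(b)}$ using the leading term of $g^{(a)}$ you want to replace $g^{(b)}$ by $g^{(b)}-\alpha z^d g^{(a)}$, and from the component formula $(ug)^{(b)}=e_b u g$ one checks this requires $u=1-\alpha z^{d}e_a$ with $\sigma^{d}(e_b)=e_a$, not $u=1-\alpha z^{d}e_b$ as you wrote; your $u$ modifies a different component. Moreover the claim that such $u$ is automatically a unit is not obvious --- $1+z^{d}\alpha e_a$ is a unit iff $n\nmid d$ (\cite[Lem.~3.7]{GL06}), and one must observe that right-divisibility between the leading term of $g^{(a)}$ and a term of $g^{(b)}$ with $a\neq b$ forces $d=m-N_a\not\equiv 0\pmod n$, so the criterion is in fact met. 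That observation is where the orthogonality of the idempotents does real work, and it needs to be stated.

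Finally, your plan for (3) and (4) works on the matrix ring $\cM$, but the isomorphism $\xi:\Azs\to\cM$ of Proposition~\ref{prop:xi} is available only for the particular cyclic-shift automorphism of\eqnref{eq:sigma}, whereas Theorem~\ref{T-ideals} is stated for an arbitrary $\sigma\in\AutF(A)$. Either restrict the claim, or first decompose $\Azs$ as a product over the cycles of $\sigma$ (as in Section~\ref{S-general}) and argue componentwise; otherwise the ``leading coefficient matrix has full rank'' dictionary can be phrased directly in $\Azs$ using the fact that the leading coefficients of the $g^{(i_l)}$ are scalar multiples of pairwise distinct idempotents $e_{j_l}$, which are $\F$-linearly independent --- this gives row-reducedness of $G$ without invoking $\cM$ and is valid for any $\sigma$.
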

%%%%%%%%%%%%%%%%%%%%%%%%%%%%%%%%%%%%

Let us now return to the situation where~$\sigma$ is as in\eqnref{eq:sigma}.
Throughout this paper semi-reduced polynomials will be much more handy than
reduced ones, see Proposition~\ref{prop:DegMat}(4) below.
Fortunately, it can easily be confirmed that the weaker notion of semi-reducedness
is sufficient for the results above to be true.
We confine ourselves to presenting the following details.

%%%%%%%%%%%%%%%%%%%%%%%%%%%%%%%%%%%%
\begin{rem}\label{R-CCCsemireduced}
The results of Theorem~\ref{T-ideals}(1)~--~(4), except for the uniqueness
result in~(2), are true for semi-reduced polynomials as well.
In order to see this, one has to confirm that, firstly, all results of
\cite[Section~4]{GS04}, in particular Proposition~4.10, Corollary~4.13, and
Lemma~4.14, with the exception of the uniqueness result in Theorem~4.15,
remain true for semi-reduced polynomials.
Secondly one can easily see that Theorem~7.8, Proposition~7.10, Corollary~7.11,
Theorem~7.13, and Corollary~7.15 of \cite[Section~7]{GS04} remain true if one
replaces reducedness by semi-reducedness.
In all cases the proofs in~\cite{GS04} remain literally the same.
\end{rem}
%%%%%%%%%%%%%%%%%%%%%%%%%%%%%%%%%%%%

Due to part~(3) of the theorem above the following notion will be important to
us.
%%%%%%%%%%%%%%%%%%%%%%%%%%%%%%%%%%%%
\begin{defi}\label{D-unimodularg}
A polynomial $g\in\Azs$ is called {\em basic\/} if there exists a unit
$u\in\Azs^\times$ such that $g=\sum_{l\in T_g}u^{(l)}$.
\end{defi}
%%%%%%%%%%%%%%%%%%%%%%%%%%%%%%%%%%%%
According to part~(3) and~(4) of Theorem~\ref{T-ideals}, see also
Remark~\ref{R-CCCsemireduced}, a semi-reduced polynomial~$g$ is basic if and
only if the matrix~$G$ in\eqnref{eq:G} is basic.

Let us now have a closer look at semi-reduced polynomials.
From\eqnref{eq:gi} we see immediately that
\begin{equation}\label{eq:semireduced}
 \text{$g$ is semi-reduced }\Longleftrightarrow
 \left\{\begin{array}{l}
 \text{the numbers $(a+\deg g^{(a)} -1 \text{ mod } n)+1,\ a\in T_g,$}\\[.6ex]
 \text{are pairwise different.}\end{array}\right.
\end{equation}
Moreover, one has
\begin{equation}\label{eq:semireduced2}
  g\in\Azs^\times\text{ and } g\text{ semi-reduced}\Longrightarrow
  g\in A^\times.
\end{equation}
Indeed, suppose $1=hg=(he_1+\ldots+he_n)g=\sum_{l=1}^n he_lg$.
Since $\ideal{e_l}\cong\F$ we get $\deg(he_lg)=\deg(he_l)+\deg(e_lg)$ and the leading
coefficient of $he_lg$ is in the same ideal as the leading coefficient
of $g^{(l)}$.
Now semi-reducedness of~$g$ shows that no cancelation of the leading
coefficients in $\sum_{l=1}^n he_lg$ is possible and thus $hg=1$ implies that
$\deg(g^{(l)})=0$ for all $l=1,\ldots,n$.

It is easy to translate these notions into the setting of the matrix ring~$\cM$.
Indeed, defining the standard basis matrices
\begin{equation}\label{eq:Eij}
  E_{ab}\in\F^{n\times n} \text{ via }\big(E_{ab}\big)_{ij}=
  \left\{\begin{array}{ll} 1 &\text{if }(i,j)=(a,b)\\ 0&\text{if }(i,j)\not=(a,b)
  \end{array}\right\}\text{ for }a,b=1,\ldots,n
\end{equation}
we obtain
\begin{equation}\label{eq:xiei}
  \xi(e_a)=E_{aa}.
\end{equation}
Thus,
\begin{equation}\label{eq:compM}
   \xi(g^{(a)})=\xi(e_a)\xi(g)=E_{aa}\xi(g)
\end{equation}
is a matrix where at most the $a$-th row is nonzero.
It is obtained from~$\xi(g)$ by deleting all other rows.
This gives rise to the following definition.
%%%%%%%%%%%%%%%%%%%%%%%%%%%%%%%%%%%%%%
\begin{defi}
Let $M=(m_{ab}) \in \cM$. Then we define the support of~$M$ to be
\[
    \Supp(M) :=\{a\mid\text{the $a$-th row of $M$ is non-zero}\}.
\]
We say that~$M$ is {\em delay-free\/} if $m_{aa}(0) \ne 0$ for all
$a \in \Supp(M)$.
\end{defi}
%%%%%%%%%%%%%%%%%%%%%%%%%%%%%%%%%%%%%%%
By definition $\Supp\big(\xi(g)\big)=T_g$ for all $g\in\Azs$.
Furthermore, the very definition of~$\xi$ shows that $g$ is delay-free if and
only if $\xi(g)$ is.
Finally, we have the implication
\begin{equation}\label{eq:rk M}
   \text{$g$ is delay-free }\Longrightarrow
   \rank_{\F[t]}\xi(g)=\rank_{\F}\xi(g)|_{t=0}=|T_g|.
\end{equation}
This follows by observing that the nonzero rows of~$\xi(g)|_{t=0}$ form a
matrix in row echelon form with pivot positions in the columns with indices
in $T_g$.

In order to express semi-reducedness in terms of the matrix $\xi(g)$ we need
the following concept.
%%%%%%%%%%%%%%%%%%%%%%%%%%%%%%%%%%%%%%%
\begin{defi}\label{D-DegMat}
Let $M=\big(m_{ab}\big)\in\cM$ and put $d_{ab}:=\deg(m_{ab})$ for $a,b=1,\ldots,n$
(where, as usual, the zero polynomial has degree $-\infty$).
The {\em degree matrix\/} of~$M$ is defined as
\[
  \cD(M)= \begin{pmatrix}
            n d_{11}      & n d_{12} +1 & n d_{13} +2 & \cdots         & n d_{1n}+n-1 \\
            n d_{21} -1   & n d_{22}    & n d_{23} +1 & \cdots         & n d_{2n}+n-2 \\
            n d_{31} -2   & n d_{32} -1 & n d_{33}    & \cdots         & n d_{3n}+n-3 \\
            \vdots        & \vdots      & \vdots      &  \ddots          & \vdots\\
             n d_{n1} -n+1 & n d_{n2} -n+2&\ n d_{n3} -n+3\ & \cdots        & n d_{nn}
            \end{pmatrix}.
\]
Hence $\cD(M)_{ab}=nd_{ab}-a+b\in\N_0\cup\{-\infty\}$ for $a,b=1,\ldots,n$.
We call a row of $\cD(M)$ {\em trivial\/} if all entries are $-\infty$.
The matrix~$M$ is said to be {\em semi-reduced\/} if the maxima in the
non-trivial rows of $\cD(M)$ appear in different columns.
\end{defi}
%%%%%%%%%%%%%%%%%%%%%%%%%%%%%%%%%%%%%%%
Obviously, the trivial rows of $\cD(M)$ correspond to the zero rows of~$M$.
Furthermore, we have the following properties.
%%%%%%%%%%%%%%%%%%%%%%%%%%%%%%%%%%%%%%%
\begin{prop}\label{prop:DegMat}
Let $g\in\Azs$ and $M=\xi(g)=(m_{ab})$.
For $a,b=1,\ldots,n$ put $d_{ab}=\deg m_{ab}$.
\begin{arabiclist}
\item In each row and column of~$\cD(M)$ the entries different from~$-\infty$
      are pairwise different.
      In particular, each non-trivial row has a unique maximum.
\item For all $a,b=1,\ldots,n$ we have
      $\xi\big(g^{(a)}e_b\big)=E_{aa}M E_{bb}=m_{ab}E_{ab}$ and
      $\deg\big(g^{(a)}e_b\big)=\cD(M)_{ab}$.
      Furthermore,
      \begin{equation}\label{eq:degga}
          \deg g^{(a)}=\max_{1\le b\le n}\cD(M)_{ab}
      \text{ for all }a=1,\ldots,n
      \end{equation}
      and
      \[
         \deg g= \max_{1\le a\le n} (\deg g^{(a)})
               = \max_{1 \le a,b \le n} \cD(M)_{ab}.
       \]
\item $g$ is semi-reduced if and only if~$M$ is semi-reduced.
\item For $a\in\Supp(M)$ let $\delta_a:=\max\{d_{ab}\mid 1\leq b\leq n\}$ and
      put $b_a:=\max\{b\mid d_{ab}=\delta_a\}$; that is, $m_{a,b_a}$ is the
      rightmost entry in the $a$-th row of~$M$ having maximal degree $\delta_a$.
      Then $\max_{1\leq b\leq n}\cD(M)_{ab}=\cD(M)_{a,b_a}$.
      As a consequence,~$M$ is semi-reduced if and only if the indices
      $b_a,\ a\in\Supp(M)$, are pairwise different.
\end{arabiclist}
\end{prop}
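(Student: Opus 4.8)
The plan is to prove the four assertions in order, with part~(2) serving as the bridge between $z$-degrees in~$\Azs$ and entries of the degree matrix; once it is available, parts~(1), (3) and~(4) rest entirely on the single arithmetic remark that $\cD(M)_{ab}=nd_{ab}-a+b\equiv b-a\pmod n$ together with $|b-a|\le n-1<n$. For part~(1), if two finite entries $\cD(M)_{ab}$ and $\cD(M)_{ab'}$ lie in one row, then $n(d_{ab}-d_{ab'})=b'-b$ is an integer multiple of~$n$ of modulus at most $n-1$, hence $0$, so $d_{ab}=d_{ab'}$ and $b=b'$; the same argument works for columns. A non-trivial row contains a finite entry, so, its finite entries being pairwise distinct, it has a unique maximum.

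For part~(2), the formula $\xi(g^{(a)}e_b)=E_{aa}ME_{bb}=m_{ab}E_{ab}$ is immediate from $g^{(a)}e_b=e_a g e_b$, from $\xi$ being a ring homomorphism with $\xi(e_a)=E_{aa}$ by\eqnref{eq:xiei}, and from the elementary matrix identity $E_{aa}ME_{bb}=m_{ab}E_{ab}$. For the degree statement I would compute $g^{(a)}e_b$ directly: writing~$g$ as in\eqnref{eq:g} and using $e_a z^{nl+i}=z^{nl+i}\sigma^{i}(e_a)$ (recall $\sigma^n=\id$) together with $\sigma^{i}(e_a)=e_{(a+i-1\bmod n)+1}$ from\eqnref{eq:sigmae}, one finds that the monomials occurring in $g^{(a)}e_b$ are exactly the $z^{nl+i_0}$, $l\ge0$, where $i_0\in\{0,\dots,n-1\}$ is the residue of $b-a$ modulo~$n$, and that their coefficients are, up to the idempotent~$e_b$, precisely the coefficients $c_{i_0,b}^l$ appearing in\eqnref{eq:mab}. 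Since $i_0-(b-a)$ equals $0$ for $b\ge a$ and~$n$ for $b<a$, i.e.\ equals $n\sgn(a-b)$, comparison with\eqnref{eq:mab} gives $\deg(g^{(a)}e_b)=nd_{ab}-a+b=\cD(M)_{ab}$, with the usual convention $-\infty$ when $m_{ab}=0$. Finally, for fixed~$a$ the polynomials $g^{(a)}e_b$, $b=1,\dots,n$, have all coefficients that are $\F$-multiples of the pairwise orthogonal idempotents~$e_b$, so no cancellation occurs in $g^{(a)}=\sum_b g^{(a)}e_b$ and $\deg g^{(a)}=\max_b\deg(g^{(a)}e_b)=\max_b\cD(M)_{ab}$ follows; the same orthogonality argument applied to the Peirce decomposition $g=\sum_a g^{(a)}$, where by\eqnref{eq:gi} the $z^j$-coefficient of $g^{(a)}$ is an $\F$-multiple of $e_{(a+j-1\bmod n)+1}$ and these idempotents are pairwise distinct as~$a$ varies, yields $\deg g=\max_a\deg g^{(a)}=\max_{a,b}\cD(M)_{ab}$.

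For part~(3), let $c(a)$ denote the column carrying the (by part~(1) unique) maximum of a non-trivial row~$a$ of $\cD(M)$. By part~(2), $\deg g^{(a)}=\cD(M)_{a,c(a)}\equiv c(a)-a\pmod n$, and since $c(a)\in\{1,\dots,n\}$ this forces $c(a)=(a+\deg g^{(a)}-1\bmod n)+1$. Hence the columns $c(a)$, $a\in\Supp(M)=T_g$, are pairwise distinct exactly when the numbers $(a+\deg g^{(a)}-1\bmod n)+1$, $a\in T_g$, are; by\eqnref{eq:semireduced} the latter is semi-reducedness of~$g$, and by Definition~\ref{D-DegMat} the former is semi-reducedness of~$M$. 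For part~(4), fix $a\in\Supp(M)$. If $d_{ab}<\delta_a$ then $\cD(M)_{ab}\le n(\delta_a-1)+(b-a)$, whence $\cD(M)_{a,b_a}-\cD(M)_{ab}\ge n+b_a-b\ge n-(n-1)=1>0$; and if $d_{ab}=\delta_a$ with $b\ne b_a$ then $b<b_a$ by the choice of~$b_a$, so $\cD(M)_{ab}=n\delta_a+b-a<n\delta_a+b_a-a=\cD(M)_{a,b_a}$. Thus $\max_b\cD(M)_{ab}=\cD(M)_{a,b_a}$, i.e.\ $c(a)=b_a$, and the stated characterisation of semi-reducedness is precisely that of part~(3).

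I expect the only genuine bookkeeping — hence the main obstacle — to be the explicit identification of $g^{(a)}e_b$ in part~(2) and the verification that $\deg(g^{(a)}e_b)=nd_{ab}-a+b$, which amounts to correctly tracking the $\sgn$-shift in\eqnref{eq:mab}; everything else reduces to the observation that $\cD(M)_{ab}\equiv b-a\pmod n$ while $|b-a|<n$.
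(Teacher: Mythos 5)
Your proposal is correct and follows essentially the same line as the paper's proof: part~(2) is established by multiplicativity of $\xi$ and the direct computation of $e_a g e_b$ via\eqnref{eq:sigmae} and\eqnref{eq:mab}, and parts~(1), (3), (4) then rest on the congruence $\cD(M)_{ab}\equiv b-a\pmod n$ with $|b-a|<n$. The only cosmetic difference is that you prove~(3) first via the column index~$c(a)$ and then show $c(a)=b_a$ for~(4), whereas the paper first establishes $\max_b\cD(M)_{ab}=\cD(M)_{a,b_a}$ (part~4) and deduces~(3) from the identity $a+\deg g^{(a)}-1\bmod n=b_a-1$; the underlying computation is identical.
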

%%%%%%%%%%%%%%%%%%%%%%%%%%%%%%%%%%%%%%%
\begin{proof}
Part~(1) is trivial.
The first assertion of~(2) follows from\eqnref{eq:xiei} and the multiplicativity
of~$\xi$.
As for the degree of $g^{(a)}e_b$ let~$g$ be as in\eqnref{eq:g}.
Using\eqnref{eq:sigmae} we obtain
\[
  e_ag=\sum_{l\geq0}\sum_{i=0}^{n-1}z^{nl+i}c^l_{i,(i+a-1\mod n)+1}
                                  e_{(i+a-1\mod n)+1}.
\]
Notice that for any $b=1,\ldots,n$ we have
\[
  e_{(i+a-1\mod n)+1}e_b=e_b\Longleftrightarrow(i+a-1\mod n)+1=b
  \Longleftrightarrow i=b-a+\sgn(a-b)n
\]
while $e_{(i+a-1\mod n)+1}e_b=0$ for all other values of~$i\in\{0,\ldots,n-1\}$.
Thus
\[
 g^{(a)}e_b=e_age_b=\sum_{l\geq 0}z^{n\big(l+\sgn(a-b)\big)+b-a}
                                        c^l_{b-a+\sgn(a-b)n,b}e_b.
\]
As a consequence,
\[
  \deg(g^{(a)}e_b)=\Big(\!\!\max\{l\mid c^l_{b-a+\sgn(a-b)n,b}\not=0\}
                     +\sgn(a-b)\Big)n+b-a
     =n\deg(m_{ab})+b-a,
\]
where the last identity follows from\eqnref{eq:mab}.
This shows $\deg(g^{(a)}e_b)=\cD(M)_{ab}$.
The last two statements of~(2) are a direct consequence.
For parts~(3) and~(4) let~$\delta_a$ and~$b_a$ be as in~(4).
Then
$\max\{\cD(M)_{ab}\mid 1\leq b\leq n\}
 =\max\{n d_{ab}-a+b\mid 1\leq b\leq n\}
 %=n\delta_a-a+\max\{b\mid d_{ab}=\delta_a\}
 =n\delta_a-a+b_a=\cD(M)_{a,b_a}$.
Now the rest of~(4) is a consequence of Definition~\ref{D-DegMat}, whereas~(3)
follows from\eqnref{eq:semireduced} along with part~(2) since
$a+\deg g^{(a)}-1\mod n=b_a-1$.
\end{proof}

Let us consider some examples.
%%%%%%%%%%%%%%%%%%%%%%%%%%%%%%%%
\begin{exa}\label{Exa:CCCMDS}
Let~$\alpha$ be a primitive element of~$\F$ and put $n=q-1$.
Then $x^n-1=\prod_{i=0}^{n-1}(x-\alpha^i)$.
Consider the isomorphism
$\phi: A'\longrightarrow A,\ f\longmapsto
  [f(1),f(\alpha),\ldots,f(\alpha^{n-1})]$
of $A':=\F[x]/{\ideal{x^n-1}}$ with~$A$.
Then the primitive idempotents of~$A'$ are given by
$e_a:=\gamma_a\prod_{i\not=a-1}(x-\alpha^i)$ for $a=1,\ldots,n$ and some
constants $\gamma_a\in\F^\times$.
Choose now the automorphism~$\sigma\in\AutF(A')$ defined via
$\sigma(x)=\alpha^{-1} x$.
Then one easily checks that\eqnref{eq:sigmae} is satisfied.
This example has also been studied in \cite[Prop.~4.2]{GL06p} (since in that
paper the automorphism is given by $x\mapsto\alpha x$ one has to
replace~$\alpha$ by~$\alpha^{-1}$ in the idempotents in order to get
back the ordering as in~\cite{GL06p}).
In \cite[Prop.~4.2, Thm.~2.1]{GL06p} it has been shown that for any
$1\leq\delta\leq n-1$ the $\sigma$-cyclic submodule $\cC=\p^{-1}(\lideal{g})$,
where $g=e_1\sum_{i=0}^{\delta}z^i$, gives rise to a 1-dimensional MDS
convolutional code.
Hence the distance of that code is $n(\delta+1)$, which is the maximum value
among all 1-dimensional codes of length~$n$ and degree~$\delta$
(for MDS convolutional codes see~\cite{RoSm99}).
Let us compute an encoder matrix~$G\in\F[z]^{1\times n}$ of~$\cC$.
From Theorem~\ref{T-ideals}(4) we know that $G=\p^{-1}(g)$ is such an encoder
since $g=g^{(1)}$.
The first primitive idempotent in~$A'$ is given by
$e_1=\frac{1}{n}\sum_{i=0}^{n-1}x^i$ as one can easily verify via the
isomorphism~$\phi$.
Thus,
\[
  g=\frac{1}{n}\sum_{i=0}^{n-1}x^i\sum_{j=0}^{\delta}z^j
   =\frac{1}{n}\sum_{j=0}^{\delta}z^j\sum_{i=0}^{n-1}\sigma^j(x^i)
   =\frac{1}{n}\sum_{j=0}^{\delta}z^j\sum_{i=0}^{n-1}\alpha^{-ji}x^i.
\]
Using the map~$\p$ from\eqnref{eq:p} we obtain
\[
   G=\frac{1}{n}\Big(\sum_{j=0}^{\delta}z^j,\sum_{j=0}^{\delta}z^j\alpha^{-j},
                   \ldots,\sum_{j=0}^{\delta}z^j\alpha^{-(n-1)j}\Big)
    =\frac{1}{n}\sum_{j=0}^{\delta}z^j
           (1,\,\alpha^{-j},\,\alpha^{-2j},\ldots,\, \alpha^{-(n-1)j}).
\]
In this specific example the matrix $M:=\xi(g)$ has a particularly simple form.
Indeed, since $g=g^{(1)}=\sum_{i=0}^{\delta}z^ie_{1+i}$ and $\delta\leq n-1$ we
obtain
\begin{align*}
   &\mbox{}\hspace*{6.7em}\overbrace{\hspace*{2.8em}}^{\delta+1}\\[-1.3ex]
   &M:=\xi(g)=\left(
      \begin{smallmatrix} 1&1&\ldots&1&0&\ldots&0\\[.5ex]
                             \ \\ & & & \mbox{\huge 0} \\[.5ex] \
      \end{smallmatrix}\right)
   \text{ and }\cD(M)=\left(
       \begin{smallmatrix}0&1&\ldots\,\ \delta\ -\infty&\ldots&-\infty\\[.5ex]
           \ \\  & &  \mbox{\huge$-\infty$} \\[.6ex] \
       \end{smallmatrix}\right).
\end{align*}
The degree equation\eqnref{eq:degga} is obvious.
Notice that the matrix~$M$ is idempotent.
Thus, $g$ is an idempotent generator of the left ideal~$\lideal{g}$.
\end{exa}
%%%%%%%%%%%%%%%%%%%%%%%%%%%%%%%%%

%%%%%%%%%%%%%%%%%%%%%%%%%%%%%%%%%%%%
\begin{exa}\label{Exa:reduction}
Let $q=5$ and $n=4$. Consider the matrix
\[
     M=\begin{pmatrix} 2+t&1&4&4\\0&1&3&0\\4t&2t&1+t&1\\0&0&0&0\end{pmatrix}
     \in\cM.
\]
Obviously,~$M$ is delay-free.
The corresponding polynomial $g=\xi^{-1}(M)\in\Azs$ is given by
\[
  g=2e_1+e_2+e_3+z(e_2+3e_3+e_4)+z^2(4e_1+4e_3)+z^3(2e_2+4e_4)+z^4(e_1+e_3),
\]
and its components are
\[
   g^{(1)}=2e_1+ze_2+4z^2e_3+4z^3e_4+z^4e_1,\ g^{(2)}=e_2+3ze_3,\
   g^{(3)}=e_3+ze_4+4z^2e_1+2z^3e_2+z^4e_3.
\]
The matrix, and thus~$g$, are not semi-reduced as we can see directly from
Proposition~\ref{prop:DegMat}(4) or from the degree matrix
\[
   \cD(M)=\begin{pmatrix}4&1&2&3\\-\infty&0&1&-\infty\\2&3&4&1\\
   -\infty&-\infty&-\infty&-\infty\end{pmatrix}.
\]
Using two steps of elementary row operations one can bring~$M$ into
semi-reduced form without changing the left ideal~$\lideal{M}$.
Indeed, one easily checks that
\begin{equation}\label{eq:barM}
  \begin{pmatrix}1&0&1&0\\0&1&0&0\\0&0&1&0\\0&0&0&1\end{pmatrix}
  \begin{pmatrix}1&0&0&0\\0&1&0&0\\0&3t&1&0\\0&0&0&1\end{pmatrix}
  \begin{pmatrix} 2+t&1&4&4\\0&1&3&0\\4t&2t&1+t&1\\0&0&0&0\end{pmatrix}
  =\begin{pmatrix} 2&1&0&0\\0&1&3&0\\4t&0&1&1\\0&0&0&0\end{pmatrix}=:\bar{M}.
\end{equation}
Since the two leftmost matrices are in~$\cM^\times$, see
Proposition~\ref{prop:unitsM}, we have $\lideal{M}=\lideal{\bar{M}}$.
Furthermore,~$\bar{M}$ is semi-reduced as one can easily see with the help of
Proposition~\ref{prop:DegMat}(4).
The corresponding polynomial~$\bar{g}:=\xi^{-1}(\bar{M})$ is given by
$\bar{g}=2e_1+e_2+e_3+z(e_2+3e_3+e_4)+z^24e_1$, which is semi-reduced according
to Proposition~\ref{prop:DegMat}(3).
We will come back to this example later on in Example~\ref{Exa:reduction2}.
\end{exa}
%%%%%%%%%%%%%%%%%%%%%%%%%%%%%%%%%%%%

The procedure of semi-reducing matrices in~$\cM$ as in the previous example
always works.
This is shown in the following result.

%%%%%%%%%%%%%%%%%%%%%%%%%%%%%%%%%
\begin{theo}\label{thm:Munits}
A matrix in~$GL_n(\F[t])$ is said to be an elementary unit of~$\cM$ if it is of
any of the following types:
\begin{romanlist}
\item $\sum_{i\not=a}E_{ii}+\alpha E_{aa}$ for some $\alpha\in\F^\times$ and
      $a=1,\ldots,n$,
\item $I_n+t^N\alpha E_{ab}$ for some $N\geq 0,\; \alpha\in\F$ and
      $1\leq a<b\leq n$,
\item $I_n+t^N\alpha E_{ab}$ for some $N>0,\;\alpha\in\F$ and $1\leq b<a\leq n$.
\end{romanlist}
Then each matrix in~$\cM$ can be brought into semi-reduced form via left
multiplication by finitely many elementary units of~$\cM$.
As a consequence, each unit $M\in\cM^\times$ can be written as a product of
elementary units of~$\cM$.
\end{theo}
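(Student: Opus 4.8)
The plan is to give an explicit column-reduction algorithm. First I would set up the invariant: for a matrix $M\in\cM$, by Proposition~\ref{prop:DegMat}(4) semi-reducedness means that the indices $b_a$ (the rightmost column achieving the maximal degree $\delta_a$ in row $a$) are pairwise distinct over $a\in\Supp(M)$. I will measure progress by the number of \emph{collisions}, i.e. pairs $a<a'$ in $\Supp(M)$ with $b_a=b_{a'}=:b$. If there are no collisions we are done. Otherwise pick such a colliding pair and aim to reduce one of the two degrees $\delta_a,\delta_{a'}$ in column $b$, or at least to change one of the $b_a$'s without creating a worse situation, by adding a suitable multiple $t^N\alpha$ of one row to the other — which is exactly left multiplication by an elementary unit of type (ii) or (iii), the type being dictated by whether the row added to is above or below the other.

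The key computation is the following. Suppose $a<a'$, $b_a=b_{a'}=b$, and WLOG $\delta_{a'}\ge\delta_a$ (degrees of $m_{ab}$ and $m_{a'b}$). Write $\delta:=\delta_{a'}-\delta_a\ge 0$. I want to add $-\beta t^{\delta}$ times row $a'$ to row $a$, where $\beta\in\F^\times$ is chosen to cancel the leading coefficient of $m_{ab}$ against $t^{\delta}$ times that of $m_{a'b}$ (note such $\beta$ exists precisely because the entries of the two rows lie, via $\xi$, in ideals $\ideal{e_\ast}$, so leading coefficients are scalars in $\F$). The subtlety is that this is allowed as an elementary unit only if the resulting multiplier is a legitimate type (ii)/(iii) matrix: if $a<a'$ we are adding a multiple of a \emph{lower} row to an \emph{upper} row, which is type (ii) and is permitted for any $N=\delta\ge0$; good. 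After the operation, the $b$-entry of row $a$ has strictly smaller degree than $\delta_a$ (leading term cancelled), so the new $b_a$ is some column $\ne b$, OR the whole row $a$ dropped in degree. I then argue this strictly decreases a suitable potential function.

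The main obstacle — and the place to be careful — is choosing the potential function so that this move always decreases it, because adding $t^{\delta}$ times row $a'$ can raise degrees of \emph{other} entries of row $a$ and thereby move $b_a$ into a fresh collision. The fix is the standard lexicographic one: order rows so that we always operate to reduce the row with the currently largest $\delta_a$ among colliders (or, on ties, use $\sum_a \delta_a$ as primary potential and the multiset of $b_a$'s as tie-breaker, observing that $\delta_a$ never increases for the row we add to, and the row $a'$ we add \emph{from} is untouched). Concretely: among all colliding pairs choose one with $\delta_{a'}$ maximal; then $m_{ab}$ has degree $\delta_a\le\delta_{a'}=\delta_{a'}$ and adding $\beta t^{\delta}\cdot(\text{row }a')$ produces in row $a$ only entries of degree $\le\delta_{a'}$ in column $b$ (now $<\delta_a$ there) and $\le\delta_{a'}-\delta+\delta=\delta_{a'}$ elsewhere — but since $b_{a'}=b$, \emph{every} other entry of row $a'$ has degree $<\delta_{a'}$, hence the added terms in row $a$, columns $\ne b$, have degree $<\delta_{a'}$; combined with $\delta_a\le\delta_{a'}$ this shows $\delta_a$ does not increase, and $\sum \delta$ over $\Supp(M)$ stays the same or drops, while if it stays the same the number of entries of row $a$ achieving $\delta_a$ and the value $b_a$ strictly decrease in the chosen lexicographic order. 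Termination follows since the potential is a well-ordered tuple. This handles both a row dropping out of the support (the all-$-\infty$ case) and the generic case, and the proof that no new collision is \emph{worse} is exactly this bookkeeping.

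Finally, for the last sentence of the theorem: given $M\in\cM^\times$, the algorithm produces elementary units $U_1,\dots,U_r$ with $U_r\cdots U_1 M$ semi-reduced; but a semi-reduced unit lies in $A^\times$ by the matrix translation of \eqnref{eq:semireduced2} (equivalently: a semi-reduced $M$ with $\det M\in\F^\times$ forces, by Proposition~\ref{prop:DegMat}(2), all $d_{ab}=0$, so $M$ has scalar entries, is upper triangular by membership in $\cM$, and invertibility makes its diagonal nonzero — hence $M$ is a product of type (i) and (ii) elementary units by ordinary Gaussian elimination on a triangular invertible matrix). Therefore $M = U_1^{-1}\cdots U_r^{-1}\cdot(\text{product of elementary units})$, and the inverse of each elementary unit of type (i), (ii), (iii) is again one of the same type, so $M$ is a product of elementary units of $\cM$.
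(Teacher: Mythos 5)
There is a genuine arithmetic gap in the key step, and it is exactly where the case analysis in the paper's proof does real work. You fix a colliding pair $a<a'$, set $\delta=\delta_{a'}-\delta_a\ge 0$ after your ``WLOG,'' and propose to add $-\beta t^{\delta}$ times row $a'$ to row $a$, claiming this cancels the leading term of $m_{ab}$. But $t^{\delta}m_{a'b}$ has degree $\delta+\delta_{a'}=2\delta_{a'}-\delta_a$, which \emph{exceeds} $\delta_a$ whenever $\delta>0$; so the operation raises the degree of $m_{ab}$ rather than cancelling its leading term. Cancellation of the leading term of $m_{ab}$ requires $N=\delta_a-\delta_{a'}$, hence $\delta_a\ge\delta_{a'}$ --- the \emph{opposite} of your WLOG. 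And this is not a harmless relabelling, because the two rows are not symmetric once $a<a'$ is fixed: adding a multiple of the lower row to the upper row is a type~(ii) unit (allowed for $N\ge 0$), while adding a multiple of the upper row to the lower row is a type~(iii) unit (allowed only for $N>0$). So the correct dichotomy, which your proposal collapses, is: if $\delta_a\ge\delta_{a'}$, reduce row~$a$ by a type~(ii) unit with $N=\delta_a-\delta_{a'}\ge0$; if $\delta_{a'}>\delta_a$, reduce row~$a'$ by a type~(iii) unit with $N=\delta_{a'}-\delta_a>0$. The latter case is where the hypothesis $N>0$ in type~(iii) is precisely what saves you, and your writeup never acknowledges it.

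This is in fact the same dichotomy the paper carries out, but the paper phrases it via the degree matrix $\cD(M)$: for $b>a$ one has $\cD(M)_{bj}<\cD(M)_{aj}\iff \deg m_{bj}\le\deg m_{aj}$, so comparing $\cD(M)_{ac}$ and $\cD(M)_{bc}$ is equivalent to comparing the raw degrees $d_{ac},d_{bc}$, and the two cases land exactly on the two unit types. The payoff of the paper's formulation is that the potential function is immediate: the row maxima of $\cD(M)$ are the component degrees, exactly one of them strictly decreases in each step and the others are untouched, so termination is a one-liner. Your raw-degree potential $\bigl(\sum_a\delta_a,\ \text{multiset of }b_a\bigr)$ with lexicographic tie-break can be made to work (once the case split is fixed: $\delta$ of the reduced row never increases, and if it stays the same then its $b_a$ strictly drops, because all columns $j>b$ of the \emph{other} row have degree strictly below its $\delta$), but you would need to spell out the well-ordering on the multiset; the paper's $\cD$-based argument avoids the tie-breaker entirely. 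Your treatment of the final sentence (semi-reduced unit $\Rightarrow$ constant diagonal, then type~(i) units) is fine via\eqnref{eq:semireduced2}; your alternative determinant argument (``$\det M\in\F^\times$ forces all $d_{ab}=0$ by Proposition~\ref{prop:DegMat}(2)'') is not justified as written, since cancellation among permutation terms is not ruled out, but you don't need it.
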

%%%%%%%%%%%%%%%%%%%%%%%%%%%%%%%%%
\begin{proof}
It is trivial that each of the matrices in (i) --~(iii) is indeed
in~$\cM^\times$, and that the inverse of such an elementary unit is an
elementary unit of the same type.
\\
Let now $M=(m_{ab})\in\cM$ and assume that~$M$ is not semi-reduced.
Then, according to Proposition~\ref{prop:DegMat} there exist indices $(a,c)$
and $(b,c)$ where $b>a$ such that $\cD(M)_{a,c}$ and $\cD(M)_{b,c}$ are the
maxima in the $a$-th and $b$-th row of~$\cD(M)$, respectively.
Using the definition of the matrix $\cD(M)$ one easily checks that for $b>a$
and $j=1,\ldots,n$
\begin{equation}\label{eq:DMentries}
   \cD(M)_{bj}< \cD(M)_{aj}\Longleftrightarrow \deg(m_{bj})\leq \deg(m_{aj}).
\end{equation}
We want to transform~$M$ into semi-reduced form via left multiplication by
elementary units of~$\cM$.
Consider first the case $\cD(M)_{bc}\leq\cD(M)_{ac}$.
Put $\hat{M}=(\hat{m}_{ij})=(I_n+t^N\alpha E_{ab})M$ where
$N:=\deg(m_{ac})-\deg(m_{bc})\geq 0$ and $\alpha\in\F$ is such that
$\deg(\hat{m}_{ac})=\deg(m_{ac}+t^N\alpha m_{bc})<\deg(m_{ac})$.
This is possible due to\eqnref{eq:DMentries}.
Then $\cD(\hat{M})_{ac}<\cD(M)_{ac}$.
Furthermore, for $j\not=c$ we have
$\hat{m}_{aj}=m_{aj}+t^{N}\alpha m_{bj}$ and thus
\begin{align*}
  \cD(\hat{M})_{aj}&=n\deg(\hat{m}_{aj})-a+j
                   \leq n\max\{\deg(m_{aj}),N+\deg(m_{bj})\}-a+j\\
      &=\max\{n\deg(m_{aj})-a+j,nN+n\deg(m_{bj})-a+j\}\\
      &=\max\{\cD(M)_{aj},\cD(M)_{bj}+nN+b-a\}\\
      &<\max\{\cD(M)_{ac},\cD(M)_{bc}+nN+b-a\},
\end{align*}
where the last inequality holds true due to the uniqueness of the row maxima
in $\cD(M)$ (see Proposition~\ref{prop:DegMat}(1)).
Since
\[
  \cD(M)_{bc}+nN+b-a=n\deg(m_{bc})-b+c+nN+b-a=n\deg(m_{ac})-a+c=\cD(M)_{ac}
\]
the above results in $\cD(\hat{M})_{aj}< \cD(M)_{ac}$ for all $j=1,\ldots,n$.
\\
If $\cD(M)_{bc}>\cD(M)_{ac}$, then put $\hat{M}:=(I_n+t^N\alpha E_{ba})M$ where
$N> 0$ and $\alpha\in\F$ are chosen such that
$\deg(m_{bc}+t^N\alpha m_{ac})<\deg(m_{bc})$.
Again, this is possible by virtue of\eqnref{eq:DMentries}.
Arguing the same way we obtain likewise $\cD(\hat{M})_{bj}< \cD(M)_{bc}$ for
all $j=1,\ldots,n$.
Summarizing we see that if~$M$ is not semi-reduced we may apply one of the two
transformations given above, and they both strictly decrease the maximum value
in one of the rows of~$\cD(M)$ while all other rows remain unchanged.
Altogether this results in a reduction procedure that must stop after finitely
many steps with a semi-reduced matrix.
\\
The last statement of the theorem follows directly from\eqnref{eq:semireduced2}
together with the use of elementary units of Type~(i).
\end{proof}

One easily checks that the elementary units in Theorem~\ref{thm:Munits}
correspond to the units
\[
  \xi^{-1}\big(\sum_{i\not=a}E_{ii}+\alpha E_{aa}\big)
    =\alpha e_a+\sum_{i\not=a}e_i,\quad
  \xi^{-1}\big(I_n+t^N\alpha E_{ab}\big)=1+z^{Nn+b-a}\alpha e_b,
\]
in~$\Azs$.
Of course, in the second case we have $a\not=b$ and $N>0$ if $b<a$.
These units in~$\Azs$ have been studied in detail in the paper~\cite{GL06}.
In a more general context it has been shown in~\cite[Lem.~3.7]{GL06} that
$1+z^d\alpha e_b$, where $d>0$, is a unit if and only if $n\nmid d$, and that
each unit in~$\Azs$ can be written as the product of finitely many units of the
types above.
This corresponds exactly to our last statement in Theorem~\ref{thm:Munits} above.
With that theorem we even see that left reduction to a semi-reduced polynomial
in the skew-polynomial ring~$\Azs$, cf.~\cite[Cor.~4.13]{GS04}, simply means
elementary row reduction of the matrices in~$\cM$.

The following result tells us as to when a matrix in~$\cM$ can be completed to
a unit in~$\cM$ by filling in suitable entries in the zero rows.
This result will be important later on when studying whether a cyclic submodule
is a direct summand.
%%%%%%%%%%%%%%%%%%%%%%%%%%%%%%%%%%%%%%%
\begin{lemma} \label{lem:replacing 0 rows}
Let $M\in \cM$ be delay-free and $\Supp(M)=\set{i_1, \ldots, i_k}$, where
$i_1<\ldots<i_k$.
Denote the rows of~$M$ by $M_1,\ldots, M_n$.
Then the following are equivalent.
\begin{alphalist}
\item The matrix
      $\widetilde{M}:=\left( \begin{smallmatrix}M_{i_1}\\ \vdots\\ M_{i_k}
       \end{smallmatrix}\right)$ is basic.
\item There exist vectors $\widehat{M}_i\in\F[t]^{1\times n}$ for
      $i\in\{1,\ldots,n\}\backslash\{i_1,\ldots,i_k\}$ such that the matrix
      \[
        N=\begin{pmatrix}N_1\\ \vdots\\ N_n\end{pmatrix}\in\F[t]^{n\times n},
        \text{ where }
        N_i:=\left\{\begin{array}{ll}
           M_i&\text{if } i\in\{i_1,\ldots,i_k\}\\[.6ex]
           \widehat{M}_i&\text{if }
             i\in\{1,\ldots,n\}\backslash\{i_1,\ldots,i_k\}
             \end{array}\right.
      \]
      is in $\cM^\times$.
\end{alphalist}
\end{lemma}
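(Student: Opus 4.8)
The plan is to prove the two implications separately, with the substantive work in the direction (a) $\Rightarrow$ (b). For (b) $\Rightarrow$ (a), suppose such a completion $N\in\cM^\times$ exists. Then by Proposition~\ref{prop:unitsM} we have $\det N\in\F^\times$, so $N\in GL_n(\F[t])$, hence $N$ is basic as a matrix over $\F[t]$ and remains so after deleting the rows indexed by $\{1,\ldots,n\}\backslash\{i_1,\ldots,i_k\}$; the deleted rows are exactly the $\widehat{M}_i$, so what is left is $\widetilde M$. Thus $\widetilde M$ is basic. (One should phrase ``basic'' here in the polynomial-matrix sense of Section~2, i.e. full rank at every point of $\overline\F$; the argument is immediate from the fact that a submatrix of rows of a unimodular matrix is basic.)

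For (a) $\Rightarrow$ (b): assume $\widetilde M$ is basic. First I would pass to the skew-polynomial side. Set $g:=\xi^{-1}(M)\in\Azs$. Since $M$ is delay-free, so is $g$, and by\eqnref{eq:compM} the nonzero rows of $M$ are exactly $\xi(g^{(i_1)}),\ldots,\xi(g^{(i_k)})$ with $T_g=\{i_1,\ldots,i_k\}$. I would like to replace $g$ by a semi-reduced polynomial with the same left ideal and the same support and delay-freeness, so that Theorem~\ref{T-ideals}(3) applies. Using Theorem~\ref{thm:Munits}, left-multiply $M$ by a product $U$ of elementary units of $\cM$ to reach a semi-reduced matrix $\bar M=UM$; note that each elementary unit, being in $\cM^\times$, has $1$ on the diagonal and a single off-diagonal perturbation, so it does not change which rows are zero, i.e. $\Supp(\bar M)=\Supp(M)$, and an inspection of the types in Theorem~\ref{thm:Munits} shows $\bar M$ is again delay-free (the diagonal entries at $t=0$ of the nonzero rows are unchanged, since the perturbations added to a row either come from a row outside the support — which is zero — or carry a factor $t^N$ with $N\ge1$ when they would affect an on-or-below-diagonal position). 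Moreover $\lideal{\bar M}=\lideal{M}$ and $\bar\Sigma:=\bigl(\begin{smallmatrix}\bar M_{i_1}\\ \vdots\\ \bar M_{i_k}\end{smallmatrix}\bigr)=U'\widetilde M$ for the corresponding $k\times k$ block $U'$ of $U$ restricted to the support rows, which is itself a unit over $\F[t]$; hence $\bar\Sigma$ is basic iff $\widetilde M$ is. So it suffices to prove the equivalence for the semi-reduced, delay-free matrix $\bar M$, i.e. we may assume $M$ itself is semi-reduced.

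Now with $M$ semi-reduced and delay-free, $\bar g:=\xi^{-1}(M)$ is semi-reduced (Proposition~\ref{prop:DegMat}(3)) and delay-free, and by Remark~\ref{R-CCCsemireduced} together with Theorem~\ref{T-ideals}(4) the matrix $G=\widetilde M$ (read through $\p^{-1}$, or directly as rows over $\F[t]$) is the encoder of the cyclic submodule $\cC=\p^{-1}(\lideal{\bar g})$, and $\widetilde M$ basic is equivalent to $\cC$ being a direct summand of $\F[z]^n$. By Theorem~\ref{T-ideals}(3) (again via Remark~\ref{R-CCCsemireduced}) this holds iff $\bar g=\sum_{l\in T_{\bar g}}u^{(l)}$ for some unit $u\in\Azs^\times$; i.e. iff there is a unit $u$ whose components over the support $\{i_1,\ldots,i_k\}$ are exactly those of $\bar g$. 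Translating back through $\xi$: $\xi(u)\in\cM^\times$ is then precisely a matrix whose support rows agree with those of $M$ and whose remaining rows $\widehat M_i$ supply the completion, which is exactly statement (b). Conversely a completion $N\in\cM^\times$ gives the unit $u=\xi^{-1}(N)$ with $\bar g=\sum_{l}u^{(l)}$, closing the loop. The main obstacle is the bookkeeping in the reduction step: one must verify carefully that the elementary-unit reduction of Theorem~\ref{thm:Munits} preserves both the support and the delay-free property, and that it transforms $\widetilde M$ by a genuine $\F[t]$-unit so that basicness is preserved; once that is in place, the statement is a direct repackaging of Theorem~\ref{T-ideals}(3)+(4) through the isomorphism $\xi$.
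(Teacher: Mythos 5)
Your reduction to the semi-reduced case via Theorem~\ref{thm:Munits} is essentially sound (the elementary units used in that reduction only mix rows inside $\Supp(M)$, preserve the support by a rank argument, and preserve delay-freeness), but the pivotal step after that contains a genuine gap. You write that ``$G=\widetilde M$ (read through $\p^{-1}$, or directly as rows over $\F[t]$)'' and then conclude that $\widetilde M$ is basic if and only if $\cC$ is a direct summand. This conflates two genuinely different matrices: $G$ has rows $\p^{-1}(g^{(i_j)})\in\F[z]^n$, a $k\times n$ matrix over $\F[z]$, while $\widetilde M$ consists of the nonzero rows of $\xi(g)$, a $k\times n$ matrix over $\F[t]$. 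Basicness of one over $\F[z]$ and of the other over $\F[t]$ are \emph{a priori} unrelated; the passage between them is not a formal consequence of $\xi$ being a ring isomorphism, because the $\F[t]$-module structure on the rows of $\widetilde M$ is not the $\F[z]$-module structure that $\xi$ transports. Concretely, for $n=2$ and $M=\left(\begin{smallmatrix}t&1\\0&0\end{smallmatrix}\right)$ (not delay-free), $\widetilde M=(t,1)$ is basic over $\F[t]$, whereas the corresponding $G=\p^{-1}(g^{(1)})$ is divisible by $z$ and hence not basic. So the equivalence you invoke is at least as deep as the Lemma you are trying to prove, and it needs delay-freeness in an essential way that your argument never engages with. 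In short, the claim ``$\widetilde M$ basic $\Leftrightarrow$ $\cC$ direct summand'' is exactly what the combination of this Lemma with Theorem~\ref{T-ideals}(3) is used to \emph{establish} in the paper (cf.\ Theorem~\ref{T-sCCC}); taking it for granted here is circular.

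The paper's proof stays entirely on the $\F[t]$-side and never passes through $\Azs$ or Theorem~\ref{T-ideals}. It starts from the standard unimodular completion of $\widetilde M$ inside $GL_n(\F[t])$, then uses the delay-freeness of $M$ (specifically $m_{i_j i_j}(0)\neq 0$) to clear, by row operations confined to the complementary block, the entries below the pivots $m_{i_j i_j}$ into $\ideal{t}$, and finally re-interleaves the rows so that the pivots land on the diagonal and everything below the diagonal lies in $\ideal{t}$; Proposition~\ref{prop:unitsM} then certifies membership in $\cM^\times$. If you wish to repair your approach, you would need a direct, self-contained argument that for delay-free $M$ the basicness of $\widetilde M$ over $\F[t]$ is equivalent to the existence of a unit completion in $\cM$ --- which is precisely the direct matrix argument the paper gives, not a consequence of the skew-polynomial machinery.
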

%%%%%%%%%%%%%%%%%%%%%%%%%%%%%%%%%%%%%%%

Since every matrix in~$\cM^\times$ is delay-free it is clear that the
delay-freeness is necessary for the implication~(a) $\Longrightarrow$~(b) to be
true.

For the proof the following suggestive notation will be helpful.
Arbitrary entries of a matrix in $\F[t]^{r \times n}$ will be indicated by an
asterisk $\ast$, whereas entries that are multiples of~$t$ will be denoted by the
symbol $\ideal{t}$.
Thus, the elements of~$\cM$ are just the matrices of the form
\[
    \left(  \begin{smallmatrix}
            \ast   & \ast   & \cdots & \ast   \\
            \ideal{t}    & \ast   &        & \ast   \\
            \vdots & \ddots & \ddots & \vdots \\
            \ideal{t}    & \cdots & \ideal{t}    & \ast   \\
        \end{smallmatrix}   \right) .
\]

\begin{proof}
Only the implication ``(a)~$\Rightarrow$~(b)'' requires proof.
It is a well-known fact that, since the $k$-minors of $\widetilde{M}$ are
coprime, there exists a matrix $\widehat{M} \in \F[t]^{(n-k)\times n}$ such
that
$N:=\left( \begin{smallmatrix} \widetilde{M} \\ \widehat{M} \end{smallmatrix}
    \right) \in GL_n(\F[t])$.
Therefore, $N$ is of the form
\[
    \left(  \begin{smallmatrix}
        \ideal{t} & \cdots & m_{i_1 i_1} & * & \cdots &        &        &             &   &        & * \\
        \ideal{t} & \cdots & \cdots & \cdots & m_{i_2 i_2} & * & \cdots &             &   &        & * \\
        \vdots &     &        &        &        &        &        &             &   &   & \vdots \\
        \ideal{t} & \cdots & \cdots & \cdots & \cdots & \cdots & \cdots & m_{i_k i_k} & * & \cdots & * \\[.7ex] \hline\\[.7ex]
            &        &        &        & \widehat{M}& &   &             &   &        &   \\[.7ex]
        \end{smallmatrix}   \right),
\]
where all entries in the upper block and to the left of $m_{i_j,i_j}$ are of the type
$\ideal{t}$.
We will show now that with suitable row operations we can transform~$N$ into a
matrix being in~$\cM$ without altering the rows of the upper part~$\widetilde{M}$.
By delay-freeness of~$M$ we have $m_{i_j i_j}(0) \ne 0$ for all $j=1,\ldots,k$.
Hence, by adding suitable multiples of rows of~$\widetilde{M}$ to $\widehat{M}$
we can transform~$N$ into a matrix of the form
\begin{equation}\label{eq:N}
    \left(  \begin{smallmatrix}
        \ideal{t} & \cdots & m_{i_1 i_1} & * & \cdots &        &        &        &        &        & * \\
        \ideal{t} & \cdots & \cdots & \cdots & m_{i_2 i_2} & * & \cdots &        &        &        & * \\
        \vdots &     &        &        &        &        &        &        &        &   & \vdots \\
        \ideal{t} & \cdots & \cdots & \cdots & \cdots & \cdots & \cdots & m_{i_k i_k} & * & \cdots & * \\[.7ex]\hline\\[.7ex]
            &        & \ideal{t}    &        & \ideal{t}    &        &        & \ideal{t}    &        &        &   \\
        *   &        & \vdots & *      & \vdots & *      &        & \vdots &        & *      &   \\
            &        & \ideal{t}    &        & \ideal{t}    &        &        & \ideal{t}    &        &        &   \\
        \end{smallmatrix} \right) \in GL_n(\F[t]),
\end{equation}
where the upper part $\widetilde{M}$ is unaltered.
Therefore, we may assume without loss of generality that $N$ is as
in\eqnref{eq:N}.
For the next step observe that for any column vector $a\in\F[t]^{l \times 1}$
there exists $U\in GL_l(\F[t])$ such that $Ua$ is of the form
\[
    \left(  \begin{smallmatrix} * \\ \ideal{t} \\ \vdots \\ \ideal{t} \\
    \end{smallmatrix} \right).
\]
Thus, working consecutively from the left to the right and applying suitable
elementary row operations on the lower block of~$N$ we can bring
$\widehat{M}$ into the form

\begin{align*}
    &\widehat{M}=\left( \begin{array}{cccc|cccc|ccc}
    \ast   &        &        & \ideal{t}    &        &        &        & \ideal{t}    &        &        &  \\
    \ideal{t}& \ddots &      & \vdots       &        &        &        & \vdots &        &        &  \\
           & \ddots & \ast   & \ideal{t}    &        &        &        & \vdots &        &        &  \\\hline
    \ideal{t}&\cdots& \ideal{t}& \ideal{t}  & \ast   &    &        & \ideal{t}& \phantom{\vdots}    & &\\
    \vdots &        & \vdots & \vdots & \ideal{t}    & \ddots &        & \vdots &        &        &  \\
    \vdots &        & \vdots & \vdots &        & \ddots & \ast   & \ideal{t}    &       &       &  \\ \hline
    \vdots &        & \vdots & \vdots & \vdots &        & \ideal{t}    & \ideal{t}    & \ddots &        &  \\
    \vdots &        & \vdots & \vdots & \vdots &        & \vdots & \vdots &        & \ddots &  \\
    \ideal{t}    &\cdots & \ideal{t}    & \ideal{t}    & \ideal{t}    &        & \ideal{t}    & \ideal{t}    &        &        &  \\
    \end{array} \right)
    \begin{array}{l}
    \left.\begin{array}{l}\ \\[5.1ex] \\ \end{array}\hspace*{-1.6em}\right\}{\scriptstyle{i_1-1\text{ rows}}}\\[2.8ex]
    \left.\begin{array}{l}\ \\[5.1ex] \\ \end{array}\hspace*{-1.6em}\right\}{\scriptstyle{i_2-i_1-1\text{ rows}}}\\
    \phantom{\left.\begin{array}{l}\ \\[4.8ex] \\ \end{array}\right\}}
    \end{array}
    \\[-.7ex]
    &\hspace*{3.1em}\underbrace{\hspace*{7.6em}}_{i_1\text{ columns}}\quad
      \underbrace{\hspace*{7.6em}}_{i_2-i_1\text{ columns}}
\end{align*}
while~$\widetilde{M}$ does not change.
Assume that
$N=\left(\begin{smallmatrix}\widetilde{M}\\\widehat{M}\end{smallmatrix}\right)$
is of this form and recall that the $j$-th row of~$\widetilde{M}$ is given by
$M_{i_j},\,j=1,\ldots,k$.
Now, moving for $j=1,\ldots,k$ the $j$-th row of~$\widetilde{M}$ to the bottom of the
$j$-th block of~$\widehat{M}$ we can form a matrix $N'$ where the $i_j$-th row is
given by $M_{i_j}$ for $j=1,\ldots,k$ and where we keep the ordering of all
remaining rows of~$N$.
This way, the entries in~$\widehat{M}$ that are explicitly indicated by asterisks
and the entries $m_{i_j i_j}, j=1,\ldots,k,$ will appear on the diagonal of~$N'$,
while all entries below the diagonal will be in~$\ideal{t}$.
Hence $N'\in GL_n(\F[t])\cap\cM$ and Proposition~\ref{prop:unitsM} completes the proof.
\end{proof}

%%%%%%%%%%%%%%%%%%%%%%%%%%%%%%%%%%%%%%%%%%%
\begin{rem}\label{R-general}
Essentially all of the results of this section are true without the requirement
of~$n$ dividing $q-1$.
Indeed, just consider~$A$ and~$\sigma$ as in\eqnref{eq:A} and\eqnref{eq:sigma}.
Section~4 of~\cite{GS04} remains true in this case since it was solely based on~$A$
being a direct product of fields.
The only part that needs extra proof is part~(3) of Theorem~\ref{T-ideals}, and that
can be accomplished using ring theoretic methods.
Part~(4) of that theorem does not have a meaning in this more general setting
since~$A$ is not isomorphic to $\F[x]/{\ideal{x^n-1}}$ anymore, and thus the
map~$\p$ does not exist.
We will briefly come back to this situation in Section~\ref{S-general}.
\end{rem}
%%%%%%%%%%%%%%%%%%%%%%%%%%%%%%%%%%%%%%%%%%%

%%%%%%%%%%%%%%%%%%%%%%%%%%%%%%%%%%%%%%%%%%%%%%%%%%%
\section{Construction and Existence of $\sigma$-CCC's}\label{S-CCC}
\setcounter{equation}{0}
%%%%%%%%%%%%%%%%%%%%%%%%%%%%%%%%%%%%%%%%%%%%%%%%%%%
In this section we will apply the results obtained so far in order to construct
$\sigma$-CCC's, and we will discuss some existence issues.
As before, let~$A$ and~$\sigma$ be as in\eqnref{eq:A} and\eqnref{eq:sigma} and let
$n\mid(q-1)$.
Moreover, put
\begin{equation}\label{eq:Mbasic}
   \cMbasic:=\bigg\{M\in\cM \,\bigg|
   \begin{array}{l}\text{$M$ delay-free and}\\ \text{the nonzero rows of~$M$ form a basic matrix}\end{array}\bigg\}.
\end{equation}

Let us summarize the previous results in the following form.
We also think it is worthwhile pointing out that the property of a module being a direct
summand as an $\F[z]$-submodule is equivalent to being a direct summand as a left
ideal in the skew polynomial ring.
Recall the map~$\p$ from\eqnref{eq:p}.

%%%%%%%%%%%%%%%%%%%%%%%%%%%%%%%%%%%%
\begin{theo}\label{T-sCCC}
Let $g\in\Azs$ be delay-free and semi-reduced. Then the following are equivalent.
\begin{romanlist}
\item $\lideal{g}$ is a direct summand of the left $\F[z]$-module $\Azs$ (thus, $\p^{-1}(\lideal{g})$ is a $\sigma$-CCC).
\item $\lideal{g}$ is a direct summand of the ring $\Azs$.
\item $\xi(g)\in\cMbasic$.
\end{romanlist}
\end{theo}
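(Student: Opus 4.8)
The plan is to prove the chain of equivalences by establishing (iii)$\Rightarrow$(ii)$\Rightarrow$(i)$\Rightarrow$(iii), where the middle implication is essentially trivial and the real work lies in the two outer ones. Throughout I would transport everything across the isomorphism $\xi$ of Proposition~\ref{prop:xi}, so that $\lideal{g}$ in $\Azs$ corresponds to the left ideal $\lideal{M}=\cM\,M$ in~$\cM$, where $M=\xi(g)$. The key structural fact I would lean on is that $g$ is assumed delay-free and semi-reduced, so by Proposition~\ref{prop:DegMat}(3) the matrix~$M$ is semi-reduced and delay-free, and by\eqnref{eq:rk M} we have $\rank_{\F[t]}M=|T_g|=|\Supp(M)|=:k$.

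For (iii)$\Rightarrow$(ii): assume $M\in\cMbasic$, i.e.\ the $k$ nonzero rows of~$M$ form a basic matrix~$\widetilde M$. By Lemma~\ref{lem:replacing 0 rows} (applied with this~$M$, whose delay-freeness is part of the hypothesis) there exist vectors filling in the zero rows so that the resulting $N\in\cM^\times$ has the nonzero rows of~$M$ among its rows. Writing $N=PM'$ where $M'$ is obtained from~$M$ by placing the $\widehat M_i$ into the zero rows — more precisely, since $N$ agrees with $M$ exactly in the rows indexed by $\Supp(M)$, left multiplication by the idempotent $\sum_{a\in\Supp(M)}E_{aa}=\xi(\sum_{l\in T_g}e_l)$ sends $N$ to $M$, so $M=\big(\sum_{a\in\Supp(M)}E_{aa}\big)N$. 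Translating back through $\xi^{-1}$, this says precisely $g=\sum_{l\in T_g}u^{(l)}$ for the unit $u=\xi^{-1}(N)\in\Azs^\times$; that is, $g$ is basic in the sense of Definition~\ref{D-unimodularg}. Then Theorem~\ref{T-ideals}(3), valid for semi-reduced~$g$ by Remark~\ref{R-CCCsemireduced}, gives that $\p^{-1}(\lideal{g})$ is a direct summand of~$\F[z]^n$, and one checks this direct-summand complement can be taken to be the left ideal generated by the remaining components $\sum_{l\notin T_g}u^{(l)}$, yielding (ii).

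For (ii)$\Rightarrow$(i): a direct-sum decomposition of the ring $\Azs$ as left ideals is in particular a direct-sum decomposition of the underlying left $\F[z]$-module, since the ring $\Azs$ is free as a left $\F[z]$-module and the left-ideal complement is an $\F[z]$-submodule; so (i) is immediate, and then $\p^{-1}(\lideal{g})$ is a $\sigma$-CCC by Definition~\ref{D-CCC}. For (i)$\Rightarrow$(iii): if $\lideal{g}$ is a direct summand of the $\F[z]$-module $\Azs$, then $\cC:=\p^{-1}(\lideal{g})$ is a direct summand of $\F[z]^n$, and by Theorem~\ref{T-ideals}(4) (again in the semi-reduced version) the matrix $G$ built from the components $\p^{-1}(g^{(i)})$, $i\in T_g$, is a minimal encoder, hence basic. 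But the rows of $G$ correspond under $\xi$ and $\p$ exactly to the nonzero rows of $M=\xi(g)$ via\eqnref{eq:compM}, so the nonzero rows of $M$ form a basic matrix; together with the delay-freeness of $g$ (hence of $M$) this is exactly the statement $M\in\cMbasic$.

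The main obstacle is the bookkeeping in (iii)$\Rightarrow$(ii): one must be careful that completing $\widetilde M$ to a unit via Lemma~\ref{lem:replacing 0 rows} and then reading off a \emph{unit} $u\in\Azs$ with $g=\sum_{l\in T_g}u^{(l)}$ really matches Definition~\ref{D-unimodularg}, and that the notion of ``direct summand as a left ideal in $\Azs$'' in (ii) is genuinely obtained — i.e.\ that the complement $\lideal{\sum_{l\notin T_g}u^{(l)}}$ is an honest left-ideal complement, not merely a module complement. This amounts to verifying $u^{-1}$ decomposes compatibly with the Peirce components, which follows from the orthogonality of the idempotents $e_1,\dots,e_n$ and the fact that $u\cdot(\text{its component decomposition})$ reproduces the Peirce decomposition\eqnref{eq:Peirce} of $\Azs$; I would state this cleanly and cite the relevant parts of~\cite{GS04} (in their semi-reduced form, Remark~\ref{R-CCCsemireduced}) rather than redo the ring-theoretic argument.
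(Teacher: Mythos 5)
Your cyclic chain (iii)$\Rightarrow$(ii)$\Rightarrow$(i)$\Rightarrow$(iii) is a legitimate reorganization, and in (iii)$\Rightarrow$(ii) your explicit construction of a left-ideal complement --- right multiplication by the unit $u=\xi^{-1}(N)$ turns the Peirce decomposition $\Azs=\lideal{\sum_{l\in T_g}e_l}\oplus\lideal{\sum_{l\notin T_g}e_l}$ into $\Azs=\lideal{g}\oplus\lideal{\sum_{l\notin T_g}u^{(l)}}$ --- is more self-contained than the paper's treatment, which disposes of (i)$\Leftrightarrow$(ii) by citing an external result (Rem.~2.10 of~\cite{GL06}) and then proves (i)$\Leftrightarrow$(iii) in one stroke from Theorem~\ref{T-ideals}(3), Lemma~\ref{lem:replacing 0 rows}, and the isomorphism~$\xi$.

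However, your step (i)$\Rightarrow$(iii) has a genuine gap. You deduce from Theorem~\ref{T-ideals}(4) that the matrix $G\in\F[z]^{k\times n}$ built from the rows $\p^{-1}(g^{(a)})$, $a\in T_g$, is basic, and then assert that since the rows of~$G$ ``correspond exactly to the nonzero rows of $M=\xi(g)$'' via\eqnref{eq:compM}, the nonzero rows of~$M$ also form a basic matrix. But~$G$ is a matrix over~$\F[z]$ while the nonzero rows of~$M$ form a matrix over~$\F[t]$; there is no entry-wise correspondence between them, and basicness means coprimality of $k$-minors in the respective polynomial ring. Equation\eqnref{eq:compM} only gives the identity $\xi(g^{(a)})=E_{aa}M$ inside~$\cM$; it says nothing relating $k$-minors of~$G$ over $\F[z]$ to $k$-minors of the nonzero rows of~$M$ over $\F[t]$. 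The equivalence ``$G$ basic $\Leftrightarrow$ nonzero rows of $\xi(g)$ basic'' is true, but it is a \emph{consequence} of the very chain you are trying to establish, not an available input. The repair is to bypass~$G$ entirely, as the paper does: from (i), Theorem~\ref{T-ideals}(3) in its semi-reduced form (Remark~\ref{R-CCCsemireduced}) gives $g=\sum_{l\in T_g}u^{(l)}$ for some unit $u\in\Azs^\times$; applying~$\xi$ shows that $\xi(u)\in\cM^\times$ agrees with~$M$ in the rows indexed by $\Supp(M)$, which is precisely condition~(b) of Lemma~\ref{lem:replacing 0 rows}; and the easy implication (b)$\Rightarrow$(a) of that lemma (a row submatrix of a matrix in $GL_n(\F[t])$ is basic) together with the delay-freeness of~$g$ yields $\xi(g)\in\cMbasic$. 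Your detour through Theorem~\ref{T-ideals}(4) is both unnecessary and, as written, not self-supporting.
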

%%%%%%%%%%%%%%%%%%%%%%%%%%%%%%%%%%%%
\begin{proof}
The equivalence of~(i) and~(ii) has been proved in \cite[Rem.~2.10]{GL06}, whereas
(i)~$\Longleftrightarrow$~(iii) follows from Lemma~\ref{lem:replacing 0 rows}, Theorem~\ref{T-ideals}(3), and the fact
that~$\xi$ is an isomorphism.
\end{proof}

Notice that~(iii) gives us an easy way of checking whether a given left ideal is a
direct summand (thus a $\sigma$-CCC) since basicness of a matrix in $\F[t]^{k\times n}$
can, for instance, be checked by testing whether its $k$-minors are coprime.

As a consequence, $\sigma$-CCC's are in one-one correspondence with the left ideals
$\lideal{M}\subseteq\cM$ where~$M\in\cMbasic$ is semi-reduced.
Proposition~\ref{prop:DegMat}(3) and Theorem~\ref{T-ideals}(4), see also
Remark~\ref{R-CCCsemireduced}, tell us immediately the algebraic parameters of the code,
that is, the dimension, the Forney indices and degree.
This is summarized in the next result.

%%%%%%%%%%%%%%%%%%%%%%%%%%%%%%%%%%%%%%%
\begin{cor} \label{cor:main corollary}
Let $M \in \cMbasic$ be semi-reduced and let $\Supp(M)=\{i_1,\ldots,i_k\}$, where
$i_1<\ldots <i_k$.
Then $\cC:=\p^{-1}(\lideal{\xi^{-1}(M)})\subseteq\F[z]^n$ is a $k$-dimensional
$\sigma$-CCC with Forney indices given by
$\nu_l:=\max_{1\leq j\leq n}\cD(M)_{i_l,j}$ for $l=1,\ldots,k$.
\end{cor}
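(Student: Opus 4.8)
The plan is to obtain the corollary by assembling results already proved, with essentially no new computation. First I would set $g:=\xi^{-1}(M)\in\Azs$. Since $\xi$ is a ring isomorphism (Proposition~\ref{prop:xi}), it carries the left ideal $\lideal{M}\subseteq\cM$ onto the left ideal $\lideal{g}\subseteq\Azs$, so that $\cC=\p^{-1}(\lideal{g})$. By Proposition~\ref{prop:DegMat}(3), the hypothesis that $M$ is semi-reduced says exactly that $g$ is semi-reduced, and by definition of the support we have $T_g=\Supp(M)=\{i_1,\ldots,i_k\}$. Finally, $M\in\cMbasic$ means that $M$ is delay-free with its nonzero rows forming a basic matrix; since $g$ is delay-free precisely when $\xi(g)$ is, the polynomial $g$ is delay-free as well.

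With these translations in hand, Theorem~\ref{T-sCCC} (equivalence of~(i) and~(iii)) shows that $\lideal{g}$ is a direct summand of the left $\F[z]$-module $\Azs$, that is, $\cC$ is a $\sigma$-CCC. Next I would apply Theorem~\ref{T-ideals}(4) to $g$; its validity for merely semi-reduced (rather than reduced) polynomials is exactly what Remark~\ref{R-CCCsemireduced} supplies. With $T_g=\{i_1,\ldots,i_k\}$ this gives $\cC=\im G$ for the matrix $G$ of \eqnref{eq:G} whose rows are $\p^{-1}(g^{(i_l)})$, and since $\cC$ is a direct summand, $G$ is a minimal encoder of $\cC$. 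Hence $\cC$ has dimension $k$ and its Forney indices are $\deg g^{(i_1)},\ldots,\deg g^{(i_k)}$ (up to ordering).

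It then remains only to rewrite $\deg g^{(i_l)}$ in terms of the degree matrix, which is immediate from Proposition~\ref{prop:DegMat}(2): the identity \eqnref{eq:degga} reads $\deg g^{(a)}=\max_{1\le b\le n}\cD(M)_{ab}$, and taking $a=i_l$ yields $\deg g^{(i_l)}=\max_{1\le j\le n}\cD(M)_{i_l,j}=\nu_l$, as claimed. (If one prefers a concrete formula for the maximum, one may instead invoke Proposition~\ref{prop:DegMat}(4), which identifies it as $\cD(M)_{i_l,b_{i_l}}$ with $b_{i_l}$ the rightmost column of maximal degree in row $i_l$.) There is no genuine obstacle in this argument; the only point that deserves attention is the passage from reducedness to semi-reducedness in Theorem~\ref{T-ideals}, and that has been settled once and for all in Remark~\ref{R-CCCsemireduced}.
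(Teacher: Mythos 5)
Your proof is correct and follows exactly the route the paper takes: translate via $\xi$, invoke Theorem~\ref{T-sCCC} for the direct-summand/\scy\ property, apply Theorem~\ref{T-ideals}(4) (semi-reduced version via Remark~\ref{R-CCCsemireduced}) for dimension and Forney indices, and convert $\deg g^{(i_l)}$ into $\max_j\cD(M)_{i_l,j}$ via\eqnref{eq:degga}. The paper merely states this as an immediate consequence; you have written out the same chain of references in full detail.
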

%%%%%%%%%%%%%%%%%%%%%%%%%%%%%%

%%%%%%%%%%%%%%%%%%%%%%%%%%%%%
\begin{exa}\label{Exa:2dimCCC}
Let us consider again the setting of Example~\ref{Exa:CCCMDS} where $n=q-1$ and
$\sigma(x)=\alpha^{-1}x$ with some primitive element~$\alpha$ of~$\F$.
In that example we presented some $1$-dimensional $\sigma$-cyclic MDS convolutional
codes.
These codes can be generalized as follows.
Let $g=(e_1+e_2)(1+z)$. Then
\[
    M=\xi(g)=\left(\begin{smallmatrix} 1&1&0&0&\ldots&0\\0&1&1&0&\ldots&0\\
                            \ \\ & & & \mbox{\huge 0}& \\ \ \end{smallmatrix}\right),
    \quad
    \cD(M)=\left(\begin{smallmatrix} 0&1&-\infty\ -\infty\ \ldots&-\infty\\0&0&\ \,1\quad -\infty\ \ldots&-\infty\\
                            \ \\ & & \mbox{\huge$-\infty$}& \\ \ \end{smallmatrix}\right).
\]
Obviously,~$M\in\cMbasic$ and hence the previous corollary guarantees that
$\cC:=\p^{-1}(\lideal{g})\subseteq\F[z]^n$ is a $\sigma$-CCC.
Furthermore,~$M$, and thus~$g$, is semi-reduced.
As a consequence, the code~$\cC$ is a $2$-dimensional code in $\F[z]^n$ with both Forney
indices equal to~$1$ and degree~$2$.
In other words,~$\cC$ is a unit memory code.
We will show now that these codes are optimal in the sense that they have the largest
distance among all $2$-dimensional codes with Forney indices $1,1$ and length~$n=q-1$
over~$\F_q$.
According to \cite[Prop.~4.1]{GS06p} (see also \cite[Eq.~(1.3)]{GL06})
the largest possible distance for codes with these
parameters, called the Griesmer bound, is given by the number $2(n-1)$.
In order to compute the actual distance of~$\cC$ we need an encoder matrix.
Since the support of~$g$ is $T_g=\{1,\,2\}$ and $g^{(1)}=e_1+ze_2,\,g^{(2)}=e_2+ze_3$,
Theorem~\ref{T-ideals}(4) tells us that a minimal encoder is given
by the matrix $G=G_0+G_1z$ where
\[
  G_0=\begin{pmatrix}\p^{-1}(e_1)\\\p^{-1}(e_2)\end{pmatrix},\
  G_1=\begin{pmatrix}\p^{-1}(e_2)\\\p^{-1}(e_3)\end{pmatrix}\in\F^{2\times n}.
\]
Furthermore,~$G$ is basic and minimal, and thus $\rank G_0=\rank G_1=2$.
Now it is easy to see that the two block codes generated by~$G_0$ and~$G_1$,
respectively, are MDS codes, that is, they both have distance~$n-1$.
Indeed, recall from Example~\ref{Exa:CCCMDS} that
$e_a=\gamma_a\prod_{i\not=a-1}(x-\alpha^i)$ for $a=1,\ldots,n$.
Thus, in the ring $A'=\F[x]/{\ideal{x^n-1}}$ the ideal $\ideal{e_1,\,e_2}$ is identical
to $\ideal{f}$, where $f=\prod_{i=2}^{n-1}(x-\alpha^i)$.
As a consequence, the cyclic block code
$\im G_0=\p^{-1}(\ideal{e_1,e_2})=\p^{-1}(\ideal{f})\subseteq\F^n$ has designed distance
$n-1$.
The second block code $\ideal{e_2,\,e_3}$ is simply the image of the first one under the
map~$\sigma$.
Since~$\sigma$ is weight-preserving this shows that the second code has distance~$n-1$,
too.
But now it is clear that the convolutional code~$\cC$ has distance $2(n-1)$ since for
each message $u=\sum_{j=0}^N u_jz^j\in\F[z]^2$, where $u_0\not=0\not=u_N$, the
corresponding codeword $uG$ has constant term $u_0G_0$ and highest coefficient $u_NG_1$,
both of weight at least~$n-1$.
\\
In the same way one can proceed and consider the unit memory code generated by the
polynomial $g=(e_1+e_2+e_3)(1+z)$.
Again, the matrix $M:=\xi(g)$ shows that $g$ is semi-reduced and basic and thus
$\cC=\p^{-1}(\lideal{g})$ is a $3$-dimensional $\sigma$-CCC with all Forney indices
being~$1$.
In this case, \cite[Prop.~4.1]{GS06p} tells us that, if $n\geq 6$, the Griesmer
bound for these parameters is given by the number $2(n-2)+1$.
In the same way as above one can show that the codes just constructed have distance
$2(n-2)$, that is, they fail the Griesmer bound by~$1$.
Proceeding in the same way for arbitrary $k\leq \frac{n}{2}$, one obtains
$k$-dimensional unit memory $\sigma$-CCC's having distance $2(n-k+1)$ which is $k-2$
below the corresponding Griesmer bound.
\end{exa}
%%%%%%%%%%%%%%%%%%%%%%%%%%%%%%%%%%%%%

%%%%%%%%%%%%%%%%%%%%%%%%%%%%%%%%%%%%
\begin{exa}\label{Exa:reduction2}
Let $q=5$ and $n=4$ and $\bar{g}$ be as in Example~\ref{Exa:reduction}.
Write $\F=\F_5$.
One easily checks that the matrix $\bar{M}=\xi(\bar{g})$ given
in\eqnref{eq:barM} is in~$\cMbasic$.
Thus, the submodule $\cC=\p^{-1}(\lideal{\bar{g}})\subseteq\F[z]^4$ is a
$3$-dimensional $\sigma$-CCC with Forney indices $1,\,1,\,2$.
In order to compute a minimal encoder $G\in\F[z]^{3\times4}$ of~$\cC$ we will
apply Theorem~\ref{T-ideals}(4).
From~$\bar{g}$ as given in Example~\ref{Exa:reduction} we obtain the components
$\bar{g}^{(1)}=2e_1+ze_2,\,\bar{g}^{(2)}=e_2+3ze_3,\,
 \bar{g}^{(3)}=e_3+ze_4+4z^2e_1$.
Identifying $f\in\F[x]/{\ideal{x^4-1}}$ with $[f(1),f(2),f(4),f(3)]\in A$ and
using the map~$\p$ from\eqnref{eq:p} we arrive at the minimal encoder matrix
\[
   G=\begin{pmatrix}\p^{-1}(\bar{g}^{(1)})\\ \p^{-1}(\bar{g}^{(2)})\\
                    \p^{-1}(\bar{g}^{(3)})\end{pmatrix}
    =\begin{pmatrix}4z+3& 2z+3& z+3&3z+3\\2z+4&3z+2& 2z+1& 3z+3\\
                    z^2+4z+4& z^2+3z+1& z^2+z+4&z^2+2z+1\end{pmatrix}.
\]
Using some computer algebra routine one checks that this code attains the
Griesmer bound (see \cite[Eq.~(1.3)]{GL06}).
Precisely, its distance is~$6$, which is the largest distance possible for
any $3$-dimensional code of length~$4$ over~$\F_5$ with the same Forney indices.
\end{exa}
%%%%%%%%%%%%%%%%%%%%%%%%%%%%%%%%%%%%

Let us now turn to the existence of $\sigma$-CCC's with prescribed algebraic parameters.
Corollary~\ref{cor:main corollary} raises the question whether for all $1\leq k\leq n-1$
and $\nu_1,\ldots,\nu_k\in\N_0$ there exists a $k$-dimensional $\sigma$-CCC in $\F[z]^n$
with Forney indices $\nu_1,\ldots,\nu_k$.
The rest of this section will be devoted to this problem.
We will start with showing that this problem can be split into two subproblems one of
which is purely combinatorial.

Put
\begin{equation}\label{eq:chessboard}
  \hat{D}:=\begin{pmatrix} 0&1&2&\ldots&n-2&n-1\\ n-1& 0& 1 & \ldots &n-3&n-2\\
                         \vdots&\vdots& \vdots&\ddots&\vdots&\vdots
                       \\ 2&3&4&\ldots&0&1\\ 1&2&3&\ldots& n-1&0\end{pmatrix},
\end{equation}
thus,
\begin{equation}\label{eq:Dhatab}
  \hat{D}_{ab}=\left\{\begin{array}{ll} b-a,&\text{if }b\geq a\\[.5ex]
                      n+b-a,&\text{if }b<a.\end{array}\right\}=b-a\mod n.
\end{equation}
The role of the matrix~$\hat{D}$ is explained by the fact that for every matrix
$M=(m_{ab})\in\cM$ we have
\begin{equation}\label{eq:Dhat}
   \cD(M)_{ab}=\left\{\begin{array}{ll}
                    n\deg m_{ab}+\hat{D}_{ab}&\text{if }b\geq a,\\[.5ex]
                    n(\deg m_{ab}-1)+\hat{D}_{ab}&\text{if }b< a.
               \end{array}\right.
\end{equation}

The combinatorial problem we need to consider shows some close resemblance with the
classical rook problem.
%%%%%%%%%%%%%%%%%%%%%%%%%%%%%%%%%%
\begin{prob}[Modified Rook Problem]\label{P-chessboard}
Let $r_1,\ldots,r_k$ be any (not necessarily different) numbers in $\{0,\ldots,n-1\}$.
Can we find these numbers in the matrix~$\hat{D}$ such that they appear in
pairwise different rows and pairwise different columns?
In other words, can we find distinct numbers $i_1,\ldots, i_k$ and distinct numbers
$j_1,\ldots,j_k$, all in the set $\{1,\ldots,n\}$, such that
\begin{equation}\label{eq:Dhatij}
     \hat{D}_{i_l,j_l}=r_l\text{ for }l=1,\ldots,k.
\end{equation}
\end{prob}
%%%%%%%%%%%%%%%%%%%%%%%%%%%%%%%%%%
The following slight reformulation will come handy for our purposes.
%%%%%%%%%%%%%%%%%%%%%%%%%%%%%%%%%
\begin{rem}\label{R-chessboard}
If Problem~\ref{P-chessboard} is solvable for $r_1,\ldots,r_k$ then we can find these
numbers even in the first~$n-1$ rows of the matrix~$\hat{D}$.
In other words,\eqnref{eq:Dhatij} is true for some distinct numbers
$i_1,\ldots, i_k\in\{1,\ldots,n-1\}$ and distinct numbers
$j_1,\ldots,j_k\in\{1,\ldots,n\}$.
Indeed, suppose we have a solution to~\ref{P-chessboard}, that is, $\hat{D}_{i_l,j_l}=r_l$
for $l=1,\ldots,k$.
Then we may construct a second solution as follows.
There exists some $\alpha\in\{1,\ldots,n\}$ such that $i_l\not=\alpha$ for all~$l$.
Put $a_l=(i_l-\alpha-1\mod n) +1$ and $b_l=(j_l-\alpha-1\mod n)+1$.
Since the numbers $(i_1\mod n),\ldots,(i_k\mod n)$ are pairwise different the same is
true for $a_1,\ldots,a_k$.
Likewise $b_1,\ldots,b_k$ are pairwise different.
Of course, $a_l,\,b_l\in\{1,\ldots,n\}$ for all $l=1,\ldots,k$.
Moreover, by construction $a_l\not=n$ for all $l=1,\ldots,k$, and upon
using\eqnref{eq:Dhatab} we obtain
$\hat{D}_{a_l,b_l}=b_l-a_l\mod n=j_l-i_l\mod n=\hat{D}_{i_l,j_l}=r_l$ for $l=1,\ldots,k$.
\end{rem}
%%%%%%%%%%%%%%%%%%%%%%%%%%%%%%%%%%

In the next section we will study Problem~\ref{P-chessboard} in some more detail.
Even though we are not able to provide a proof of the solvability for general numbers
$r_1,\ldots,r_k$ we will consider some special cases where we present a complete proof.
We would like to express our strong belief that the problem can be solved for all given
data $r_1,\ldots,r_{n-1}$.
This has been underscored by a routine check with Maple confirming our conjecture for
all $n\leq 10$.

The second problem we need to consider has an affirmative answer and thus can be stated
as a theorem.
The proof will be presented at the end of this section.

%%%%%%%%%%%%%%%%%%%%%%%%%%%%%%%%%%%%%%%
\begin{theo}\label{T-MatrixDegrees}
Let $j_1,\ldots,j_{n-1}\in\{1,\ldots,n\}$ be pairwise different and
$d_1,\ldots,d_{n-1}\in\N_0$ be such that
\begin{equation}\label{eq:jd}
   j_i<i\Longrightarrow d_i>0.
\end{equation}
Then there exists a basic matrix $M=(m_{ij})\in\F[t]^{(n-1)\times n}$ with the following
properties:
\begin{romanlist}
\item $\deg m_{ij}\leq d_i$ for $j<j_i$,
\item $\deg m_{ij}=d_i$ for $j=j_i$,
\item $\deg m_{ij}< d_i$ for $j>j_i$,
\item $m_{ii}(0)=1$ for all $i$,
\item $m_{ij}(0)=0$ for $j<i$.
\end{romanlist}
\end{theo}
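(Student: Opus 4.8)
The plan is to construct the matrix $M$ row by row, treating $M$ as a sum $M = M^{(0)} + tM^{(1)} + \dots$ where the constant part $M^{(0)}$ is arranged to be upper unitriangular (which immediately gives properties (iv) and (v), and forces basicness since $\det M^{(0)} = 1 \neq 0$). The degree conditions (i)--(iii) say that in row $i$ the rightmost entry of maximal degree $d_i$ sits exactly in column $j_i$. So the rough shape of row $i$ is: entries in columns $< j_i$ may have degree up to $d_i$, the entry in column $j_i$ has degree exactly $d_i$, and entries in columns $> j_i$ have degree strictly less than $d_i$.

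First I would dispose of a normalization: by property (iv) the diagonal entry $m_{ii}$ has nonzero constant term, so I only need to worry about where the \emph{top} degree $d_i$ is achieved. The natural choice is to let $m_{ii} = 1$ whenever $d_i = 0$, and more generally to build row $i$ so that the only ``high-degree action'' is a single monomial $t^{d_i}$ placed in column $j_i$ (times a suitable unit, or just $1$), with the unitriangular constant part filled in separately and independently. Concretely, set $m_{ij}^{(0)} = \delta_{ij}$ (Kronecker delta) to handle the constant term, and then for the top coefficient put a $1$ in position $(i,j_i)$ at degree $d_i$. The only clash is when $j_i = i$ and $d_i = 0$, where both prescriptions land on the same entry and are consistent, or when $j_i = i$ and $d_i > 0$, where $m_{ii} = 1 + t^{d_i}$, still with $m_{ii}(0) = 1$. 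The hypothesis \eqnref{eq:jd}, namely $j_i < i \Rightarrow d_i > 0$, is exactly what guarantees that when we place the degree-$d_i$ term strictly below the diagonal (column $j_i < i$) we are not destroying the requirement $m_{ij}(0) = 0$ for $j < i$: since $d_i > 0$ in that case, the term $t^{d_i}$ contributes nothing to the constant part.

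The one genuine thing to verify is \textbf{basicness} of the resulting $M \in \F[t]^{(n-1)\times n}$, i.e.\ that its $(n-1)$-minors are coprime, equivalently that $M(\lambda)$ has full row rank $n-1$ for every $\lambda \in \overline{\F}$. This is where I would be slightly more careful rather than relying only on the constant part. At $\lambda = 0$ the matrix $M(0)$ is upper unitriangular in its first $n-1$ columns, hence rank $n-1$. For $\lambda \neq 0$ I would argue that the $(n-1)$-minor obtained by deleting, say, column $n$ (or whichever column is not among $j_1,\dots,j_{n-1}$ --- note the $j_i$ are distinct elements of an $n$-element set, so exactly one column index, call it $j^\ast$, is missing) is, up to a unit, a monomial in $t$. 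Indeed, deleting column $j^\ast$ leaves a square matrix whose $(i, j_i)$ entries are the ``marked'' entries of top degree; choosing the permutation $i \mapsto j_i$ in the Leibniz expansion picks out the product $\prod_i (\text{leading term of } m_{i,j_i}) = \prod_i t^{d_i}\cdot(\text{unit})$, and — provided I choose the off-marked entries small enough — no other permutation contributes a term of that total $t$-degree, so this minor equals $c\, t^{\sum d_i}$ for some $c \in \F^\times$. Then the two minors ``delete column $n$'' and ``delete column $j^\ast$'' (the $t$-free upper-triangular one) are coprime, giving basicness.

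The \textbf{main obstacle}, and the place to be careful, is making the degree bookkeeping in the minor argument airtight: I must choose the sub-maximal entries in each row (those in columns other than $j_i$) to have degree strictly less than $d_i$ in a way that \emph{also} respects the unitriangular-constant-part requirement, and then check that in the Leibniz expansion of the square submatrix no permutation other than $i \mapsto j_i$ achieves the full degree $\sum_i d_i$. The clean way is simply to take all off-marked entries to be their constant parts only (degree $0$ monomials from $M^{(0)}$), so that row $i$ literally is $e_i + t^{d_i} e_{j_i}$ (summing the two when $i = j_i$); then every non-identity permutation in the expansion must use at least one entry $m_{i,\pi(i)}$ with $\pi(i) \neq i, j_i$, which is $0$, so the only surviving permutations are $\mathrm{id}$ and the marked one, and the degree count is immediate. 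With that explicit choice properties (i)--(v) are visibly satisfied — (i) is vacuous or trivial, (ii) holds by construction, (iii) holds since off-marked entries have degree $0 < d_i$ whenever $d_i > 0$, and when $d_i = 0$ one checks $j_i \geq i$ forces the marked term onto or above the diagonal so (iii) and (v) are still fine — and basicness follows from the two coprime minors as above. I would present the construction in this fully explicit form, which reduces the whole theorem to a short verification.
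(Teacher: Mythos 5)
The explicit sparse construction you settle on --- row $i$ equal to $e_i + t^{d_i} e_{j_i}$ --- does satisfy (i)--(v), but it is not basic in general, and the gap is already visible in the wording of your coprimality argument: you compare the minor obtained by deleting column $n$ with the minor obtained by deleting the unique column index $j^*$ missing from $\{j_1,\ldots,j_{n-1}\}$. Nothing forbids $j^*=n$, and when that happens these two minors are literally the same minor. Worse, in that case your matrix is so sparse that all the other $(n-1)$-minors vanish outright, so coprimality genuinely fails. Concretely, take $n=3$, $j_1=2$, $j_2=1$, $d_1=0$, $d_2=1$ (this satisfies\eqnref{eq:jd}). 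Your construction gives
\[
   M=\begin{pmatrix}1&1&0\\ t&1&0\end{pmatrix},
\]
whose $2$-minors are $1-t,\,0,\,0$; their gcd is $1-t$, and $M(1)$ has rank~$1$, so $M$ is not basic. Note also that even your intermediate claim fails here: the ``delete column $j^*$'' minor is $1-t$, not a unit times $t^{\sum d_i}$, because the identity permutation contributes a constant term alongside the top-degree term.

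The theorem is correct, but the paper's proof is a genuine induction on $n$, not a one-shot explicit formula. It strengthens the induction hypothesis by two additional structural conditions (labelled (vi) and (vii) there) on the constructed matrix, and the induction step adds a row and a column whose nonzero entries are placed --- with case distinctions on where $j_n$ sits and whether some earlier $j_\beta=n+1$ --- precisely so that the new $n$-minors remain coprime (in one subcase extra column operations are needed to restore the degree bound in the last column). Your matrix simply has too few nonzero entries to force basicness; the inductive construction deliberately leaves a trail of unit entries (and carefully chosen $t$-powers) to keep enough nonvanishing coprime minors alive.
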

%%%%%%%%%%%%%%%%%%%%%%%%%%%%%%%%%%%%%%%
Notice that the properties~(i) --~(iii) simply tell us that the $i$-th row degree is
given by~$d_i$ and the rightmost entry with degree $d_i$ is in column~$j_i$.
Moreover, observe that without\eqnref{eq:jd} the requirements~(ii) and~(v) would not be
compatible.
Using Proposition~\ref{prop:DegMat} we see that if we extend~$M$ by a zero row at the
bottom we obtain a semi-reduced matrix in $\cMbasic$.

Now we can show the following.
%%%%%%%%%%%%%%%%%%%%%%%%%%%%%%%%%%%
\begin{theo}\label{T-existenceFI}
Let $1\leq k\leq n-1$ and $\nu_1,\ldots,\nu_k\in\N_0$.
Put $r_l=\nu_l\mod n$ for $l=1,\ldots,k$.
If Problem~\ref{P-chessboard} is solvable for $r_1,\ldots,r_k$ then there exists a
$k$-dimensional $\sigma$-CCC in $\F[z]^n$ with Forney indices $\nu_1,\ldots,\nu_k$.
\\
As a consequence, if Problem~\ref{P-chessboard} is solvable for all
$r_1,\ldots,r_{n-1}\in\{0,\ldots,n-1\}$ then for all $1\leq k\leq n-1$ and all
$\nu_1,\ldots,\nu_k\in\N_0$ there exists a $k$-dimensional $\sigma$-CCC in
$\F[z]^n$ with Forney indices $\nu_1,\ldots,\nu_k$.
\end{theo}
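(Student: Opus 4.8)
The plan is to combine the two preceding results, Theorem~\ref{T-MatrixDegrees} and Corollary~\ref{cor:main corollary}, with a solution to the Modified Rook Problem. Suppose Problem~\ref{P-chessboard} is solvable for the residues $r_l=\nu_l\mod n$, $l=1,\ldots,k$. By Remark~\ref{R-chessboard} we may assume the solution uses only rows in $\{1,\ldots,n-1\}$; that is, there are distinct indices $i_1,\ldots,i_k\in\{1,\ldots,n-1\}$ and distinct columns $j_1,\ldots,j_k\in\{1,\ldots,n\}$ with $\hat{D}_{i_l,j_l}=r_l$. The goal is to produce a semi-reduced matrix $M\in\cMbasic$ whose support is $\{i_1,\ldots,i_k\}$ and whose $i_l$-th row degree (in the $\cD(M)$ sense) equals $\nu_l$, with the row-maximum occurring in column $j_l$; then Corollary~\ref{cor:main corollary} delivers the desired code.

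First I would set up the degree data. For each $l$ I need $\deg m_{i_l,j_l}=d_{i_l}$ so that $\cD(M)_{i_l,j_l}=n\,d_{i_l}+\hat D_{i_l,j_l}$ (or $n(d_{i_l}-1)+\hat D_{i_l,j_l}$ when $j_l<i_l$) equals $\nu_l$. Writing $\nu_l=n\lfloor\nu_l/n\rfloor+r_l$ and recalling $\hat D_{i_l,j_l}=r_l$, this forces $d_{i_l}=\lfloor\nu_l/n\rfloor$ when $j_l\geq i_l$, and $d_{i_l}=\lfloor\nu_l/n\rfloor+1$ when $j_l<i_l$; in particular the compatibility hypothesis $j_i<i\Rightarrow d_i>0$ of Theorem~\ref{T-MatrixDegrees} holds automatically. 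For the $n-1-k$ indices $i\notin\{i_1,\ldots,i_k\}$ I would assign them the leftover columns $j_i\in\{1,\ldots,n\}\setminus\{j_1,\ldots,j_k\}$ arbitrarily (making $j_1,\ldots,j_{n-1}$ a full system of distinct columns in $\{1,\ldots,n\}$), and pick $d_i$ small, subject only to $j_i<i\Rightarrow d_i>0$. Now Theorem~\ref{T-MatrixDegrees} applies and yields a basic matrix $M'=(m_{ij})\in\F[t]^{(n-1)\times n}$ with row degrees $d_i$, rightmost maximal entry in column $j_i$, $m_{ii}(0)=1$, and $m_{ij}(0)=0$ for $j<i$.

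The next step is bookkeeping: reinterpret $M'$ as a matrix in $\cM$. Properties (iv) and (v) say exactly that $M'$ (extended by a zero row at the bottom, as in the remark following Theorem~\ref{T-MatrixDegrees}) lies in $\cM$ and is delay-free, and that its nonzero rows form a basic matrix, hence $M\in\cMbasic$. By construction the $b_a$ of Proposition~\ref{prop:DegMat}(4) for row $a$ (with $a\in\{i_1,\ldots,i_k\}$, or more generally any nonzero row) is $j_a$, and the columns $j_1,\ldots,j_{n-1}$ are pairwise distinct, so $M$ is semi-reduced by Proposition~\ref{prop:DegMat}(4). Finally, a row-index relabelling: I actually want the nonzero rows to sit in positions $i_1,\ldots,i_k$ and the zero rows elsewhere. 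The cleanest route is to apply Theorem~\ref{T-MatrixDegrees} with the roles of rows permuted so that the support ends up being $\{i_1,\ldots,i_k\}$; since the theorem is stated for rows $1,\ldots,n-1$ one conjugates by the permutation matrix sending $\{1,\ldots,k\}$ to $\{i_1,\ldots,i_k\}$ — this permutation is a unit of $\cM$ only if it is order-preserving on blocks, so more carefully one re-derives the analogue of Theorem~\ref{T-MatrixDegrees} with prescribed support, which is immediate from the same proof. Then Corollary~\ref{cor:main corollary} gives a $k$-dimensional $\sigma$-CCC with Forney indices $\max_j\cD(M)_{i_l,j}=\cD(M)_{i_l,j_l}=\nu_l$, as required. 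The second assertion is the special case $k=n-1$ together with the observation that any shorter list $\nu_1,\ldots,\nu_k$ can be padded, or handled directly since the rook solution for $k$ numbers is a sub-instance.

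**Main obstacle.** The genuinely hard content is entirely offloaded onto the hypothesis: the only real obstruction is whether Problem~\ref{P-chessboard} admits a solution, which the paper does not prove in general. Given the rook solution, the remaining work is the degree/support bookkeeping and the mild reindexing of Theorem~\ref{T-MatrixDegrees} to an arbitrary support set — routine, but worth spelling out so that the semi-reducedness and the $\cMbasic$ membership are visibly preserved under the relabelling.
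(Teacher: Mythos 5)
Your route is essentially the paper's: obtain a rook solution in rows $1,\dots,n-1$ via Remark~\ref{R-chessboard}, convert each $\nu_l$ into a polynomial degree $d_{i_l}$ and column $j_{i_l}$ to feed into Theorem~\ref{T-MatrixDegrees}, apply that theorem with leftover columns and low degrees assigned to the remaining rows, and conclude with Corollary~\ref{cor:main corollary}. Your degree bookkeeping (choosing $d_{i_l}=\lfloor\nu_l/n\rfloor$ or $\lfloor\nu_l/n\rfloor+1$ according as $j_{i_l}\geq i_l$ or $j_{i_l}<i_l$, so that\eqnref{eq:jd} holds automatically) agrees with the paper's\eqnref{eq:dl}.

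The one wobble is the last step. You construct an $(n-1)\times n$ matrix $M'$ all of whose rows are nonzero, then worry about a ``relabelling'' so that the support of the final matrix becomes $\{i_1,\dots,i_k\}$; you suggest conjugating by a permutation (which, as you note yourself, need not be a unit of $\cM$) and then fall back on re-deriving a prescribed-support variant of Theorem~\ref{T-MatrixDegrees}, asserting without proof that this is ``immediate from the same proof.'' None of that is needed. In your own construction the rook data already sit in rows $i_1,\dots,i_k$, so after extending $M'$ by a zero $n$-th row you simply set the remaining $n-1-k$ rows to zero. The result is still in $\cMbasic$, because any row-subset of a basic matrix is basic (rank $n-1$ at every $\lambda\in\overline{\F}$ forces linear independence of every subset of rows), and it is still semi-reduced because the surviving row maxima remain in distinct columns. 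That is precisely the paper's step of deleting the rows corresponding to the artificially added indices, and it closes the argument in two lines rather than by re-proving a variant of Theorem~\ref{T-MatrixDegrees}.
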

%%%%%%%%%%%%%%%%%%%%%%%%%%%%%%%%%%
\begin{proof}
Let $k\in\{1,\ldots,n-1\}$ and $\nu_1,\ldots,\nu_k\in\N_0$.
Write $\nu_l=\hat{d}_ln+r_l$, where $\hat{d}_l\in\N_0$ and $0\leq r_l\leq n-1$.
By assumption and Remark~\ref{R-chessboard} there exist distinct
$i_1,\ldots,i_k\in\{1,\ldots,n-1\}$ and distinct $j_1,\ldots,j_k\in\{1,\ldots,n\}$
such that $\hat{D}_{i_l,j_l}=r_l$ for $l=1,\ldots,k$.
Pick $1\leq i_{k+1},\ldots,i_{n-1}\leq n-1$ and $1\leq j_{k+1},\ldots,j_{n-1}\leq n$
such that $i_1,\ldots,i_{n-1}$ as well as $j_1,\ldots, j_{n-1}$ are pairwise different.
Define $r_l:=\hat{D}_{i_l,j_l}$ for $l=k+1,\ldots,n-1$ and put $\hat{d_l}:=0$ and
$\nu_l:=r_l$ for $l=k+1,\ldots,n-1$.
Now we re-index the numbers $r_1,\ldots,r_{n-1}$ in order to obtain
\begin{equation}\label{eq:Dhatl}
   \hat{D}_{l,j_l}=r_l\text{ for }l=1,\ldots,n-1.
\end{equation}
Define
\begin{equation}\label{eq:dl}
  d_l:=\left\{\begin{array}{ll} \hat{d}_l&\text{if }j_l\geq l,\\ \hat{d}_l+1&\text{if } j_l<l.\end{array}\right.
\end{equation}
Then\eqnref{eq:jd} is true and Theorem~\ref{T-MatrixDegrees} guarantees the existence of
a matrix $\tilde{M}=(m_{ij})\in\F[t]^{(n-1)\times n}$ satisfying (i)~--~(v).
Extending~$\tilde{M}$ by a zero row at the bottom results in a semi-reduced matrix
$M\in\cMbasic$, see also Proposition~\ref{prop:DegMat}(4).
Using\eqnref{eq:Dhat}, we see that the maxima in the first $n-1$ rows of~$\cD(M)$ are
given by
\[
   \cD(M)_{l,j_l}=\left\{\begin{array}{ll}
        nd_l+\hat{D}_{l,j_l}=n \hat{d}_l+r_l=\nu_l, &\text{if }j_l\geq l,\\[.5ex]
        n(d_l-1)+\hat{D}_{l,j_l}=n\hat{d}_l+r_l=\nu_l,&\text{if }j_l< l.\end{array}\right.
\]
Finally, deleting the rows of~$M$ corresponding to the $n-1-k$ artificially added
indices $\nu_{k+1},\ldots,\nu_{n-1}$ (in the original ordering) we obtain a
semi-reduced matrix $N\in\cMbasic$, and Corollary~\ref{cor:main corollary} shows that
$\p^{-1}(\lideal{\xi^{-1}(N)})$ is a $k$-dimensional code with Forney indices
$\nu_1,\ldots,\nu_k$.
\end{proof}

%%%%%%%%%%%%%%%%%%%%%%%%%%%%
\begin{exa}\label{Exa:FIs}
Suppose we want a $3$-dimensional $\sigma$-CCC $\cC\subseteq\F_5[z]^4$ with Forney
indices $4,\,3,\,3$. Thus, $q=5$ and $n=4$.
The remainders modulo~$n$ of the desired Forney indices are $0,\,3,\,3$, and by
inspection we find
$\hat{D}_{1,4}=3=:r_1,\; \hat{D}_{2,1}=3=:r_2,\; \hat{D}_{3,3}=0=:r_3$.
Thus,\eqnref{eq:Dhatl} is true for $(j_1,j_2,j_3)=(4,1,3)$.
This gives us the ordering $\nu_1=3,\,\nu_2=3,\,\nu_3=4$ of the Forney indices, and we
have $\hat{d}_1=\hat{d}_2=0$ and $\hat{d}_3=1$.
Following\eqnref{eq:dl} we put $d_1=0,\,d_2=1,\,d_3=1$.
Then we have all data for Theorem~\ref{T-MatrixDegrees}.
One easily sees that the matrix
\[
   M=\begin{pmatrix}1&0&0&1\\ t&1&0&0\\ 0&0&1+t&1\end{pmatrix}
\]
is basic and satisfies~(i) --~(v) of that theorem.
Adding a zero row at the bottom gives us a semi-reduced matrix $N\in\cMbasic$, and
using Corollary~\ref{cor:main corollary} we see that
$\cC=\p^{-1}\big(\lideal{\xi^{-1}(N)}\big)$ is a $3$-dimensional
$\sigma$-CCC in $\F_5[z]^4$ with Forney indices $3,\,3,\,4$.
An encoder matrix of~$\cC$ can be obtained as follows.
The rows of~$M$ along with the isomorphism~$\xi$ result in the component polynomials
$g^{(1)}=e_1+z^3e_4,\,g^{(2)}=e_2+z^3e_1,\,g^{(3)}=e_3+ze_4+z^4e_3$.
Using the same identification of~$A$ with $\F_5[z]/\ideal{x^4-1}$ as in
Example~\ref{Exa:reduction2} we obtain from Theorem~\ref{T-ideals}(4) that
\[
    G=\begin{pmatrix}\p^{-1}(g^{(1)})\\ \p^{-1}(g^{(2)})\\ \p^{-1}(g^{(3)})\end{pmatrix}
     =\begin{pmatrix}4z^3+4& 3z^3+4&z^3+4&2z^3+4\\
        4z^3+4&4z^3+2&4z^3+1&4z^3+3\\4z^4+4z+4&z^4+3z+1&4z^4+z+4&z^4+2z+1\end{pmatrix}
\]
is a minimal encoder of~$\cC$.
It should be pointed out that the distance of this code is far from being optimal.
This is due to the abundance of zeros in the matrix~$M$ causing many zero coefficients
in the matrix~$G$.
\end{exa}
%%%%%%%%%%%%%%%%%%%%%%%%%%%%%%%%%%%%%%%%%%%%%%%%%

Let us briefly mention the following converse of Theorem~\ref{T-existenceFI}.
Indeed, it is easy to see that the existence of $(n-1)$-dimensional
$\sigma$-CCC's with arbitrarily prescribed Forney indices implies the solvability of
Problem~\ref{P-chessboard} for any given numbers
$r_1,\ldots,r_{n-1}\in\{0,\ldots,n-1\}$.
In more detail, the existence of such codes implies the existence of semi-reduced
polynomials~$g\in\Azs$ with support satisfying $|T_g|=n-1$ and arbitrarily
given degrees of its nonzero components.
Using Proposition~\ref{prop:DegMat}(3) this shows the solvability of
Problem~\ref{P-chessboard} for $k=n-1$, and thus for arbitrary $k\in\{1,\ldots,n-1\}$.

While the general formulation of Theorem~\ref{T-existenceFI} is based on the assumption
that we can solve Problem~\ref{P-chessboard} we have some specific cases with
fully established existence results.

%%%%%%%%%%%%%%%%%%%%%%%%%%%%%%%%%%%
\begin{theo}\label{T-kCCC}
Let $1\leq k\leq \frac{n+1}{2}$.
Then for all $\nu_1,\ldots,\nu_k\in\N_0$ there exists a $k$-dimensional $\sigma$-CCC
having Forney indices $\nu_1,\ldots,\nu_k$.
\end{theo}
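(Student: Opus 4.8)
The plan is to reduce the statement to the Modified Rook Problem and then settle that combinatorial problem directly. By Theorem~\ref{T-existenceFI} it suffices to prove that Problem~\ref{P-chessboard} is solvable for every choice of $r_1,\ldots,r_k\in\{0,\ldots,n-1\}$ whenever $k\le\frac{n+1}{2}$; applying this with $r_l=\nu_l\mod n$ then yields a $k$-dimensional $\sigma$-CCC with Forney indices $\nu_1,\ldots,\nu_k$.

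First I would reformulate Problem~\ref{P-chessboard}. Identifying $\{1,\ldots,n\}$ with $\Z/n\Z$ and using the formula $\hat D_{ab}=b-a\mod n$ from\eqnref{eq:Dhatab}, a solution $\hat D_{i_l,j_l}=r_l$ amounts to choosing pairwise distinct $i_1,\ldots,i_k\in\Z/n\Z$ for which the elements $i_1+r_1,\ldots,i_k+r_k$ are again pairwise distinct; one then simply sets $j_l:=i_l+r_l$.

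The main step is a greedy construction of such $i_1,\ldots,i_k$ by induction on $m=1,\ldots,k$. Assume $i_1,\ldots,i_{m-1}$ have been chosen so that the families $(i_l)_{l<m}$ and $(i_l+r_l)_{l<m}$ each consist of distinct residues. For the next index, $i_m$ must avoid the $m-1$ values $i_l$ (to keep $(i_l)_{l\le m}$ injective) and the $m-1$ values $i_l+r_l-r_m$ (to keep $(i_l+r_l)_{l\le m}$ injective), so at most $2(m-1)$ residues are forbidden. Since $m\le k\le\frac{n+1}{2}$ we have $2(m-1)\le 2(k-1)\le n-1<n$, hence an admissible $i_m$ exists. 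Iterating up to $m=k$ produces a solution of Problem~\ref{P-chessboard}, and Theorem~\ref{T-existenceFI} then completes the proof.

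I do not anticipate a real obstacle: the only point requiring care is the counting of forbidden residues, and the bound $2(k-1)<n$ is exactly what the hypothesis $k\le\frac{n+1}{2}$ delivers. Repetitions among the $r_l$ only shrink the forbidden set, so they cause no trouble; and although the inductive step could equally be phrased as a Hall-type matching argument, the direct count seems shortest.
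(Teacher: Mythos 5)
Your proposal is correct and follows essentially the same route as the paper: reduce to the Modified Rook Problem via Theorem~\ref{T-existenceFI}, then settle that problem for $k\le\frac{n+1}{2}$ by a greedy inductive argument. The only cosmetic difference is in how the inductive step is counted: you work in $\Z/n\Z$ and bound the union of the two forbidden sets $\{i_l\}$ and $\{i_l+r_l-r_m\}$ by $2(m-1)\le n-1$, whereas the paper permutes $\hat D$ into block form and argues by pigeonhole that $r_k$ must appear in the lower-right block $\tilde D_4$ because it occurs at most $k-1$ times in the $k-1$ columns of $\tilde D_3$ while $\tilde D_3$ has $n-k+1>k-1$ rows; these are the same count in different clothing, and your version is arguably the more transparent of the two.
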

%%%%%%%%%%%%%%%%%%%%%%%%%%%%%%%%%%
\begin{proof}
Using Theorem~\ref{T-existenceFI} it suffices to show that Problem~\ref{P-chessboard}
can be solved for any given numbers $r_1,\ldots,r_k\in\{0,\ldots,n-1\}$ if
$k\leq\frac{n+1}{2}$.
First of all, it is clear that there exists $j_1$ such that $\hat{D}_{1,j_1}=r_1$, and
we will proceed by induction.
Thus, let us assume that we found distinct indices $i_1,\ldots,i_{k-1}$ and
$j_1,\ldots,j_{k-1}$ such that $\hat{D}_{i_l,j_l}=r_l$ for $l=1,\ldots,k-1$.
After a suitable permutation of the rows and columns of~$\hat{D}$ we obtain a matrix
\[
   \tilde{D}=\begin{pmatrix}\tilde{D}_1&\tilde{D}_2\\\tilde{D}_3&\tilde{D}_4\end{pmatrix}
   \text{ where }\tilde{D}_1=\text{diag}(r_1,\ldots,r_{k-1})\in\Z^{(k-1)\times(k-1)}
\]
and where the other matrices are of fitting sizes.
In particular, the matrix~$\tilde{D}_3$ is of size $(n-k+1)\times(k-1)$.
Since the entries of each row (resp.\ column) of~$\tilde{D}$ are pairwise different and,
by assumption, $n-k+1>k-1$ there exists at least one row of~$\tilde{D}_3$ that does not
contain the entry~$r_k$.
But then~$r_k$ must occur in the submatrix~$\tilde{D}_4$, and therefore we have found
$r_1,\ldots,r_k$ in the matrix~$\hat{D}$ in pairwise different columns and rows.
\end{proof}

One might wonder whether the last result can be extended to codes with arbitrary
dimension by using dual codes.
Recall that the dual~$\cC^{\perp}$ of a $k$-dimensional code~$\cC\subseteq\F[z]^n$ has
dimension $n-k$.
However, it is a well-known fact in convolutional coding theory that the Forney indices
of the dual code are not determined by the Forney indices of the given code.
This is also true in the special case of $\sigma$-CCC's as one can easily see by some
examples.
As a consequence, Theorem~\ref{T-kCCC} does not imply any existence results for codes
with higher dimension.

In the next section we will show that we can solve Problem~\ref{P-chessboard} for
parameters $r_1,\ldots,r_{n-1}$ that attain at most two different values,
see Proposition~\ref{P-twodiff}.
Consequently, we have the following result.
%%%%%%%%%%%%%%%%%%%%%%%%%%%%%%%%
\begin{theo}\label{T-existFI}
Let $1\leq k\leq n-1$ and $\nu_1,\ldots,\nu_k\in\N_0$.
If $|\{\nu_1\mod n,\ldots,\nu_k\mod n\}|\leq 2$ then there exists a $k$-dimensional $\sigma$-CCC in $\F[z]^n$ with Forney indices $\nu_1,\ldots,\nu_k$.
\end{theo}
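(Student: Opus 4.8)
The plan is to derive Theorem~\ref{T-existFI} directly from Theorem~\ref{T-existenceFI} by verifying the hypothesis of the latter. Concretely, given $\nu_1,\ldots,\nu_k\in\N_0$ with $|\{\nu_1\mod n,\ldots,\nu_k\mod n\}|\leq 2$, set $r_l=\nu_l\mod n$ for $l=1,\ldots,k$. By assumption the $r_l$ take at most two distinct values, so Proposition~\ref{P-twodiff} (to be established in Section~\ref{S-rook}) guarantees that Problem~\ref{P-chessboard} is solvable for $r_1,\ldots,r_k$. Theorem~\ref{T-existenceFI} then immediately yields a $k$-dimensional $\sigma$-CCC in $\F[z]^n$ with Forney indices $\nu_1,\ldots,\nu_k$, which is exactly the claim.

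The only genuine content is therefore the combinatorial statement, Proposition~\ref{P-twodiff}, which is where the work will happen in the next section; in the present proof I would just invoke it. For completeness of the reasoning I would note that $k\leq n-1$ is precisely the range of dimensions for which $\sigma$-CCC's are being considered in Corollary~\ref{cor:main corollary} and Theorem~\ref{T-existenceFI}, so there is no range issue: every admissible $k$ is covered. There is also no difficulty with the case $k$ small (e.g.\ $k=1$ or the $r_l$ all equal), since a single value is a fortiori at most two values, so the hypothesis of Proposition~\ref{P-twodiff} is met.

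The main obstacle is thus entirely deferred: it lies in proving that the modified rook problem is solvable when the target numbers $r_1,\ldots,r_{n-1}$ attain at most two values, i.e.\ Proposition~\ref{P-twodiff}. Once that is in hand, the proof of Theorem~\ref{T-existFI} is a one-line deduction. I would write it as: ``Set $r_l=\nu_l\mod n$. By hypothesis $|\{r_1,\ldots,r_k\}|\leq 2$, so Proposition~\ref{P-twodiff} shows that Problem~\ref{P-chessboard} is solvable for $r_1,\ldots,r_k$. Theorem~\ref{T-existenceFI} now gives the asserted code. $\Box$''
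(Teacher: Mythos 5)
Your proof is correct and follows exactly the route the paper intends: the paper itself gives no separate proof of Theorem~\ref{T-existFI}, but states it as an immediate consequence of Proposition~\ref{P-twodiff} together with the first part of Theorem~\ref{T-existenceFI}, which is precisely your one-line deduction.

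One small point is worth making explicit, because Proposition~\ref{P-twodiff} is formulated only for $(n-1)$-tuples $r\in\Z_n^{n-1}$, not for arbitrary $k$-tuples with $k<n-1$. When $k<n-1$, you should pad $r_1,\ldots,r_k$ to a full $(n-1)$-tuple, say by setting $r_{k+1}=\cdots=r_{n-1}:=r_1$; this keeps $|\{r_1,\ldots,r_{n-1}\}|\le 2$, so Proposition~\ref{P-twodiff} applies, and a solution of Problem~\ref{P-chessboard} for the $(n-1)$-tuple restricts (by simply discarding the last $n-1-k$ rook positions) to a solution for the original $r_1,\ldots,r_k$. Your current phrasing, \emph{``Proposition~\ref{P-twodiff} shows that Problem~\ref{P-chessboard} is solvable for $r_1,\ldots,r_k$,''} silently uses this padding-and-restriction step; making it explicit closes the only gap. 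Everything else — reducing $\nu_l$ modulo $n$, invoking Theorem~\ref{T-existenceFI}, and deferring the real content to the combinatorial proposition — is exactly the paper's line of argument.
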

%%%%%%%%%%%%%%%%%%%%%%%%%%%%%%%%%

We will close this section with the

\noindent{\sc Proof of Theorem~\ref{T-MatrixDegrees}:}
\\[.5ex]
We assume that~$\F$ is any finite field and $n\geq 2$.
The following notation will be helpful.
For a matrix $A\in\F[t]^{n\times (n+1)}$ and $l=1,\ldots,n+1$ let~$A^{(l)}$ denote the
$n$-minor of~$A$ obtained by omitting column~$l$.
Recall that~$A$ is basic if and only if the polynomials $A^{(1)},\ldots,A^{(n+1)}$ are
coprime.

We will prove even more than stated in Theorem~\ref{T-MatrixDegrees}.
We will show that for the given data there exists a matrix $M$ satisfying the
requirements of the theorem and with the following additional properties:
\begin{romanlist}\setcounter{abc}{5}
\item The only nonzero elements below the diagonal of~$M$ are at the positions $(i,j_i)$
      where $j_i<i$ and they are of the form~$t^{d_i}$. This, of course, implies~(v).
\item If $j_i<i$, then the only nonconstant element in the $i$-th row is at position
      $(i,j_i)$.
\end{romanlist}
Using~(vi), part~(vii) tells us that if $j_i<i$ then all elements at positions $(i,j)$
where $j\geq i$ are constant.
\\[.6ex]
We will proceed by induction on~$n$.
Let $n=2$ and $j_1,\,d_1$ be given.
For $j_1=1$ the matrices $M=(1+t^{d_1},1)$, if $d_1>0$, and $M=(1,0)$, if $d_1=0$,
are basic and have the properties~(i) --~(vii).
If $j_1=2$, the matrix $M=(1,t^{d_1})$ satisfies all requirements.
\\[.6ex]
Let now $n\geq2$ and assume that for all possible $j_1,\ldots,j_{n-1}$ and
$d_1,\ldots,d_{n-1}\in\N_0$ a basic matrix $M\in\F[t]^{(n-1)\times n}$
satisfying~(i) --~(vii) exists.
Throughout this proof we will call such a matrix a solution for the parameters
$(j_1,\ldots,j_{n-1};d_1,\ldots,d_{n-1})$.
Notice that due to~(iv) and~(v) we have $t\nmid M^{(n)}$.
\\[.6ex]
Assume now we have pairwise different indices $j_1,\ldots,j_{n}\in\{1,\ldots,n+1\}$ and
integers $d_1,\ldots,d_{n}\in\N_0$ such that\eqnref{eq:jd} holds true.
We will show the existence of a basic matrix $M\in\F[t]^{n\times(n+1)}$ satisfying~(i)
--~(vii) separately for each of the following cases.
\\[.6ex]
\underline{Case 1: $j_n=n+1$.} Then $j_1,\ldots,j_{n-1}\leq n$ and by induction
hypothesis there exists a solution
$\hat{M}=(\hat{m}_1,\ldots,\hat{m}_n)\in\F[t]^{(n-1)\times n}$ for the parameters
$(j_1,\ldots,j_{n-1};d_1,\ldots,d_{n-1})$.
Put
\[
    M=\left(\!\begin{array}{cccc|c}\hat{m}_1&\cdots&\hat{m}_{n-1}&\hat{m}_n& 0\\[.2ex]
    \hline
    0     &\cdots&  0          &1        & t^{d_n}\end{array}\!\right)
    \in\F[t]^{n\times(n+1)}.
\]
Since $M_{n,n}=1$ and~$\hat{M}$ satisfies~(i) --~(vii) the same is true for~$M$.
Moreover,~$M$ is basic as we can see by considering the $n$-minors.
Indeed, we have $M^{(i)}=\pm t^{d_n}\hat{M}^{(i)}$ for $i=1,\ldots,n$ and
$M^{(n+1)}=\pm\hat{M}^{(n)}$.
Now the coprimeness of $M^{(1)},\ldots,M^{(n+1)}$ follows from the basicness
of~$\hat{M}$ along with the fact that $t\nmid \hat{M}^{(n)}$.
\\[.6ex]
\underline{Case 2: $j_n=n$ and $d_n=0$.} For $i=1,\ldots,n-1$ put
\[
     \hat{j}_i=\left\{\begin{array}{ll}j_i&\text{if } j_i<n\\ n&\text{if } j_i=n+1
               \end{array}\right.
\]
(notice that the case $j_i=n+1$ need not occur).
Then the indices $(\hat{j}_1,\ldots,\hat{j}_{n-1};d_1,\ldots,d_{n-1})$
satisfy\eqnref{eq:jd} and thus there exists a solution
$\hat{M}=(\hat{m}_1,\ldots,\hat{m}_n)\in\F[t]^{(n-1)\times n}$ for these parameters.
Put
\[
   M=\left(\!\begin{array}{cccc|c}\hat{m}_1&\cdots&\hat{m}_{n-1}&0&\hat{m}_n\\[.2ex]
   \hline
   0     &\cdots&  0          &1        &0\end{array}\!\right)\in\F[t]^{n\times(n+1)}.
\]
It is easy to see that~$M$ is a solution for $(j_1,\ldots,j_n;d_1,\ldots,d_n)$.
\\[.6ex]
\underline{Case 3: $j_n=n$ and $d_n>0$.} Choose~$\hat{M}$ as in the previous case and
put
\[
   M=\left(\!\begin{array}{cccc|c}
      \hat{m}_1&\cdots&\hat{m}_{n-1}&\hat{m}_n&\hat{m}_n\\[.2ex] \hline
      0     &\cdots&  0          &1+t^{d_n}&1\end{array}\!\right)\in\F[t]^{n\times(n+1)}.
\]
In this case the $n$-minors of~$M$ are given by $M^{(n+1)}=\pm(1+t^{d_n})\hat{M}^{(n)}$
and $M^{(n)}=\pm\hat{M}^{(n)}$, whereas for $i=1,\ldots,n-1$ we have by expansion along
the last row
$M^{(i)}=\pm\big((1+t^{d_n})\hat{M}^{(i)}-\hat{M}^{(i)}\big)=\pm t^{d_n}\hat{M}^{(i)}$.
Again, basicness of~$\hat{M}$ along with the fact that $t\nmid \hat{M}^{(n)}$ implies
basicness of~$M$.
It is easy to see that $M$ satisfies the properties~(i) --~(vii).
\\[.6ex]
\underline{Case 4: $j_n=:\alpha<n$.} Then $d_n>0$.
We have to distinguish further cases.
\\[.6ex]
(a) If $j_i\leq n$ for all $i=1,\ldots,n-1$ let $\hat{M}=(\hat{m}_1,\ldots,\hat{m}_n)$
be a solution for the parameters $(j_1,\ldots,j_{n-1};d_1,\ldots,d_{n-1})$ and put
\[
   M=\left(\!\begin{array}{ccccc|c}
   \hat{m}_1&\cdots&\hat{m}_\alpha&\cdots&\hat{m}_n&0\\[.2ex] \hline
     &      &t^{d_n}       &      &1&1  \end{array}\!\right)\in\F[t]^{n\times(n+1)}.
\]
It is easy to see that $M$ is basic and satisfies~(i) --~(vii).
\\[.6ex]
(b) Suppose now there exists some index $\beta$ such that $j_{\beta}=n+1$ and assume
that $\beta\leq\alpha$.
Then we may proceed as follows.
For $i=1,\ldots,n-1$ put
\[
   \hat{j}_i=\left\{\begin{array}{ll}j_i&\text{if } i\not=\beta\\
          \alpha &\text{if } i=\beta\end{array}\right.
\]
Then $\hat{j}_1,\ldots,\hat{j}_{n-1}$ are pairwise different and
$(\hat{j}_1,\ldots,\hat{j}_{n-1}; d_1,\ldots,d_{n-1})$ satisfy\eqnref{eq:jd}.
Thus by induction hypothesis there exists a solution
$\hat{M}=(\hat{m}_1,\ldots,\hat{m}_n)\in\F[t]^{(n-1)\times n}$ for the parameters
$(\hat{j}_1,\ldots,\hat{j}_{n-1};d_1,\ldots,d_{n-1})$. Put
\[
   M=\left(\!\begin{array}{ccccc|c}
     \hat{m}_1&\cdots&\hat{m}_\alpha&\cdots&\hat{m}_n&\hat{m}_\alpha\\[.2ex] \hline
        &      &t^{d_n}       &      &1&0   \end{array}\!\right)\in\F[t]^{n\times(n+1)},
\]
where for $j\in\{1,\ldots,n-1\}\backslash\{\alpha\}$ a zero entry occurs at the position
$(n,j)$.
This matrix does not yet satisfy~(i)~--(vii), and we will take care of it later on.
Let us first show that~$M$ is basic.
Computing the $n$-minors we obtain
$M^{(n+1)}=\pm t^{d_n}\hat{M}^{(\alpha)}\pm\hat{M}^{(n)}$ and
$M^{(\alpha)}=\pm \hat{M}^{(n)}$ whereas for $i\not\in\{\alpha,n+1\}$ we have
$M^{(i)}=\pm t^{d_n}\hat{M}^{(i)}$ (since the cofactor of the entry~$1$ in the last row
is zero).
Hence again $t\nmid \hat{M}^{(n)}$ together with the basicness of~$\hat{M}$ implies the
basicness of~$M$.
Let us now turn to the properties~(i) --~(vii).
First of all it is easy to see that~$M=(m_{ij})$ satisfies (i),~(ii),~(iv), and~(v).
In particular, in the row $\beta$, where $j_{\beta}=n+1$, we have by construction
$\deg m_{\beta,n+1}=d_{\beta}$ as well as $\deg m_{\beta,j}\leq d_{\beta}$ for $j<n+1$.
Furthermore, properties~(vi) and~(vii) are satisfied since they are true for~$\hat{M}$
along with the facts that $\beta\leq\alpha$ and $j_{\beta}=n+1>\beta$.
Thus, let us turn to~(iii).
By construction property~(iii) is satisfied for those indices~$i$ for which $j_i<\alpha$.
The only obstacle occurs when $j_i>\alpha$.
In this case we also have $\deg m_{ij}< d_i$ for $j_i<j<n+1$, but the entry in the last
column does not necessarily satisfy this degree constraint.
Due to property~(iii) for~$\hat{M}$ we have instead $\deg m_{i,n+1}\leq d_i$ if
$j_i\geq\alpha$.
We will now perform elementary column operations in order to meet this final degree
constraint.
These column operations will only change the last column of~$M$ and do not destroy any
of the properties mentioned above.
For $l=\alpha+1,\ldots,n$ we consecutively perform the following steps.
If there exists an index~$i_0$ such that~$j_{i_0}=l$ and $\deg m_{i_0,n+1}=d_{i_0}$ then
we add a suitable constant multiple of the $l$-th column of~$M$ to the last column such
that the resulting entry at position $(i_0,n+1)$ has degree strictly less
than~$d_{i_0}$.
This is possible since $\deg m_{i_0,j_{i_0}}=d_{i_0}$.
Now the resulting matrix satisfies~(iii) for all indices~$i$ such that $j_i\leq l$.
Moreover, all other degree constraints remain valid.
In particular, in the row~$\beta$ for which $j_{\beta}=n+1$ we still have
$\deg m_{\beta,n+1}=d_{\beta}$.
This way we finally obtain a solution for the parameters
$(j_1,\ldots,j_n;d_1,\ldots,d_n)$.
\\[.6ex]
(c) It remains to consider the case where there exists some index~$\beta>\alpha$ such
that $j_{\beta}=n+1$.
Notice that $\beta<n$. For $i=1,\ldots,n-1$ put
\[
   (\hat{j}_i,\hat{d}_i)=\left\{\begin{array}{ll}(j_i,d_i)&\text{if } i\not=\beta\\
                 (\alpha,d_n+d_{\beta}) &\text{if } i=\beta\end{array}\right.
\]
Then $\hat{j}_1,\ldots,\hat{j}_{n-1}$ are pairwise different and
$(\hat{j}_1,\ldots,\hat{j}_{n-1}; \hat{d}_1,\ldots,\hat{d}_{n-1})$
satisfies\eqnref{eq:jd} since $\hat{d}_{\beta}=d_n+d_{\beta}\geq d_n>0$.
Thus, by induction hypothesis there is a solution
$\hat{M}=(\hat{m}_1,\ldots,\hat{m}_n)\in\F[t]^{(n-1)\times n}$ for the parameters
$(\hat{j}_1,\ldots,\hat{j}_{n-1};\hat{d}_1,\ldots,\hat{d}_{n-1})$. Put
\[
   M=\left(\!\begin{array}{ccccc|c}
     \hat{m}_1&\cdots&\hat{m}_\alpha&\cdots&\hat{m}_n&0\\[.2ex] \hline
       &      &t^{d_n}       &      &1&1   \end{array}\!\right)\in\F[t]^{n\times(n+1)},
\]
where for $i\in\{1,\ldots,n-1\}\backslash\{\alpha\}$ a zero entry occurs at the position
$(n,i)$.
The $n$-minors of~$M$ are given by
$M^{(n+1)}=\pm\hat{M}^{(n)}\pm t^{d_n}\hat{M}^{(\alpha)}$ and
$M^{(n)}=\pm\hat{M}^{(n)}$, whereas for $i<n$ we have $M^{(i)}=\pm\hat{M}^{(i)}$.
Thus~$M$ is basic.
Furthermore, properties (i) --~(vii) are satisfied for all rows with index $i\not=\beta$,
and only the $\beta$-th row needs to be adjusted.
Since~$\hat{M}$ satisfies~(iv),~(vi), and~(vii) the $\beta$-th row and the $n$-th row
of~$M$ are given by
\begin{align*}
  &(0,\ldots,0,t^{d_n+d_{\beta}},0,\ldots,0,f_{\beta},f_{\beta+1},\ldots,f_n,0),
     \text{ for some }f_l\in\F,\\[.5ex]
  &(0,\ldots,0,\ t^{d_n}\quad\, ,0,\ldots,0,\;0\ ,\ 0\quad ,\ldots,\;1\ ,1),
\end{align*}
respectively, where the entries $t^{d_n+d_{\beta}}$ and $t^{d_n}$ appear in the
$\alpha$-th position, and the entry~$f_{\beta}$ is in the $\beta$-th position.
Moreover, $f_{\beta}=1$ by~(iv).
Now we see that we may subtract $t^{d_{\beta}}$ times the $n$-th row of~$M$ from the
$\beta$-th row in order to obtain a new matrix~$M'$ where the $\beta$-th row is of the
form
\[
     (0,\quad\ldots\ldots\ldots\quad,0,f_{\beta},f_{\beta+1},\ldots,
     f_n-t^{d_{\beta}},-t^{d_{\beta}}),
\]
where still the entry $f_{\beta}$ is in the $\beta$-th position.
Since $j_{\beta}=n+1$ now properties (i)~--(vii) are satisfied for $i=\beta$, whereas
the other rows did not change.
This finally shows that~$M'$ satisfies all requirements~(i) --~(vii).
\hfill$\Box$

%%%%%%%%%%%%%%%%%%%%%%%%%%%%%%%%%%%%%%%%%%%%%
\section{The Modified Rook Problem}\label{S-rook}
\setcounter{equation}{0}
%%%%%%%%%%%%%%%%%%%%%%%%%%%%%%%%%%%%%%%%%%%%%%%%%%%%%%%
In this short section we will briefly discuss Problem~\ref{P-chessboard} for
$k:=n-1$ given numbers.
First of all, notice that the matrix~$\hat{D}$ in\eqnref{eq:chessboard} is the
addition table of the group~$\Z_n:=\Z/n\Z$ if the elements are ordered suitably.
This has actually been used implicitly in Remark~\ref{R-chessboard}.
The additive group~$\Z_n$ allows us to reformulate the problem.
In order to do so let
\[
  \cP:=\{(x_1,\ldots,x_{n-1})\in\Znn\mid x_1,\ldots,x_{n-1}\text{ are pairwise different}\}
\]
and
\[
   \cS:=\{r\in\Znn\mid \exists\; x,\,y\in\cP:\; r=x+y\}.
\]
As a consequence, Problem~\ref{P-chessboard} is solvable for all
$r\in\Znn$ if and only if $\cS=\Znn$.
Here are some simple properties of the set~$\cS$.
%%%%%%%%%%%%%%%%%%%%%%%%%%%%%%%%
\begin{prop}\label{P-S}\
\begin{romanlist}
\item $\gamma\one\in\cS$ for all $\gamma\in\Z_n$.
\item If $r\in\cS$, then $\tau(r)\in\cS$ for all permutations $\tau$ in the symmetric
      group $S_{n-1}$.
\item If $r\in\cS$, then $\gamma r\in\cS$ for all $\gamma\in\Z_n^\times$.
\item If $r\in\cS$, then $r+\gamma\one\in\cS$ for all $\gamma\in\Z_n$.
\item $\cP\subseteq\cS$.
\end{romanlist}
\end{prop}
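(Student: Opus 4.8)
The plan is to prove parts~(i)--(iv) by short direct arguments and to reduce part~(v) --- via (ii) and (iv) --- to a single explicit decomposition, treating $n$ odd and $n$ even separately.

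For~(i) I would exhibit witnesses directly: with $x:=(0,1,2,\ldots,n-2)$ and $y:=(\gamma,\gamma-1,\ldots,\gamma-(n-2))$ one has $x,y\in\cP$ (the entries of~$y$ are a translate of the pairwise distinct entries of~$x$) and $x+y=\gamma\one$. For~(ii),~(iii),~(iv) suppose $r=x+y$ with $x,y\in\cP$. Then $\tau(r)=\tau(x)+\tau(y)$ with $\tau(x),\tau(y)\in\cP$, since permuting coordinates preserves pairwise-distinctness, proving~(ii); $\gamma r=\gamma x+\gamma y$ with $\gamma x,\gamma y\in\cP$, since multiplication by a unit is injective on~$\Z_n$, proving~(iii); and $r+\gamma\one=x+(y+\gamma\one)$ with $x,\,y+\gamma\one\in\cP$, since adding $\gamma\one$ preserves pairwise-distinctness, proving~(iv).

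For~(v), let $x\in\cP$; its coordinates form a set $\Z_n\setminus\{m\}$ for a unique $m\in\Z_n$. Applying~(ii) to sort the coordinates, then~(iv) to subtract $m\one$, then~(ii) once more, I may assume $x=x^\ast:=(1,2,\ldots,n-1)$, so it suffices to show $x^\ast\in\cS$. If $n$ is odd then $2\in\Z_n^\times$, so $y:=2^{-1}x^\ast\in\cP$ and $x^\ast=y+y\in\cS$ (in this case the reduction is superfluous, as $\cP=2\cP\subseteq\cS$ outright). If $n=2m$ is even I would take $u$ to be the $(n-1)$-tuple listing the even residues $0,2,4,\ldots,2m-2$ and then the odd residues $1,3,\ldots,2m-3$, and put $v:=x^\ast-u$. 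The coordinates of~$u$ are precisely $\Z_n\setminus\{n-1\}$, so $u\in\cP$; a direct computation gives $v_i=2-i$ for $1\le i\le m$ and $v_i=2m+1-i$ for $m<i\le n-1$, and one checks that these $n-1$ values run exactly over $\Z_n\setminus\{m+1\}$, whence $v\in\cP$ and $x^\ast=u+v\in\cS$.

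I expect the even case of~(v) to be the only real obstacle. The naive attempt $u_i=ci,\ v_i=(1-c)i$ fails whenever $n$ is even, because $c$ and $1-c$ have opposite parities and so cannot both be units modulo~$n$; the tuple~$u$ above is a concrete substitute (a near-complete mapping of the interval), and the single genuine --- but short --- computation in the proof is the verification that its complement in~$x^\ast$ again has pairwise distinct entries. Everything else --- parts~(i)--(iv) and the odd case of~(v) --- is routine.
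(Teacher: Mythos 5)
Your proof is correct and follows the same overall skeleton as the paper's: parts (i)--(iv) by routine observations, and part (v) by using (ii) and (iv) to reduce to the single tuple $(1,2,\ldots,n-1)$ and then exhibiting an explicit decomposition, split by the parity of~$n$. Where you differ is in the witnesses for part~(v). For $n$ odd the paper writes down two explicit vectors $x,y$ and checks $s=y-x$ coordinate by coordinate; you instead note that $2\in\Z_n^\times$, so $y=2^{-1}x^\ast\in\cP$ and $x^\ast=y+y$ (equivalently $\cP=2\cP\subseteq\cS$) --- this is shorter, conceptually cleaner, and in fact renders the reduction step unnecessary in the odd case. For $n=2m$ even, your choice of $u$ (even residues $0,2,\ldots,2m-2$ followed by odd residues $1,3,\ldots,2m-3$) and the resulting $v=x^\ast-u$ with coordinate set $\Z_n\setminus\{m+1\}$ is a different explicit decomposition from the paper's (which lists $x$ starting at $\frac{n}{2}$ and $y$ via a parity/shift pattern), but both are comparably concrete; neither is clearly preferable. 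The two approaches buy essentially the same thing, with yours giving a slightly slicker treatment of the odd case at no cost elsewhere.
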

%%%%%%%%%%%%%%%%%%%%%%%%%%%%%%%
\begin{proof}
Properties~(i),~(ii) and~(iii) are obvious, whereas~(iv) follows from the fact
that if $x\in\cP$ then $x+\gamma\one\in\cP$ for all $\gamma\in\Z_n$.
Let us now turn to~(v). If $r\in\cP$, then the entries of~$r$ attain $n-1$ of
the~$n$ different elements in~$\Z_n$.
Using~(ii) we may assume that $r=(0,1,\ldots,\alpha-1,\alpha+1,\ldots,n-1)$ for
some $\alpha\in\Z_n$.
Again by~(ii) we have
$r\in\cS\Longleftrightarrow r':=(\alpha+1.\ldots,n-1,0,1,\ldots,\alpha-1)\in\cS$.
By~(iv) this in turn is equivalent to $s:=r'-\alpha\one=(1,2,\ldots,n-1)\in\cS$.
Hence it suffices to show that $s\in\cS$.
For~$n$ being odd one has $s+x=y$ where
\[
   x=(2,3\ldots,{\textstyle \frac{n-1}{2}},{\textstyle\frac{n+1}{2}},
          {\textstyle\frac{n+3}{2}},\ldots,n-1,0),\
   y=(3,5,\ldots,n-2,0,2,\ldots,n-3,n-1).
\]
Since~$n$ is odd $x,\,y$ are in $\cP$ which shows that $s\in\cS$.
For~$n$ even one has $s+x=y$ where
\[
  \begin{matrix}
     x=\!\!&(\quad \frac{n}{2},\;\  &\!\!\frac{n}{2}+1,&\!\!\ldots,&\!\!n-2,
      &\!\!n-1,            &\!\!1,            &\!\!\ldots,&\!\!\frac{n}{2}-2,
      &\frac{n}{2}-1\quad\ \;),\\[.5ex]
     y=\!\!&(\frac{n}{2}+1,&\!\!\frac{n}{2}+3,&\!\!\ldots,&\!\!\frac{n}{2}+n-3,
      &\!\!\frac{n}{2}+n-1,&\!\!\frac{n}{2}+2,&\!\!\ldots,&\!\!\frac{n}{2}+n-4,
      &\!\!\frac{n}{2}+n-2)
  \end{matrix}
\]
Again, $x,\,y\in\cP$, showing the desired result.
\end{proof}

%%%%%%%%%%%%%%%%%%%%%%%%%%%%%%%%%
\begin{prop}\label{P-twodiff}
If $r=(r_1,\ldots,r_{n-1})\in\Znn$ has at most two different entries,
that is, $|\{r_1,\ldots,r_{n-1}\}|\leq2$, then $r\in\cS$.
\end{prop}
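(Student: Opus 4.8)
The plan is to reduce $r$ to a normal form using Proposition~\ref{P-S} and then to treat that normal form by an explicit combinatorial construction. First, if $|\{r_1,\ldots,r_{n-1}\}|\leq1$ then $r=\gamma\one$ for some $\gamma\in\Z_n$ and we are done by Proposition~\ref{P-S}(i). So assume $r$ has exactly two distinct entries $u\neq v$. By Proposition~\ref{P-S}(ii) we may permute the coordinates so that $r=(u,\ldots,u,v,\ldots,v)$ with $a\geq1$ entries equal to $u$ and $b\geq1$ entries equal to $v$, where $a+b=n-1$; and by Proposition~\ref{P-S}(iv) it is enough to show that $r-u\one\in\cS$. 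Writing $w:=v-u\neq0$, the task thus becomes: show that the vector $(0,\ldots,0,w,\ldots,w)$ with $a$ leading $0$'s and $b$ trailing $w$'s lies in $\cS$.

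Next I would reformulate this as a purely combinatorial statement: it suffices to exhibit a partition $\Z_n=A\sqcup B\sqcup\{p\}$ with $|A|=a$, $|B|=b$ and $(A+w)\cap B=\emptyset$. Given such a partition, let $x\in\Z_n^{n-1}$ be any vector listing the elements of $A$ in its first $a$ coordinates and the elements of $B$ in its last $b$ coordinates, and set $y_i:=-x_i$ for $i\leq a$ and $y_i:=w-x_i$ for $i>a$. Then $x+y$ is exactly the required vector, and $x\in\cP$ since its entries run through $A\cup B=\Z_n\setminus\{p\}$. The entries of $y$ form the set $(-A)\cup(w-B)$, and this union is disjoint — so $y\in\cP$ as well — because an equality $-x=w-x'$ with $x\in A$, $x'\in B$ would give $x'=x+w\in(A+w)\cap B=\emptyset$. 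Hence $r=x+y\in\cS$.

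It then remains to build the partition, which is the one place where something must actually be verified. Let $d=\gcd(w,n)$ and $m=n/d$, and recall that $\Z_n$ is the disjoint union of the $d$ cosets $C_k:=k+\langle w\rangle$ ($k=0,\ldots,d-1$), each of size $m$ and each invariant under $z\mapsto z+w$. Write $a=jm+\ell$ with $0\leq\ell<m$; since $1\leq a\leq n-2<n=dm$ we get $0\leq j\leq d-1$, so the coset $C_j$ exists. If $\ell=0$, put $A:=C_0\cup\cdots\cup C_{j-1}$ and $p:=j$; then $A+w=A$, so $(A+w)\cap B=\emptyset$ for $B:=\Z_n\setminus(A\cup\{p\})$. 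If $\ell\geq1$, put
\[
   A:=\big(C_0\cup\cdots\cup C_{j-1}\big)\cup\{\,j,\ j+w,\ \ldots,\ j+(\ell-1)w\,\},
   \qquad p:=j+\ell w,
\]
and again $B:=\Z_n\setminus(A\cup\{p\})$. Here $|A|=jm+\ell=a$, and $p\notin A$ because $\ell<m$; moreover the translate by $w$ of the arc $\{j,\ldots,j+(\ell-1)w\}$ is $\{j+w,\ldots,j+\ell w\}$, which is contained in that arc together with $\{p\}$, while $C_0,\ldots,C_{j-1}$ are $(+w)$-invariant, so $A+w\subseteq A\cup\{p\}$. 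In either case $|A|=a$, $|B|=n-1-a=b$, and $(A+w)\cap B=\emptyset$, which finishes the argument.

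There is no serious obstacle. The only mildly delicate points are the reformulation via the condition $(A+w)\cap B=\emptyset$ — where one has to check that the auxiliary vector $y$ genuinely has pairwise distinct entries — and the observation that an $a$-element set which is ``almost invariant'' under translation by $w$ (a union of full $w$-cosets plus a single arc) exists precisely because $a$ lies in $\{1,\ldots,n-2\}$; everything else is routine.
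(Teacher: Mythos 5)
Your proof is correct and takes essentially the same approach as the paper: after reducing via Proposition~\ref{P-S} to $r=(0,\ldots,0,w,\ldots,w)$, both constructions list $\Z_n$ coset by coset with respect to $\langle w\rangle$, allocate the first $a$ elements to one block, skip one ``pivot'' element, and take the remainder as the second block, so that the shifted sums never collide. Your reformulation in terms of a partition $\Z_n=A\sqcup B\sqcup\{p\}$ with $(A+w)\cap B=\emptyset$ is a clean way of isolating what the paper verifies directly on the indices of its vector $t$, but the underlying combinatorial construction is the same.
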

%%%%%%%%%%%%%%%%%%%%%%%%%%%%%%%%%
\begin{proof}
If $r_1=\ldots=r_{n-1}$, then the assertion is in Proposition~\ref{P-S}(i).
Otherwise, using Proposition~\ref{P-S}(ii) we may assume
$r=(\alpha,\ldots,\alpha,\beta,\ldots,\beta)$ for some $\alpha\not=\beta$.
Using part(iv) of that proposition we may even assume that $\alpha=0$.
Thus, let
\[
     r=(\underbrace{0,\ldots,0}_{n-1-f},\underbrace{\beta,\ldots,\beta}_{f})
     \text{ for some }1\leq f\leq n-2.
\]
In order to prove $r\in\cS$ let~$l$ be the additive order of $\beta$ in~$\Z_n$ and put
\[
   t:=(t_1,\ldots,t_n):=(0,\beta,\ldots,(l-1)\beta,1,1+\beta,\ldots,
   1+(l-1)\beta,\ldots,\beta-1,2\beta-1,\ldots,l\beta-1).
\]
That is, the entries of~$t$ are sorted according to the group
$\langle\beta\rangle$ and its cosets.
Now put
\begin{align*}
   x&=(\ \ t_1,\ldots,\ \ t_{n-f-1},\quad t_{n-f+1}\quad,\ldots,\quad t_n\quad),
   \\
   y&=(-t_1,\ldots,-t_{n-f-1},\beta-t_{n-f+1},\ldots,\beta-t_n).
\end{align*}
Then $r=x+y$ and, obviously, $x\in\cP$.
In order to see that $y\in\cP$ notice first that the first $n-f-1$ entries are
obviously pairwise different, and so are the last~$f$ entries.
Assume now $\beta-t_j=-t_i$ for some $n-f+1\leq j\leq n$ and
$1\leq i\leq n-f-1$.
Then $t_j=\beta+t_i$.
But by construction $\beta+t_i=t_{i+1}$ if $i\not\in l\Z$ and
$\beta+t_i=t_{(m-1)l+1}$ if $i=ml$.
Since $j>n-f>i$ this shows that $\beta+t_i\not=t_j$.
Hence $y\in\cP$ and thus $r\in\cS$.
\end{proof}
Notice that the vector~$t$ above could also be defined according to a different
ordering of the cosets of $\langle\beta\rangle$.
This shows, that there are many ways of writing $r=x+y$ for some $x,\,y\in\cP$.

Unfortunately, we are not aware of any way to generalize the last proof to
vectors $r\in\Znn$ with 3 or more different entries.

%%%%%%%%%%%%%%%%%%%%%%%%%%%%%%%%%%%%%%%%%%%%%
\section{Extension to General Automorphisms --- An Example}
\label{S-general}
\setcounter{equation}{0}
%%%%%%%%%%%%%%%%%%%%%%%%%%%%%%%%%%%%%%%%%%%%%%%%%%%%%%%
So far we have studied $\sigma$-CCC's in $\F[z]^n$ where $n\mid(q-1)$ and
where the automorphism~$\sigma$ induces a cycle of length~$n$ on the primitive
idempotents of~$A$.
In this section we will briefly illustrate how the results can
be utilized for general automorphisms if $n\mid(q-1)$.
For ease of notation let us restrict to the following example.

Let $q=8$ and $n=7$.
Consider the automorphism $\sigma\in\AutF(A)$ defined by
\[
  \sigma(e_1)=e_2,\,\sigma(e_2)=e_3,\,\sigma(e_3)=e_1,\,\sigma(e_4)=e_5,\,
  \sigma(e_5)=e_6,\,\sigma(e_6)=e_7,\,\sigma(e_7)=e_4.
\]
In cycle notation this reads as $(e_1,\,e_2,\,e_3)(e_4,\,e_5,\,e_6,\,e_7)$.
Define $A_1:=\F\times\F\times\F$ and $A_2:=\F\times\F\times\F\times\F$ and denote
the primitive idempotents of $A_1$ (resp.\ $A_2$) simply by $e_1,\,e_2,\,e_3$
(resp.\ $e_4,\,e_5,\,e_6,\,e_7$).
Then it is straightforward to establish the isomorphism
\begin{equation}\label{eq:isodirect}
  \Azs\longmapsto A_1[z;\sigma_1]\times A_2[z;\sigma_2],\quad
  g\longrightarrow \Big(g^{(1)}+g^{(2)}+g^{(3)},\,
                        g^{(4)}+g^{(5)}+g^{(6)}+g^{(7)}\Big),
\end{equation}
where for $i=1,2$ the automorphism~$\sigma_i$ on~$A_i$ is defined by the
cycle $(e_1,\,e_2,\,e_3)$ and $(e_4,\,e_5,\,e_6,\,e_7)$, respectively.
Furthermore, a polynomial in $\Azs$ is (semi-)reduced if and only if each
factor in $A_i[z;\sigma_i]$ is (semi-)reduced.
As a consequence, the investigation of left ideals and direct summands in $\Azs$
amounts to the study of the same type of objects in the rings $A_i[z;\sigma_i]$.
Since for these rings the automorphism induces a cycle of maximal length on the
primitive idempotents this brings us to the situation of the previous sections.
Notice, however, that for the ring $A_1[z;\sigma_1]$ the length~$n_1:=3$ is not
a divisor of $q-1$, and thus, $A_1\not\cong\F[x]/{\ideal{x^3-1}}$.
In Remark~\ref{R-general} we mentioned that one can prove the results of
Section~\ref{S-M} in this case as well with the only exception of
Theorem~\ref{T-ideals}(4) which does not make sense anymore.
Along with the isomorphism\eqnref{eq:isodirect} this is sufficient in order to
construct CCC's for this automorphism as well.
Let us illustrate this idea by an example.

%%%%%%%%%%%%%%%%%%%%%%%%%%%
\begin{exa}\label{E-generalCCC}
Let $\alpha\in\F_8$ be the primitive element satisfying the identity
$\alpha^3+\alpha+1=0$.
For $i=1,\,2$ let $\xi_i:\;A_i[z;\sigma_i]\longrightarrow\cM_i$ be the
isomorphism with the according matrix ring as introduced in Proposition~\ref{prop:xi}.
The matrices
\[
   M_1=\begin{pmatrix}0&0&0\\0&1&\alpha^4\\0&0&0\end{pmatrix}\in\cM_1
   \text{ and }
   M_2=\begin{pmatrix}\alpha^6&1&\alpha&0\\0&0&0&0\\0&0&\alpha^3&1\\0&0&0&0
      \end{pmatrix}\in\cM_2
\]
are obviously basic and semi-reduced, and thus so are the polynomials
$g_1=\xi_1^{-1}(M_1)=e_2+z\alpha^4e_3\in A_1[z;\sigma_1]$ and
$g_2=\xi_2^{-1}(M_2)=\alpha^6 e_4+\alpha^3 e_6+z(e_5+e_7)+z^2\alpha e_6
 \in A_2[z;\sigma_2]$.
Using the isomorphism in\eqnref{eq:isodirect}
we obtain the semi-reduced and basic polynomial
$g=g_1+g_2=e_2+\alpha^6 e_4+\alpha^3 e_6+z(\alpha^4 e_3+e_5+e_7)+z^2\alpha e_6\in\Azs$.
Its support is given by $T_g=\{2,4,6\}$.
As a consequence, $\lideal{g}$ is a direct summand of rank~$3$ of the left
$\F[z]$-module $\Azs$.
Now~\cite[Thm.~7.13(b)]{GS04} (which is also valid for semi-reduced polynomials)
tells us that the matrix
\[
  G:=\begin{pmatrix}
      \p^{-1}(g^{(2)})\\
      \p^{-1}(g^{(4)})\\
      \p^{-1}(g^{(6)})\end{pmatrix}
      \in\F[z]^{3\times7}
\]
is a minimal encoder for the $\sigma$-CCC $\cC:=\p^{-1}(\lideal{g})\subseteq\F[z]^7$.
By construction, the code has Forney indices~$1,\,2,\,1$.
Of course, basicness and minimality of the matrix~$G$ can also be checked
directly once the rows of the matrix have been computed using the mapping~$\p$
from\eqnref{eq:p}.
Using a computer algebra routine one finds that the distance of~$\cC$ is~$12$.
In other words, the code attains the Griesmer bound, see~\cite[Eq.~(1.3)]{GL06}.
\end{exa}
%%%%%%%%%%%%%%%%%%%%%%%%%%%

%%%%%%%%%%%%%%%%%%%%%%%%%%%
\section*{Concluding Remarks}
In this paper we studied a particular class of CCC's.
We showed that the existence of such codes with any given algebraic parameters can be
reduced to solving a certain combinatorial problem.
Under the assumption this problem is solvable for all possible instances this shows
that the class of $\sigma$-CCC's is, in a certain sense, as rich as the class of all
CC's.
We strongly believe that the combinatorial problem is solvable for all instances, but
that has to remain open for future research.
Moreover, the potential of our approach needs to be further exploited with
respect to error-correcting properties.
With the exception of the codes in Example~\ref{Exa:2dimCCC} the
considerations so far do not result in classes of codes having provably large distance.

%%%%%%%%%%%%%%%%%%%%%%%%%%%%%%%%%%%%%%%%%%%%%%%%%%%
\bibliographystyle{abbrv}
\bibliography{literatureAK,literatureLZ}

\begin{thebibliography}{10}

\bibitem{PPS04}
J.~A. {Dom\'{i}nguez P\'{e}rez}, J.~M. {Mu\~{n}oz Porras}, and G.~{Serrano
  Sotelo}.
\newblock Convolutional codes of {G}oppa type.
\newblock {\em Appl.~Algebra Engrg.\ Comm.\ and Comput.}, 15:51--61, 2004.

\bibitem{EGPS07}
S.~Estrada, J.~R. {Garc{\'i}a Rozas}, J.~Peralta, and E.~{S{\'a}nchez
  Garc{\'i}a}.
\newblock Group convolutional codes.
\newblock 2007. Available at http://front.math.ucdavis.edu/ with ID-number
  RA/0702543v1.

\bibitem{Fo70}
G.~D. Forney~Jr.
\newblock Convolutional codes {I}: {A}lgebraic structure.
\newblock {\em IEEE Trans. Inform. Theory}, IT-16:720--738, 1970.
\newblock (see also corrections in {\em IEEE Trans.\ Inf.\ Theory}, vol.~17,
  1971, p.~360).

\bibitem{Fo75}
G.~D. Forney~Jr.
\newblock Minimal bases of rational vector spaces, with applications to
  multivariable linear systems.
\newblock {\em SIAM J. on Contr.}, 13:493--520, 1975.

\bibitem{GL06p}
H.~Gluesing-Luerssen and B.~Langfeld.
\newblock A class of one-dimensional {MDS} convolutional codes.
\newblock {\em J.\ Algebra Appl.}, 5:505--520, 2006.

\bibitem{GL06}
H.~Gluesing-Luerssen and B.~Langfeld.
\newblock On the algebraic parameters of convolutional codes with cyclic
  structure.
\newblock {\em J.\ Algebra Appl.}, 5:53--76, 2006.

\bibitem{GRS06}
H.~Gluesing-Luerssen, J.~Rosenthal, and R.~Smarandache.
\newblock Strongly {MDS} convolutional codes.
\newblock {\em IEEE Trans. Inform. Theory}, 52:584--598, 2006.

\bibitem{GS04}
H.~Gluesing-Luerssen and W.~Schmale.
\newblock On cyclic convolutional codes.
\newblock {\em Acta Applicandae Mathematicae}, 82:183--237, 2004.

\bibitem{GS06p}
H.~Gluesing-Luerssen and W.~Schmale.
\newblock On doubly-cyclic convolutional codes.
\newblock {\em Appl.\ Algebra Engrg.\ Comm.\ Comput.}, 17:151--170, 2006.

\bibitem{HRS05}
R.~Hutchinson, J.~Rosenthal, and R.~Smarandache.
\newblock Convolutional codes with maximum distance profile.
\newblock {\em Syst. Contr. Lett.}, 54:53--63, 2005.

\bibitem{Jat71}
A.~V. Jategaonkar.
\newblock Skew polynomial rings over semisimple rings.
\newblock {\em J. Algebra}, 19:315--328, 1971.

\bibitem{JoZi99}
R.~Johannesson and K.~S. Zigangirov.
\newblock {\em Fundamentals of Convolutional Coding}.
\newblock IEEE Press, New York, 1999.

\bibitem{Ju75}
J.~Justesen.
\newblock Algebraic construction of rate $1/\nu$ convolutional codes.
\newblock {\em IEEE Trans. Inform. Theory}, IT-21:577--580, 1975.

\bibitem{MCJ73}
J.~L. Massey, D.~J. Costello, and J.~Justesen.
\newblock Polynomial weights and code constructions.
\newblock {\em IEEE Trans. Inform. Theory}, IT-19:101--110, 1973.

\bibitem{MaSa67}
J.~L. Massey and M.~K. Sain.
\newblock Codes, automata, and continuous systems: {E}xplicit interconnections.
\newblock {\em IEEE Trans. Aut. Contr.}, AC-12:644--650, 1967.

\bibitem{McE98}
R.~J. Mc{E}liece.
\newblock The algebraic theory of convolutional codes.
\newblock In V.~Pless and W.~Huffman, editors, {\em Handbook of Coding Theory,
  Vol.~1}, pages 1065--1138. Elsevier, Amsterdam, 1998.

\bibitem{MPCS06}
J.~M. {Mu\~{n}oz Porras}, J.~A. {Dom\'{i}nguez P\'{e}rez}, J.~I.~I. Curto, and
  G.~{Serrano Sotelo}.
\newblock Convolutional {G}oppa codes.
\newblock {\em IEEE Trans. Inform. Theory}, 52:340--344, 2006.

\bibitem{Pi76}
P.~Piret.
\newblock Structure and constructions of cyclic convolutional codes.
\newblock {\em IEEE Trans. Inform. Theory}, IT-22:147--155, 1976.

\bibitem{Pi88b}
P.~Piret.
\newblock {\em Convolutional Codes; {A}n Algebraic Approach}.
\newblock MIT Press, Cambridge, MA, 1988.

\bibitem{Ro79}
C.~Roos.
\newblock On the structure of convolutional and cyclic convolutional codes.
\newblock {\em IEEE Trans. Inform. Theory}, IT-25:676--683, 1979.

\bibitem{RSY96}
J.~Rosenthal, J.~M. Schumacher, and E.~V. York.
\newblock On behaviors and convolutional codes.
\newblock {\em IEEE Trans. Inform. Theory}, IT-42:1881--1891, 1996.

\bibitem{RoSm99}
J.~Rosenthal and R.~Smarandache.
\newblock Maximum distance separable convolutional codes.
\newblock {\em Appl.\ Algebra Engrg.\ Comm.\ Comput.}, 10:15--32, 1999.

\bibitem{RoYo99}
J.~Rosenthal and E.~V. York.
\newblock {BCH} convolutional codes.
\newblock {\em IEEE Trans. Inform. Theory}, IT-45:1833--1844, 1999.

\bibitem{SGR01}
R.~Smarandache, H.~Gluesing-Luerssen, and J.~Rosenthal.
\newblock Constructions of {MDS}-convolutional codes.
\newblock {\em IEEE Trans. Inform. Theory}, IT-47:2045--2049, 2001.

\end{thebibliography}
%%%%%%%%%%%%%%%%%%%%%%%%%%%%%%%%%%%%%%%%%%%%%%%%%%%
\end{document}